\newcommand{\ClassSingleEntry}{\mathsf{SE}}
\newcommand{\ClassSFS}[1]{\mathsf{SFS}(#1)}
\newcommand{\ClassKMarkov}{\mathsf{HMC}(k)}
\newcommand{\CondMutInf}[3]{I(#1;#2|#3)}
\newcommand{\Entropy}[1]{H(#1)}
\newcommand{\EntropyRate}[1]{\overline{H}(#1)}
\newcommand{\InstantsHiddenTraversal}{\mathcal{I}}
\newcommand{\RealisationHiddenTraversal}{I}
\newcommand{\InstantsNonOverLappingTraversal}[3]{\mathcal{N}_{#1}^{\,#2}(#3)}
\newcommand{\InstantsOccupation}[3]{\mathcal{O}_{#1}^{\,#2}(#3)}
\newcommand{\InstantsTraversal}[3]{\mathcal{T}_{#1}^{\,#2}(#3)}
\newcommand{\Kappa}{\mathcal{K}}
\newcommand{\LimN}{\lim_{n\to\infty}}
\newcommand{\Lumping}{(\TransitionMatrix,\LumpingFunction)}
\newcommand{\LumpingFunction}{g}
\newcommand{\LumpingPreimage}{\LumpingFunction^{-1}}
\newcommand{\MeasureX}{\mu}
\newcommand{\OneOverN}{\frac{1}{n}}
\newcommand{\Sequence}[1]{\mathbf{#1}}
\newcommand{\SpaceW}{\mathcal{W}}
\newcommand{\SpaceX}{\mathcal{X}}
\newcommand{\SpaceY}{\mathcal{Y}}
\newcommand{\SpaceZ}{\mathcal{Z}}
\newcommand{\TransitionMatrix}{P}
\newcommand{\TransitionGraph}{G}
\newcommand{\RenderName}[1]{#1} % should be an idempotent macro
\newcommand{\MakeName}[2]{
	\newcommand{#1}{\RenderName{#2}}
}
\MakeName{\NameCoverThomas}{Cover \& Thomas}
\MakeName{\NameCarlyle}{Carlyle}
\MakeName{\NameChebysheff}{Chebysheff}
\MakeName{\NameGurvitsLedoux}{Gurvits \& Ledoux}
\MakeName{\NameIverson}{Iverson}
\MakeName{\NameMarkov}{Markov}
\MakeName{\NameShannon}{Shannon}
\newtheorem{Thm}{Theorem}
\newtheorem{Lem}[Thm]{Lemma}
\newtheorem{Prop}[Thm]{Proposition}
\newtheorem{Cor}[Thm]{Corollary}
\newtheorem{Def}[Thm]{Definition}
\newtheorem{Exam}[Thm]{Example}
\newcommand{\Then}{\,\Rightarrow\,}
\newcommand{\Iff}{\,\Leftrightarrow\,}
\newcommand{\ForAll}{\forall\,}
\newcommand{\Exists}{\exists\,}
\newcommand{\ExistsUnique}{\exists!\,}
\newcommand{\FirstAlign}{\quad\,\,\,\!}
\newcommand{\NatNum}{\mathbb{N}} % 1,2,3,...
\newcommand{\NatNumZero}{\NatNum_0} % 0,1,2,3,...
\newcommand{\IntNum}{\mathbb{Z}}
\newcommand{\RealNum}{\mathbb{R}}
\newcommand{\Set}[1]{{\{#1\}}}
\newcommand{\BigSet}[1]{{\left\{#1\right\}}}
\newcommand{\IntegerSet}[1]{{[#1]}}
\newcommand{\Cardinality}[1]{{|#1|}}
\newcommand{\Iverson}[1]{{[#1]}}
\DeclareMathOperator{\Exponential}{exp}
\DeclareMathOperator{\ld}{ld}
\DeclareMathOperator{\Support}{supp}
\newcommand{\DistributedAs}{\sim}
\newcommand{\Proba}{\mathbb{P}}
\newcommand{\Expect}{\mathbb{E}}
\tikzstyle{vertex}  = [circle, draw, minimum size = 17pt, inner sep = 0pt]
\tikzstyle{vertexlong}  = [ellipse, draw, inner sep = 2pt]
\tikzstyle{marked} = [very thick, color = red]
\newcommand{\selfedge}[2]{\draw[out=#2-30, in=#2+30, looseness=7,->] (#1) to (#1);}
\newcommand{\drawnode}[3]{\node[vertex] (#1) at (#2) {#3};}
\newcommand{\Abstract}{
\begin{abstract}
A lumping of a Markov chain is a coordinate-wise projection of the chain. We characterise the entropy rate preservation of a lumping of an aperiodic and irreducible Markov chain on a finite state space by the random growth rate of the cardinality of the realisable preimage of a finite-length trajectory of the lumped chain and by the information needed to reconstruct original trajectories from their lumped images. Both are purely combinatorial criteria, depending only on the transition graph of the Markov chain and the lumping function. A lumping is strongly k-lumpable, iff the lumped process is a k-th order Markov chain for each starting distribution of the original Markov chain. We characterise strong k-lumpability via tightness of stationary entropic bounds. In the sparse setting, we give sufficient conditions on the lumping to both preserve the entropy rate and be strongly k-lumpable.
\end{abstract}
}
\newcommand{\TitleFull}{Lumpings of Markov chains, entropy rate preservation, and higher-order lumpability}
\newcommand{\TitleShort}{Lumped\ Markov\ chains\ and\ entropy}
\newcommand{\TitlePDF}{Lumped\ Markov\ chains\ and\ entropy}
\newcommand{\AuthorsFull}{Bernhard C. Geiger\footnote{Email: geiger@ieee.org}
                       \& Christoph Temmel\footnote{Email: math@temmel.me}}
\newcommand{\AuthorsShort}{Geiger \& Temmel}
\newcommand{\Keywords}{Keywords:
lumping,
entropy rate loss,
functional hidden Markov model,
strong lumpability,
higher order Markov chain}
\newcommand{\MSC}{MSC 2010: 60J10 (60G17 94A17 60G10 65C40)}
\newcommand{\Head}{
 \maketitle
 \Abstract{}
 \Keywords{}\\
 
 \noindent\MSC{}\\
 
 \noindent This is an extended version of~\cite{Geiger_Temmel__LumpingsOfMarkovChainsEntropyRatePreservationAndHigherOrderLumpability__JAP_2014}.
 \tableofcontents
 \listoffigures
}
\title{\TitleFull{}}
\author{\AuthorsFull{}}
\date{}
\begin{document}

\Head{}

%===============================================================================
\section{Introduction}
\label{sec:introduction}
%===============================================================================

The \emph{entropy rate} of a stationary stochastic process is the average number of bits per time step needed to encode the process. A \emph{lumping} of a (stationary) \NameMarkov{} chain is a coordinate-wise projection of the chain by a \emph{lumping function}. The resulting (stationary) \emph{lumped stochastic process} is also called a \emph{functional hidden \NameMarkov{} model}~\cite{Ephraim_Merhav__HiddenMarkovProcesses__TIT_2002}. One can transform every hidden \NameMarkov{} model on finite state and observation spaces into this setting~\cite[Section~IV.E]{Ephraim_Merhav__HiddenMarkovProcesses__TIT_2002}. In general, the lumped process loses the \NameMarkov{} property~\cite{Gurvits_Ledoux__MarkovPropertyForAFunctionOfAMarkovChain_ALinearAlgebraApproach} and has a lower entropy rate than the original \NameMarkov{} chain, due to the aggregation of states~\cite{Pinsker__InformationAndInformationStabilityOfRandomVariablesAndProcesses__Holden_1964,Watanabe_Abraham__LossAndRecoveryOfInformationByCoarseObservationOfStochasticChain__IC_19}.\\

Our first result characterises the structure of entropy rate preserving lumpings of stationary \NameMarkov{} chains over a finite state space. The \emph{realisable preimage} is the set of finite paths in the transition graph associated with the \NameMarkov{} chain having the same image. The key property is the behaviour of the growth of this random set. It is also described by the ability of two such paths, once split, to join again. We document a strong dichotomy between the preservation and loss case: a uniform finite bound on the lost entropy and almost-surely finite growth in the former, and a linearly growing entropy loss and an almost-surely exponential growth in the latter.\\

In particular, a positive transition matrix always implies an entropy rate loss for a non-identity lumping. We state a sufficient condition on a lumping of a \NameMarkov{} chain with non-positive transition matrix to preserve the entropy rate. \NameCarlyle{}'s representation~\cite{Carlyle__IdentificationOfStateCalculableFunctionsOfFiniteMarkovChains__AMS_1967} of a finite-state stationary stochastic process as a lumping of a \NameMarkov{} chain on an at most countable state space fulfils this condition.\\

Lumpings resulting in higher-order \NameMarkov{} chains are highly desirable from a simulation point of view. Our second result characterises such lumpings by equality of natural entropic bounds with the entropy rate of the lumped process in the stationary setting. A first equality holding only for entropies depending on the lumped process is equivalent to \emph{weak lumpability}, i.e. the lumped process is a higher-order \NameMarkov{} chain in the stationary setting. A second equality involving entropies also using the underlying \NameMarkov{} chain in the stationary case is equivalent to \emph{strong lumpability}, i.e. the lumped process is a higher-order \NameMarkov{} chain, \emph{for every} initial distribution. Our characterisation is an information theoretic complement to \NameGurvitsLedoux{}'s~\cite{Gurvits_Ledoux__MarkovPropertyForAFunctionOfAMarkovChain_ALinearAlgebraApproach} linear algebraic approach to characterise lumpability.\\

We state a sufficient condition on the transition graph and the lumping function to preserve the entropy rate and be strongly $k$-lumpable. The condition is fulfilled on non-trivial lower-dimensional subspaces of the space of transition matrices. This complements \NameGurvitsLedoux{}'s~\cite{Gurvits_Ledoux__MarkovPropertyForAFunctionOfAMarkovChain_ALinearAlgebraApproach} result that lumpings having higher-order \NameMarkov{} behaviour are nowhere dense.

%===============================================================================
\section{Main results}
\label{sec:main}
%===============================================================================
\subsection{Preliminaries}
\label{sec:preliminaries}
%===============================================================================
We let $\NatNum:=\Set{1,2,\dotsc}$ and $\NatNumZero:=\Set{0,1,2,\dotsc}$. Let $[n,m]\,:=\Set{k\in\NatNumZero: n\le k \le m}$ and abbreviate $\IntegerSet{n}:=\IntegerSet{1,n}$. A vector $\Sequence{x}$ subscripted by a set $A$ is the subvector of elements indexed by this set: $\Sequence{x}_A:=(x_n)_{n\in A}$.\\

We recall information-theoretic basics from \NameCoverThomas{}~\cite[chapters 2 \& 4]{Cover_Thomas__ElementsOfInformationTheory_Ed2__Wiley_2006}. Let $\ld$ denote the binary logarithm. By continuous extension, we assume $0\ld 0=0$. The \emph{\NameShannon{} entropy} of a rv $Z$ taking values in a finite set $\SpaceZ$ is
\begin{subequations}\label{eq:entropy}
\begin{equation}\label{eq:entropy:base}
 \Entropy{Z}:=-\sum_{z\in\SpaceZ} \Proba(Z=z)\ld\Proba(Z=z)\,.
\end{equation}
The \emph{conditional entropy} of $Z$ given $W$ is defined by
\begin{equation}\label{eq:entropy:conditional}
 \Entropy{Z|W}:=\sum_{w\in\SpaceW}\Proba(W=w)\Entropy{Z|W=w}\,.
\end{equation}
Successive conditioning reduces entropy:
\begin{equation}\label{eq:entropy:reduction}
 \Entropy{Z}\ge\Entropy{Z|W_1}\ge\Entropy{Z|W_1,W_2}\,.
\end{equation}
For a stationary stochastic process $Z:=(Z_n)_{n\in\NatNumZero}$ on a finite state space $\mathcal{Z}$, the \emph{entropy rate} is
\begin{equation}\label{eq:entropy:rate}
 \EntropyRate{Z}
 :=\LimN\OneOverN\Entropy{Z_{\IntegerSet{n}}}
 =\LimN\Entropy{Z_n|Z_{[n-1]}}\,.
\end{equation}
The left limit in~\eqref{eq:entropy:rate} is the limit of the normalised \emph{block entropy} $\Entropy{Z_{\IntegerSet{n}}}$. By stationarity and~\eqref{eq:entropy:reduction}, the $\Entropy{Z_n|Z_{[n-1]}}$ in the right limit of~\eqref{eq:entropy:rate} are monotonically decreasing.
\end{subequations}
%===============================================================================
\subsection{Setting}
\label{sec:setting}
%===============================================================================

Let $X:=(X_n)_{n\in\NatNumZero}$ be an irreducible, aperiodic, time-homogeneous \NameMarkov{} chain on the finite state space $\SpaceX$. It has transition matrix $\TransitionMatrix$ with invariant probability measure $\MeasureX$. We assume that $X$ is stationary, that is, $X_0\DistributedAs\MeasureX$. The \emph{lumping function} $\LumpingFunction$ is, $\SpaceX\to\SpaceY$ and surjective. We assume $\LumpingFunction$ to be non-trivial, that is, $2\le\Cardinality{\SpaceY}<\Cardinality{\SpaceX}$. Without loss of generality, we extend $\LumpingFunction$ to $\SpaceX^n\to\SpaceY^n$ coordinate-wise, for arbitrary $n\in\NatNum$. The \emph{lumped process} of $X$ under $\LumpingFunction$ is the stationary stochastic process $Y:=(Y_n)_{n\in\NatNumZero}$ defined by $Y_n:=\LumpingFunction(X_n)$. We refer to this setup as the \emph{lumping} $\Lumping$.\\

The lumping induces a \emph{conditional entropy rate}~\cite{Geiger_Kubin__SomeResultsOnTheInformationLossInDynamicalSystems__ISWSC_2011,Watanabe_Abraham__LossAndRecoveryOfInformationByCoarseObservationOfStochasticChain__IC_19}, which characterises the average information loss per time step:
\begin{equation}\label{eq:conditionalEntropyRate}
 \EntropyRate{X|Y}
 :=\LimN\OneOverN\Entropy{X_{\IntegerSet{n}}|Y_{\IntegerSet{n}}}
 =\EntropyRate{X}-\EntropyRate{Y}\,.
\end{equation}
Our main question is whether $\EntropyRate{X|Y}$ is positive or zero, speaking of \emph{entropy rate loss} or \emph{entropy rate preservation} respectively. Entropy rate preservation does not imply that we can reconstruct the original process from the lumped process without entropy loss. See figure~\ref{fig:SFSnotStrong} (page~\pageref{fig:SFSnotStrong}) for an example.\\

The \emph{transition graph} $\TransitionGraph$ of the Markov chain $X$ is the directed graph with vertex set $\SpaceX$ and an edge $(x,x')$, iff $\Proba(X_1=x'|X_0=x)>0$. A length $n$ trajectory $\Sequence{x}\in\SpaceX^{n}$ is \emph{realisable}, iff $\Proba(X_{\IntegerSet{n}}=\Sequence{x})>0$, equivalent to being a directed path in $\TransitionGraph$. A key structural property of $\TransitionGraph$ is its \emph{split-merge index} with respect to $\LumpingFunction$:
\begin{equation}\label{eq:kappa}
 \Kappa:=
 \inf\BigSet{
  n\in\NatNum\,
  \Bigg|
  \begin{gathered}
  \Exists\check{x},\hat{x}\in\SpaceX,
  \Sequence{y}\in \SpaceY^n:
  \Exists\Sequence{x'},\Sequence{x''}\in \LumpingPreimage(\Sequence{y}),\Sequence{x''}\neq\Sequence{x'}:
  \\
  \text{ st both }
  \begin{cases}
   \Proba(X_0=\check{x},X_{\IntegerSet{n}}=\Sequence{x'},X_{n+1}=\hat{x})>0
   \\
   \Proba(X_0=\check{x},X_{\IntegerSet{n}}=\Sequence{x''},X_{n+1}=\hat{x})>0
  \end{cases}
  \end{gathered}
 }\,.
\end{equation}
The split-merge index is the shortest length of the differing part of a pair of finite, different and realisable trajectories with common start and end points, and same lumped image, if such a pair exists. Otherwise, let $\Kappa=\infty$. If $\Kappa<\infty$, every pair of sequences $\Sequence{x'},\Sequence{x''}\in\SpaceX^\Kappa$ fulfilling~\eqref{eq:kappa} is not only different, but differs in every coordinate by virtue of the infimum in~\eqref{eq:kappa}. Figure~\ref{fig:KappaFig} (page~\pageref{fig:KappaFig}) gives an example of $\Kappa\le 3$.\\

\begin{figure}
\begin{center}
% \begin{tikzpicture}[scale=2,>=stealth,rounded corners]
%   \drawnode{a}{0,0.5}{$\check{x}$}
%   \drawnode{b1}{1,1}{$x'_1$}  \drawnode{b2}{1,0}{$x''_1$}
%   \drawnode{c1}{2,0.75}{$x'_2$}  \drawnode{c2}{2,0.25}{$x''_2$}
%   \drawnode{d1}{3,1}{$x'_3$}  \drawnode{d2}{3,0}{$x''_3$}  \drawnode{d3}{3,1.5}{$x$}
%   \drawnode{f}{4,0.5}{$\hat{x}$} \drawnode{dd}{4,1}{$x$}
% 
%   \draw[->] (a) -- (b1);  \draw[->] (a) -- (b2);
%   \draw[->] (b1) -- (c1);  \draw[->] (b2) -- (c2);
%   \draw[->] (c1) -- (d1);  \draw[->] (c2) -- (d2);
%   \draw[->] (c1) -- (d3);
%   \draw[->] (d1) -- (f);  
%   \draw[->] (d2) -- (f);
%   \draw[->] (4.5,1.5) -- (dd);  
%   \draw[->] (dd) -- (4.5,1); 
%   \draw[->] (f) -- (4,0); 
%   \draw[->] (-0.5,0.5) -- (a);
%   \draw[->] (d3) -- (3.75,1.75);
%   \draw[->] (3.5,0.5) -- (dd);
% 
%   \draw[draw=red] (0.75,-0.25) rectangle (1.25,1.25);
%   \draw[draw=red] (2.75,-0.25) rectangle (3.25,1.75);
%   \draw[draw=red] (1.75,0) rectangle (2.25,1.0);
%   \draw[draw=red] (-0.25,0.25) rectangle (0.25,0.75);
%   \draw[draw=red] (3.75,0.25) rectangle (4.25,1.25);
%   \draw (1,0.5) node {\textcolor{red}{$y_1$}};
%   \draw (2,0.5) node {\textcolor{red}{$y_2$}};
%   \draw (3,0.5) node {\textcolor{red}{$y_3$}};
%   \draw (-0.5,0.75) node {\textcolor{red}{$g(\check{x})$}};
%   \draw (4.5,0.75) node {\textcolor{red}{$g(\hat{x})$}};
% \end{tikzpicture}

\begin{tikzpicture}[scale=2,>=stealth,rounded corners]
  \drawnode{a}{0,0.5}{$\check{x}$}
  \drawnode{b1}{1,0.75}{$x'_1$}  \drawnode{b2}{1,0.25}{$x''_1$}
  \drawnode{c1}{2,0.75}{$x'_2$}  \drawnode{c2}{2,0.25}{$x''_2$}
  \drawnode{d1}{3,0.75}{$x'_3$}  \drawnode{d2}{3,0.25}{$x''_3$} 
  \drawnode{f}{4,0.5}{$\hat{x}$}

  \draw[->] (a) -- (b1);  \draw[->] (a) -- (b2);
  \draw[->] (b1) -- (c1);  \draw[->] (b2) -- (c2);
  \draw[->] (c1) -- (d1);  \draw[->] (c2) -- (d2);
  \draw[->] (d1) -- (f);  
  \draw[->] (d2) -- (f);
  \draw[->] (f) -- (4.5,0.5); 
  \draw[->] (-0.5,0.5) -- (a);

  \draw[draw=red] (-0.25,0.25) rectangle (0.25,0.75);
  \draw[draw=red] (0.75,-0.0) rectangle (1.25,1.0);
  \draw[draw=red] (2.75,-0.0) rectangle (3.25,1.0);
  \draw[draw=red] (1.75,-0.0) rectangle (2.25,1.0);
  \draw[draw=red] (3.75,0.25)  rectangle (4.25,0.75);
  \draw (1,-0.25) node {\textcolor{red}{$y_1$}};
  \draw (2,-0.25) node {\textcolor{red}{$y_2$}};
  \draw (3,-0.25) node {\textcolor{red}{$y_3$}};
  \draw (0,-0.0) node {\textcolor{red}{$g(\check{x})$}};
  \draw (4,-0.0) node {\textcolor{red}{$g(\hat{x})$}};
\end{tikzpicture}
\end{center}
\caption[A split-merge situation]{(Colour online) A section of trajectory space, with time running left-to-right. The two realisable length $5$ trajectories $(\check{x},x'_1,x'_2,x'_3,\hat{x})$ and $(\check{x},x''_1,x''_2,x''_3,\hat{x})$ have the same lumped image $(\LumpingFunction(\check{x}),y_1,y_2,y_3,\LumpingFunction(\hat{x}))$. Thus $\Kappa\le 3$. The lumped states $\Set{\LumpingFunction(\check{x}),y_1,y_2,y_3,\LumpingFunction(\hat{x})}$ need not be distinct; e.g., it might be that $y_1=y_2=g(\hat{x})$. If $\Kappa=3$, then the minimality of $\Kappa$ implies that $x'_i\not=x''_i$, for $i\in\IntegerSet{3}$.}
\label{fig:KappaFig}
\end{figure}

%===============================================================================
\subsection{Characterisation of entropy rate loss}
\label{sec:entropycharacterisation}
%===============================================================================
This section presents the characterisation of the entropy rate loss of a lumping in terms of $\Kappa$ and the growth rate of the cardinality of the realisable preimage. The \emph{realisable preimage} of a lumped trajectory $\Sequence{y}\in\SpaceY^n$ are the realisable trajectories in its preimage:
\begin{equation}\label{eq:realisablePreimage}
 R(\Sequence{y}):=\Set{
  \Sequence{x}\in\LumpingPreimage(\Sequence{y}):
  \Sequence{x}\text{ is realisable}
 }\,.
\end{equation}
The \emph{preimage count of length $n$} of the lumping $\Lumping$ is the cardinality of the \emph{realisable preimage} of a random lumped trajectory of length $n$:
\begin{equation}\label{eq:preimageCount}
 T_n:=\Cardinality{R(Y_{\IntegerSet{n}})}
  = \sum_{\Sequence{x}\in \LumpingPreimage(Y_{\IntegerSet{n}})}
  \Iverson{\Proba(X_{\IntegerSet{n}}=\Sequence{x})>0}\,,
\end{equation}
where the right side sums over \NameIverson{} brackets. Our first main result characterises entropy rate preservation:
\begin{Thm}\label{thm:entropycharacterisation}
\begin{subequations}\label{eq:entropycharacterisation}
\begin{align}
 \label{eq:entropycharacterisation:loss}
 \EntropyRate{X|Y}>0
 \,\Iff\,
 \Kappa<\infty
 &\,\Iff\,
 \Exists C>1:
 &&\Proba\left(\liminf_{n\to\infty}\sqrt[n]{T_n}\ge C\right)=1
 \,,
 \\
 \label{eq:entropycharacterisation:preservation}
 \EntropyRate{X|Y}=0
 \,\Iff\,
 \Kappa=\infty
 &\,\Iff\,
 \Exists C<\infty:
 && \Proba\left(\sup_{n\to\infty} T_n \le C\right)=1
 \,.
\end{align}
\end{subequations}
\end{Thm}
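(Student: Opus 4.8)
The plan is to reduce the six equivalent statements to four one-directional implications and then close the loop by a parity argument. Observe that exactly one of $\Kappa<\infty$ and $\Kappa=\infty$ holds, exactly one of $\EntropyRate{X|Y}>0$ and $\EntropyRate{X|Y}=0$ holds, and the two growth conditions are mutually exclusive: since the realised trajectory is itself a realisable preimage we always have $T_n\ge1$, so $\sup_n T_n\le C<\infty$ forces $\sqrt[n]{T_n}\to1$ and rules out $\liminf_n\sqrt[n]{T_n}\ge C'>1$. Hence it suffices to prove: (i) $\Kappa=\infty\Then\sup_n T_n\le\Cardinality{\SpaceX}^2$; (ii) $\Kappa=\infty\Then\EntropyRate{X|Y}=0$; (iii) $\Kappa<\infty\Then$ almost sure exponential growth of $T_n$; and (iv) $\Kappa<\infty\Then\EntropyRate{X|Y}>0$.

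For the preservation side I would argue combinatorially and then information-theoretically. Fix $\Sequence{y}=Y_{\IntegerSet{n}}$ and take two distinct preimages $\Sequence{x'},\Sequence{x''}\in R(\Sequence{y})$ with $x'_1=x''_1$ and $x'_n=x''_n$. Letting $a$ and $b$ be the first and last coordinates at which they differ, the states $x'_{a-1}=x''_{a-1}$ and $x'_{b+1}=x''_{b+1}$ frame two distinct realisable middles of equal lumped image, a split--merge situation of length $b-a+1$, contradicting $\Kappa=\infty$. Thus $\Sequence{x}\mapsto(x_1,x_n)$ is injective on $R(\Sequence{y})$, giving (i) with the deterministic bound $\Cardinality{\LumpingPreimage(Y_1)}\cdot\Cardinality{\LumpingPreimage(Y_n)}\le\Cardinality{\SpaceX}^2$. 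For (ii), the conditional law of $X_{\IntegerSet{n}}$ given $Y_{\IntegerSet{n}}=\Sequence{y}$ is supported on $R(\Sequence{y})$, so $\Entropy{X_{\IntegerSet{n}}|Y_{\IntegerSet{n}}}\le\Expect[\ld T_n]\le2\ld\Cardinality{\SpaceX}$ uniformly in $n$, whence $\EntropyRate{X|Y}=\LimN\OneOverN\Entropy{X_{\IntegerSet{n}}|Y_{\IntegerSet{n}}}=0$.

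For the loss side I would build a single gadget and use it twice. Since $\Kappa<\infty$, fix $\check{x},\hat{x}$, a word $\Sequence{y}^{*}\in\SpaceY^{\Kappa}$, and distinct realisable $\Sequence{x'},\Sequence{x''}\in\LumpingPreimage(\Sequence{y}^{*})$ with $(\check{x},\Sequence{x'},\hat{x})$ and $(\check{x},\Sequence{x''},\hat{x})$ realisable. Tile time into blocks of length $L:=\Kappa+2$ and call block $j$ a \emph{gadget block} when $X_{jL}=\check{x}$, $X_{(j+1)L-1}=\hat{x}$ and the intervening lumped word is $\Sequence{y}^{*}$; this cylinder has a fixed positive stationary probability $p$. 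For (iii): in each of the $m$ gadget blocks among the first $\Floor{n/L}$ blocks one may independently keep the observed middle or swap in the other of $\Sequence{x'},\Sequence{x''}$, producing $2^{m}$ distinct realisable trajectories with the same lumped image $Y_{\IntegerSet{n}}$, so $T_n\ge2^{m}$; as the gadget indicators form a factor of the ergodic block process, Birkhoff's theorem yields $m/n\to p/L>0$ almost surely and hence $\liminf_n\sqrt[n]{T_n}\ge2^{p/L}>1$. For (iv): conditioning on the boundary states $\mathcal{B}$ of all blocks makes the block middles conditionally independent \NameMarkov{} bridges, and since each middle's lumped image is a function of that middle alone, further conditioning on $Y_{\IntegerSet{n}}$ keeps the factorisation; by \eqref{eq:entropy:reduction}, $\Entropy{X_{\IntegerSet{n}}|Y_{\IntegerSet{n}}}\ge\Entropy{X_{\IntegerSet{n}}|Y_{\IntegerSet{n}},\mathcal{B}}$ splits into a sum over blocks, each gadget block contributing at least the fixed constant $h^{*}>0$, the entropy of the bridge law on $\Set{\Sequence{x'},\Sequence{x''},\dotsc}$ fixed by $\TransitionMatrix$, so $\Entropy{X_{\IntegerSet{n}}|Y_{\IntegerSet{n}}}\ge h^{*}p\Floor{n/L}$ and $\EntropyRate{X|Y}\ge h^{*}p/L>0$.

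The main obstacle is (iv). One must make precise that conditioning on the boundary $\sigma$-field $\mathcal{B}$ decouples the chain into independent \NameMarkov{} bridges and that additionally conditioning on $Y_{\IntegerSet{n}}$ does not recouple distinct blocks, so that the conditional entropy is genuinely additive over gadget blocks; and one must verify that the per-gadget contribution $h^{*}$ is bounded below by a constant independent of $n$ and of the surrounding trajectory. Implications (i)--(iii) are comparatively routine, resting only on the split--merge dichotomy, the support bound $\Entropy{Z}\le\ld\Cardinality{\Support(Z)}$, and Birkhoff's ergodic theorem for the finite, irreducible, aperiodic chain.
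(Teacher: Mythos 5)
Your proposal is correct and follows the paper's overall architecture: the same reduction to four one-directional implications closed by mutual exclusivity, the same split--merge gadget, Markov decoupling of block interiors, and an ergodic theorem for the almost-sure growth. Where you genuinely differ is in the implementation of the loss case. The paper tracks \emph{greedily selected, random} non-overlapping occurrences of the gadget (the sets $\InstantsNonOverLappingTraversal{X}{\Sequence{s}}{n}$ and $\InstantsHiddenTraversal$), which forces it to build the $k$-transition-process machinery of section~\ref{sec:nonOverLappingTraversalInstants} and to channel the entropy bound through the tail probability $\Proba(\Cardinality{\InstantsHiddenTraversal}\ge\alpha n)$, so the ergodic theorem enters the entropy estimate as well (see~\eqref{eq:loss:quantification}). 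Your deterministic tiling into blocks of length $\Kappa+2$ buys two simplifications: the block process is itself an irreducible aperiodic Markov chain, so Birkhoff's theorem applies directly in your step (iii); and in step (iv) the expected number of gadget blocks is $p\Floor{n/(\Kappa+2)}$ by stationarity alone, so the entropy lower bound needs only linearity of expectation rather than a tail bound. The technical debt is the same in both treatments, and you name it correctly: one must prove that conditioning on the block boundaries \emph{and} on $Y_{\IntegerSet{n}}$ leaves the block interiors conditionally independent --- this is the paper's~\eqref{eq:markov:productFactorisation} together with~\eqref{eq:markov:remainsMarkovUnderCartesianConditioning}, and it works precisely because the gadget event is a Cartesian product and each middle's lumped image is a function of that middle alone, as you observe. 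Your per-gadget constant $h^{*}$ is the paper's loss $L$ of~\eqref{eq:finite:minimumLoss}, positive by~\eqref{eq:finite:setting}. On the preservation side your endpoint-injectivity argument is the paper's unique-preimage property~\eqref{eq:uniquePreimage} in different clothing, and the support-cardinality bound $\Entropy{X_{\IntegerSet{n}}|Y_{\IntegerSet{n}}}\le 2\ld\Cardinality{\SpaceX}$ replaces the paper's chain-rule decomposition (proposition~\ref{prop:blockEntropyBound}) at no cost.
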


The proofs of all statements in this section are in section~\ref{sec:proof:entropyrate}. The constants $C$ in theorem~\ref{thm:entropycharacterisation} are explicit functions of $\Lumping$; see~\eqref{eq:trajectory:rate} for~\eqref{eq:entropycharacterisation:loss} and~\eqref{eq:trajectory:bound:explicit} for~\eqref{eq:entropycharacterisation:preservation}. Likewise, an explicit lower bound for the entropy rate loss in case~\eqref{eq:entropycharacterisation:loss} is stated in~\eqref{eq:loss:quantification}, implying that the entropy loss grows at least linearly in the sequence length.\\

Theorem~\ref{thm:entropycharacterisation} reveals a dichotomy in behaviour of the entropy of the lumping. If $\Kappa$ is infinite, then no split-merge situations as in figure~\ref{fig:KappaFig} (page~\pageref{fig:KappaFig}) occur. Thus, all finite trajectories of $X$ can be reconstructed from its lumped image and knowledge of its endpoints. Therefore, the only entropy loss occurs at those endpoints and is finite. This yields uniform finite bounds on the conditional block entropies and the preimage count. If $\Kappa$ is finite, then at least two different, realisable length $(\Kappa+2)$ trajectories of $X$ with the same lumped image split and merge (see figure~\ref{fig:KappaFig}). Such a split-merge leads to a finite entropy loss. The ergodic theorem ensures that this situation occurs linearly often in the block length, thus leading to a linear growth of the conditional block entropy. This implies an entropy rate loss. In particular, the conditional block entropy of a lumping never exhibits sublinear and unbounded growth.\\

If no split-merge situation occurs, then realisable trajectories with the same lumped image must be parallel. This constraint bounds their number. First, this yields a uniform bound on the conditional block entropies for lengths smaller than $\Kappa$:

\begin{Prop}\label{prop:blockEntropyBound}
We have
\begin{equation}\label{eq:blockEntropyBound}
 \ForAll n:\quad
 n-2<\Kappa
 \Then
 \Entropy{X_{\IntegerSet{n}}|Y_{\IntegerSet{n}}}\le
 2\ld(\Cardinality{\SpaceX}-\Cardinality{\SpaceY}+1)\,.
\end{equation}
\end{Prop}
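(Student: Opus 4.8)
The plan is to reduce the bound to a deterministic reconstruction statement and then control the residual uncertainty at the two endpoints of the block. The crucial combinatorial claim is that, when $n-2<\Kappa$, a realisable length-$n$ trajectory is uniquely determined by its lumped image together with its first and last state; equivalently, the map $\Sequence{x}\mapsto(\LumpingFunction(\Sequence{x}),x_1,x_n)$ is injective on realisable trajectories, so within each realisable preimage $R(\Sequence{y})$ the endpoints pin down the whole path. Granting this, the chain rule gives $\Entropy{X_{\IntegerSet{n}}|Y_{\IntegerSet{n}}}=\Entropy{X_1,X_n|Y_{\IntegerSet{n}}}+\Entropy{X_{\IntegerSet{n}}|Y_{\IntegerSet{n}},X_1,X_n}$, where the second summand vanishes because $X_{\IntegerSet{n}}$ is then a function of $(Y_{\IntegerSet{n}},X_1,X_n)$. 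Since conditioning reduces entropy and entropy is subadditive, the first summand is at most $\Entropy{X_1|Y_1}+\Entropy{X_n|Y_n}$.

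I would prove the reconstruction claim by contradiction, by extracting a minimal split-merge. Suppose $\Sequence{x}',\Sequence{x}''\in R(\Sequence{y})$ are distinct with $x'_1=x''_1$ and $x'_n=x''_n$, and set $D:=\Set{i:x'_i\neq x''_i}$. Then $D$ is nonempty and, by the agreement at both endpoints, $D\subseteq[2,n-1]$. With $a:=\min D$ and $b:=\max D$ the two paths agree at the flanking positions $a-1$ and $b+1$, both lying in $\IntegerSet{n}$. Choosing $\check{x}:=x'_{a-1}$, $\hat{x}:=x'_{b+1}$ and the two (distinct, since they differ at $a$) middle segments on $[a,b]$ --- which are realisable as contiguous sub-paths of the realisable $\Sequence{x}'$ and $\Sequence{x}''$ --- exhibits a split-merge of middle length $b-a+1\le(n-1)-2+1=n-2$. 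Hence $\Kappa\le n-2$, contradicting $n-2<\Kappa$; therefore $\Sequence{x}'=\Sequence{x}''$.

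It then remains to bound each endpoint term. Conditioned on $Y_1=y$, the state $X_1$ takes values in the fibre $\LumpingPreimage(y)$, so $\Entropy{X_1|Y_1=y}\le\ld\Cardinality{\LumpingPreimage(y)}$ and averaging gives $\Entropy{X_1|Y_1}\le\ld\max_{y\in\SpaceY}\Cardinality{\LumpingPreimage(y)}$. Because the $\Cardinality{\SpaceY}$ fibres are nonempty and partition $\SpaceX$, their sizes sum to $\Cardinality{\SpaceX}$, so a single fibre has at most $\Cardinality{\SpaceX}-\Cardinality{\SpaceY}+1$ elements; thus $\Entropy{X_1|Y_1}\le\ld(\Cardinality{\SpaceX}-\Cardinality{\SpaceY}+1)$, and symmetrically for $X_n$. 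Summing the two bounds yields the asserted $2\ld(\Cardinality{\SpaceX}-\Cardinality{\SpaceY}+1)$, and the boundary cases $n\le 2$ are trivial since the block then coincides with its endpoints.

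The main obstacle is the reconstruction claim, and within it the careful bookkeeping that the extracted differing block genuinely instantiates the split-merge of~\eqref{eq:kappa}: that the flanked middle segment is realisable, that the two middle parts are distinct, and --- above all --- that the block length does not exceed $n-2$, which is what produces the contradiction with $n-2<\Kappa$. Once the minimal-block extraction via $a=\min D$ and $b=\max D$ is in place, this length estimate is immediate, and the surrounding entropy manipulations are routine uses of the chain rule and of the inequality~\eqref{eq:entropy:reduction}.
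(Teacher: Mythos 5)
Your proposal is correct and follows essentially the same route as the paper: the chain-rule decomposition into endpoints plus interior, the vanishing of the interior entropy via the unique-preimage property, and the bound $\Entropy{X_1|Y_1}+\Entropy{X_n|Y_n}\le 2\ld(\Cardinality{\SpaceX}-\Cardinality{\SpaceY}+1)$ from the pigeonhole bound on fibre sizes. The only difference is that the paper simply asserts the unique-preimage property~\eqref{eq:uniquePreimage} as an immediate consequence of the definition of $\Kappa$, whereas you spell out its proof by extracting a minimal split-merge via $a=\min D$, $b=\max D$ --- a worthwhile elaboration, and your length bookkeeping $b-a+1\le n-2$ is exactly right.
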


Second, the finiteness of $\SpaceX$ implies that either a split-merge situation of low trajectory length exists or no split-merge situation exists at all:

\begin{Prop}\label{prop:kappaBound}
In case~\eqref{eq:entropycharacterisation:loss}, we have
\begin{equation}\label{eq:kappaBound}
 \Kappa\le\sum_{y\in\SpaceY}
  \Cardinality{\LumpingPreimage(y)}(\Cardinality{\LumpingPreimage(y)}-1)\,.
\end{equation}
\end{Prop}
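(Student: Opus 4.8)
The plan is to prove~\eqref{eq:kappaBound} by a pigeonhole-and-splice argument applied to a shortest split-merge pair. Since we are in case~\eqref{eq:entropycharacterisation:loss}, theorem~\ref{thm:entropycharacterisation} gives $\Kappa<\infty$, so the infimum in~\eqref{eq:kappa} is attained: there exist $\check{x},\hat{x}\in\SpaceX$, $\Sequence{y}\in\SpaceY^\Kappa$ and distinct $\Sequence{x'},\Sequence{x''}\in\LumpingPreimage(\Sequence{y})$ with both full trajectories $(\check{x},\Sequence{x'},\hat{x})$ and $(\check{x},\Sequence{x''},\hat{x})$ realisable. By the minimality remark following~\eqref{eq:kappa}, this pair differs in every coordinate, i.e.\ $x'_i\neq x''_i$ for all $i\in\IntegerSet{\Kappa}$. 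First I would observe that at each coordinate $i$ the ordered pair $(x'_i,x''_i)$ consists of two distinct states sharing the lumped image $y_i=\LumpingFunction(x'_i)=\LumpingFunction(x''_i)$, hence lies in the set
\[
 \bigcup_{y\in\SpaceY}\Set{(a,b)\in\LumpingPreimage(y)\times\LumpingPreimage(y): a\neq b}\,,
\]
whose cardinality equals $N:=\sum_{y\in\SpaceY}\Cardinality{\LumpingPreimage(y)}(\Cardinality{\LumpingPreimage(y)}-1)$, the right-hand side of~\eqref{eq:kappaBound}.

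The main step is to derive $\Kappa\le N$ by contradiction. Suppose $\Kappa>N$. Then the $\Kappa$ ordered pairs $(x'_1,x''_1),\dotsc,(x'_\Kappa,x''_\Kappa)$ take values in a set of only $N$ elements, so the pigeonhole principle yields indices $i<j$ with $(x'_i,x''_i)=(x'_j,x''_j)$. I would then excise the intervening loop by deleting coordinates $i+1,\dotsc,j$ from both trajectories and keeping those indexed by $\IntegerSet{1,i}\cup\IntegerSet{j+1,\Kappa}$. Since $x'_i=x'_j$ and $(x'_j,x'_{j+1})$ is an edge of $\TransitionGraph$, the spliced step $(x'_i,x'_{j+1})$ is again an edge of $\TransitionGraph$; the same holds for $\Sequence{x''}$ using $x''_i=x''_j$. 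Every other transition, the common start $\check{x}$ and the common endpoint $\hat{x}$ are left unchanged, and $\MeasureX(\check{x})>0$ by irreducibility of $X$, so the two shortened full trajectories remain realisable with common endpoints and the same, correspondingly shortened, lumped image.

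It then remains to confirm that the shortened sequences still constitute an admissible pair for~\eqref{eq:kappa}. Every retained coordinate was one where $\Sequence{x'}$ and $\Sequence{x''}$ differed, so the shortened sequences differ in every retained coordinate and are in particular distinct, coordinate $1$ surviving because $i\ge 1$. Their differing block has length $\Kappa-(j-i)<\Kappa$, contradicting that $\Kappa$ is the infimum in~\eqref{eq:kappa}. Hence $\Kappa\le N$, which is exactly~\eqref{eq:kappaBound}. I expect the only delicate point to be the realisability of the spliced trajectories, namely that the single new transition $(x'_i,x'_{j+1})$ is admissible; this follows immediately from $x'_i=x'_j$, and everything else is bookkeeping of indices and lumped images.
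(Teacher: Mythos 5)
Your proof is correct, and it is a genuinely different (and more self-contained) implementation of the same pigeonhole-plus-minimality strategy that the paper uses. Both arguments count against $N=\sum_{y\in\SpaceY}\Cardinality{\LumpingPreimage(y)}(\Cardinality{\LumpingPreimage(y)}-1)$, the number of ordered pairs of distinct states with a common lumped image, and both conclude by exhibiting an admissible pair shorter than $\Kappa$, contradicting the infimum in~\eqref{eq:kappa}. The difference lies in how that shorter pair is produced. The paper asserts that the pigeonhole forces the two trajectories to \emph{intersect}, i.e. $\Exists m\in\IntegerSet{\Kappa}:\Sequence{x'}_{\Set{m}}=\Sequence{x''}_{\Set{m}}$, and then splits the pair at that coordinate (keeping the block before $m$ with new endpoint $\Sequence{x'}_{\Set{m}}$, or the block after $m$ with new startpoint $\Sequence{x'}_{\Set{m}}$). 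But what the pigeonhole literally yields --- given that a pair attaining the infimum differs in \emph{every} coordinate --- is a repeated ordered pair $(x'_i,x''_i)=(x'_j,x''_j)$ at two coordinates $i<j$, not an intersection; turning this into a contradiction requires exactly your splice, which deletes coordinates $i+1,\dotsc,j$ from both trajectories and uses time-homogeneity to certify the single new transition $(x'_i,x'_{j+1})$. So your route makes explicit the step that the paper's wording (``this ensures that the two trajectories intersect'') leaves implicit, and your verifications --- realisability of the spliced step, distinctness of the shortened pair because every retained coordinate was a differing coordinate, positivity of the initial mass $\MeasureX(\check{x})$, survival of coordinate $1$ --- are precisely the points that need checking. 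What the paper's split-at-intersection buys, in exchange, is that no new transition ever has to be checked, at the cost of the terse jump from the pigeonhole to an intersection. The only cosmetic point in your write-up is the boundary case $j=\Kappa$, where the spliced step runs into $\hat{x}$; with the convention $\Sequence{x'}_{\Set{\Kappa+1}}=\Sequence{x''}_{\Set{\Kappa+1}}=\hat{x}$ your formula covers it verbatim.
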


If $P$ is positive, i.e., all its entries are positive, then $G$ is the complete directed graph and $\Kappa=1$. Hence,

\begin{Cor}\label{cor:positiveMatrix}
If $P$ is positive, then $\EntropyRate{X|Y}>0$.
\end{Cor}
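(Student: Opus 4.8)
The plan is to reduce the corollary to theorem~\ref{thm:entropycharacterisation}. By its equivalence~\eqref{eq:entropycharacterisation:loss}, $\EntropyRate{X|Y}>0$ holds as soon as $\Kappa<\infty$, so it suffices to exhibit a single split-merge situation, i.e. to show $\Kappa=1$.

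First I would record the two structural consequences of the hypotheses. Positivity of $\TransitionMatrix$ means $\TransitionMatrix(x,x')>0$ for every ordered pair $(x,x')\in\SpaceX^2$; hence $\TransitionGraph$ is the complete directed graph (with all self-loops) and every finite word over $\SpaceX$ is a realisable trajectory. Non-triviality of $\LumpingFunction$ gives $\Cardinality{\SpaceY}<\Cardinality{\SpaceX}$, so by the pigeonhole principle some lump $y\in\SpaceY$ has a preimage $\LumpingPreimage(y)$ containing two distinct states $x'\neq x''$.

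Next I would verify that these ingredients witness $n=1$ in the defining set of~\eqref{eq:kappa}. Take the one-letter lumped word $\Sequence{y}=(y)$ together with the distinct length-$1$ trajectories $\Sequence{x'}=(x')$ and $\Sequence{x''}=(x'')$, both lying in $\LumpingPreimage(\Sequence{y})$. Choosing any endpoints $\check{x},\hat{x}\in\SpaceX$ (for concreteness $\check{x}=\hat{x}=x'$), positivity yields
\[
 \Proba(X_0=\check{x},X_1=x',X_2=\hat{x})
 =\MeasureX(\check{x})\,\TransitionMatrix(\check{x},x')\,\TransitionMatrix(x',\hat{x})>0\,,
\]
and the analogous product with $x''$ replacing $x'$ is positive as well, using $\MeasureX(\check{x})>0$ from irreducibility. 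The two trajectories share start $\check{x}$, endpoint $\hat{x}$ and lumped image, yet differ in their single coordinate; thus $n=1$ belongs to the set in~\eqref{eq:kappa} and $\Kappa=1<\infty$.

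Finiteness of $\Kappa$ then places us in the loss case~\eqref{eq:entropycharacterisation:loss} of theorem~\ref{thm:entropycharacterisation}, so $\EntropyRate{X|Y}>0$, as claimed. I expect no genuine obstacle here; the only point needing care is the guarantee that some lump is non-singleton, which is precisely where the non-triviality assumption $\Cardinality{\SpaceY}<\Cardinality{\SpaceX}$ is used---were $\LumpingFunction$ injective, no split-merge pair could exist and the argument would break down.
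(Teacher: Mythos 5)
Your proof is correct and takes essentially the same route as the paper: the paper's argument is the one-line observation that positivity of $\TransitionMatrix$ makes $\TransitionGraph$ the complete directed graph, hence $\Kappa=1$, and then~\eqref{eq:entropycharacterisation:loss} of theorem~\ref{thm:entropycharacterisation} applies. You simply spell out the details the paper leaves implicit --- the pigeonhole argument producing a non-singleton lump $\LumpingPreimage(y)$ and the explicit split-merge witness with $\check{x}=\hat{x}=x'$, using $\MeasureX>0$ for realisability.
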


Thus, entropy rate preserving lumpings must have sufficiently sparse transition matrices $P$. The examples depicted in figure~\ref{fig:LosslessNotWeak} (page~\pageref{fig:LosslessNotWeak}) and figure~\ref{fig:SESXCounterExample} (page~\pageref{fig:SESXCounterExample}) preserve the entropy rate without satisfying the sufficient conditions from section~\ref{sec:sufficient}.

\begin{figure}
\begin{center}
\begin{tikzpicture}[scale=2,>=stealth,rounded corners]
  \drawnode{a1}{1,1}{$a_1$}  \drawnode{a2}{1,0}{$a_2$}
  \drawnode{b1}{2,1}{$b_1$}  \drawnode{b2}{2,0}{$b_2$}
  \drawnode{c1}{3,1}{$c_1$}  \drawnode{c2}{3,0}{$c_2$}

  \draw[<->] (a1) -- (b1); \draw[<->] (a2) -- (b2); 
  \draw[<->] (c1) -- (b1); \draw[<->] (c2) -- (b2); 
  \draw[<->] (b1) -- (b2);
  \draw[->] (b1) -- (a2);
  \draw[->] (c1) -- (b2);
  \selfedge{a1}{180};\selfedge{a2}{180};\selfedge{c1}{0};\selfedge{c2}{0};

  \draw[draw=red] (0.75,-0.25) rectangle (1.25,1.25);
  \draw[draw=red] (2.75,-0.25) rectangle (3.25,1.25);
  \draw[draw=red] (1.75,0.75) rectangle (2.25,1.25);
  \draw[draw=red] (1.75,-0.25) rectangle (2.25,0.25);
  \draw (1,0.5) node {\textcolor{red}{$A$}}; \draw (3,0.5) node {\textcolor{red}{$C$}};
  \draw (2.4,1.25) node {\textcolor{red}{$B_1$}};
  \draw (2.4,-0.25) node {\textcolor{red}{$B_2$}};
\end{tikzpicture}
\end{center}
\caption[An entropy rate preserving lumping]{
(Colour online) The transition graph of a \NameMarkov{} chain with the lumping represented by red boxes. The lumping preserves the entropy rate without satisfying $\ClassSingleEntry$ from section~\ref{sec:sufficient}. The loops at $a_1$ and $a_2$ on the lhs, and at $c_1$ and $c_2$ on the rhs, prevent that the lumped process is $\ClassKMarkov$, for every $k$, given that the loop probabilities are different.}
\label{fig:LosslessNotWeak}
\end{figure}
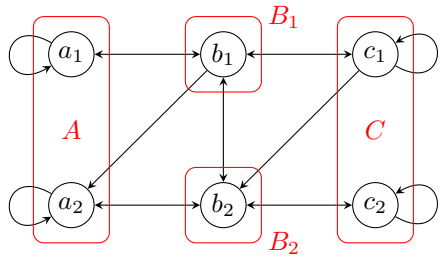

%===============================================================================
\subsection{Characterisation of strong \texorpdfstring{$k$}{k}-lumpability}
\label{sec:lumpability}
%===============================================================================
The case of the lumped process retaining the \NameMarkov{} property is desirable from a computational and modelling point of view. However, in general, the lumped process $Y$ does not possess the \NameMarkov{} property~\cite{Kemeny_Snell__FiniteMarkovChains__Springer_1976,Gurvits_Ledoux__MarkovPropertyForAFunctionOfAMarkovChain_ALinearAlgebraApproach}. Nevertheless, one may hope that the lumped process belongs to the larger and still desirable class of higher-order \NameMarkov{} chains.

\begin{Def}\label{def:kthOrderMC}
A stochastic process $Z:=(Z_n)_{n\in\NatNumZero}$ is a $k$-th order homogeneous \NameMarkov{} chain (short: $Z\text{ is }\ClassKMarkov$), iff
\begin{multline}\label{eq:markov:kthorder}
 \ForAll n\in\NatNum
 ,m\in[k,n]
 ,z_n\in \SpaceZ
 ,\Sequence{z}\in\SpaceZ^m
 :\quad
 \Proba(Z_{[n-m,n-1]}=\Sequence{z})>0 \Then
 \\
  \Proba(Z_n=z_n|Z_{[n-m,n-1]}=\Sequence{z})
  =\Proba(Z_n=z_n|Z_{[n-k,n-1]}=\Sequence{z}_{[n-k,n-1]})\,.
\end{multline}
\end{Def}

The entropy rate of a $\ClassKMarkov$ is as straightforward as one would expect:

\begin{Prop}\label{prop:ratetoMarkov}
Let $Z:=(Z_n)_{n\in\NatNumZero}$ be a stationary stochastic process on $\SpaceZ$. Then
\begin{equation}\label{eq:rateToMarkov}
 Z\text{ is }\ClassKMarkov
 \Iff
 \EntropyRate{Z}=\Entropy{Z_k|Z_{\IntegerSet{0,k-1}}}\,.
\end{equation}
\end{Prop}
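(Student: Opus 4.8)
The plan is to reduce both directions to the single sequence of conditional entropies $d_\ell:=\Entropy{Z_\ell|Z_{\IntegerSet{0,\ell-1}}}$, $\ell\ge 1$. First I would record three consequences of stationarity and~\eqref{eq:entropy:reduction}. (i) The sequence $(d_\ell)$ is monotonically decreasing, so $\inf_\ell d_\ell=\lim_\ell d_\ell$, and this limit equals $\EntropyRate{Z}$, because shifting the window down by one gives $d_\ell=\Entropy{Z_{\ell+1}|Z_{\IntegerSet{\ell}}}$, which is the $(\ell+1)$-st term of the right-hand limit in~\eqref{eq:entropy:rate}. (ii) The right-hand side of~\eqref{eq:rateToMarkov} is exactly $d_k$. (iii) Conditioning on only the most recent $k$ coordinates gives $\Entropy{Z_n|Z_{\IntegerSet{n-k,n-1}}}=d_k$ for every $n\ge k$, by shifting the window $\IntegerSet{n-k,n-1}$ down onto $\IntegerSet{0,k-1}$.

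For the forward implication I would assume $Z\text{ is }\ClassKMarkov$ and apply Definition~\ref{def:kthOrderMC} with $m=n$ for each $n\ge k$: on every positive-probability conditioning event the law of $Z_n$ given $Z_{\IntegerSet{0,n-1}}$ coincides with its law given $Z_{\IntegerSet{n-k,n-1}}$. Hence $\Entropy{Z_n|Z_{\IntegerSet{0,n-1}}=\cdot}$ depends only on the last $k$ coordinates; summing against $\Proba(Z_{\IntegerSet{0,n-1}}=\cdot)$ and marginalising out the oldest coordinates yields $d_n=\Entropy{Z_n|Z_{\IntegerSet{n-k,n-1}}}=d_k$ for all $n\ge k$, by~(iii). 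Passing to the limit and using~(i) and~(ii) gives $\EntropyRate{Z}=d_k$, which is the claimed equality.

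For the converse I would start from $\EntropyRate{Z}=d_k$. Since $(d_\ell)$ decreases with infimum $\EntropyRate{Z}$, every $d_\ell$ with $\ell\ge k$ obeys $d_k=\EntropyRate{Z}\le d_\ell\le d_k$, forcing $d_\ell=d_k$. Combined with~(iii) this makes~\eqref{eq:entropy:reduction} an equality, so the conditional mutual information $\CondMutInf{Z_n}{Z_{\IntegerSet{0,n-k-1}}}{Z_{\IntegerSet{n-k,n-1}}}=\Entropy{Z_n|Z_{\IntegerSet{n-k,n-1}}}-\Entropy{Z_n|Z_{\IntegerSet{0,n-1}}}$ vanishes for every $n\ge k$. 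Invoking the standard fact that vanishing conditional mutual information is equivalent to conditional independence, I obtain $Z_n\IndependentOf Z_{\IntegerSet{0,n-k-1}}\mid Z_{\IntegerSet{n-k,n-1}}$. Finally I would upgrade this to the full property: for any $m\in[k,n]$ the block $Z_{\IntegerSet{n-m,n-k-1}}$ is a sub-tuple of $Z_{\IntegerSet{0,n-k-1}}$, so the decomposition property of conditional independence yields $Z_n\IndependentOf Z_{\IntegerSet{n-m,n-k-1}}\mid Z_{\IntegerSet{n-k,n-1}}$, which is precisely~\eqref{eq:markov:kthorder}.

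I expect the main obstacle to be the converse direction, and within it the two non-bookkeeping steps: invoking the \emph{equality case} of ``conditioning reduces entropy'' (that zero conditional mutual information forces conditional independence, a refinement of~\eqref{eq:entropy:reduction} not recorded above), and the bootstrapping from conditioning back to index $0$ to conditioning on an arbitrary intermediate window via decomposition. Care is also needed to align the positive-probability conditioning in Definition~\ref{def:kthOrderMC} with the conditional-entropy sums, but that is routine, as is all the stationarity indexing in~(i)--(iii) and the forward direction.
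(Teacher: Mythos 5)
Your proof is correct and follows essentially the same route as the paper's: both reduce the stated equality, via stationarity and the monotonicity coming from~\eqref{eq:entropy:reduction}, to the vanishing of $\CondMutInf{Z_n}{Z_{\IntegerSet{0,n-k-1}}}{Z_{\IntegerSet{n-k,n-1}}}$ for every $n\ge k$, and then identify that vanishing with the conditional-independence factorisation defining $\ClassKMarkov$. Your explicit decomposition step handling general $m\in[k,n]$ is a detail the paper leaves implicit, but it is the same argument.
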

The proof of this proposition is in section~\ref{sec_proof_lumpability}. We investigate lumpings, where the lumped process is $\ClassKMarkov$:

\begin{Def}[Extension of~{\cite[Def.~6.3.1]{Kemeny_Snell__FiniteMarkovChains__Springer_1976}}]\label{def:klump}
A lumping $\Lumping$ of a stationary \NameMarkov{} chain is \emph{weakly $k$-lumpable}, iff $Y$ is $\ClassKMarkov$. It is \emph{strongly $k$-lumpable}, iff this holds for each distribution of $X_0$ and the transition probabilities of $Y$ are independent of this distribution.
\end{Def}

A direct expression of the entropy rate of the lumped process $Y$ is intrinsically complicated~\cite{Blackwell__TheEntropyOfFunctionsOfFiniteStateMarkovChains__CAS_1957}. See section~\ref{sec_blackwell}. However, there are asymptotically tight, monotone decreasing, upper and lower bounds:

\begin{Lem}[{\cite[Thm.~4.5.1,~pp.~86]{Cover_Thomas__ElementsOfInformationTheory_Ed2__Wiley_2006}}]
\label{lem:ratebounds}
In our setup, we have:
\begin{equation}\label{eq_ratebounds}
 \ForAll n\in\NatNum:\quad
 \Entropy{Y_n|Y_{\IntegerSet{n-1}},X_0}
 \leq \EntropyRate{Y}
 \leq \Entropy{Y_n|Y_{\IntegerSet{0,n-1}}}\,.
\end{equation}
\end{Lem}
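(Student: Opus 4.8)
The plan is to establish the two inequalities in~\eqref{eq_ratebounds} separately, using only stationarity, the monotonicity of conditional entropy from~\eqref{eq:entropy:reduction}, and the \NameMarkov{} property of $X$; asymptotic tightness then drops out of a summability estimate. I write $a_n:=\Entropy{Y_n|Y_{\IntegerSet{0,n-1}}}$ for the upper and $b_n:=\Entropy{Y_n|Y_{\IntegerSet{n-1}},X_0}$ for the lower bounding sequence.

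\emph{Upper bound.} First I would note that, by stationarity of $Y$, shifting all indices up by one gives $a_n=\Entropy{Y_n|Y_{\IntegerSet{0,n-1}}}=\Entropy{Y_{n+1}|Y_{\IntegerSet{n}}}$. The right-hand side is the $(n{+}1)$-st term of the sequence $\Entropy{Y_m|Y_{\IntegerSet{m-1}}}$, which by~\eqref{eq:entropy:reduction} and stationarity is nonincreasing and, by~\eqref{eq:entropy:rate}, converges to $\EntropyRate{Y}$. A nonincreasing sequence lies above its limit, so $a_n\ge\EntropyRate{Y}$, which is the right inequality.

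\emph{Lower bound.} Since $Y_0=\LumpingFunction(X_0)$ is $X_0$-measurable, I may write $b_n=\Entropy{Y_n|Y_{\IntegerSet{0,n-1}},X_0}$, so that $a_n-b_n=\CondMutInf{X_0}{Y_n}{Y_{\IntegerSet{0,n-1}}}\ge 0$; in particular $b_n\le a_n$. Summing these gaps via the chain rule yields $\sum_{n=1}^{N}(a_n-b_n)\le\MutInf{X_0}{Y_{\IntegerSet{0,N}}}\le\Entropy{X_0}\le\ld\Cardinality{\SpaceX}<\infty$, uniformly in $N$, so the nonnegative series converges and its general term vanishes: $a_n-b_n\to 0$. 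Next I would show that $b_n$ is nondecreasing. Shifting indices up by one gives $b_n=\Entropy{Y_{n+1}|Y_{\IntegerSet{2,n}},X_1}$, and the \NameMarkov{} property — conditioned on the present state $X_1$, the block $(Y_2,\dotsc,Y_{n+1})$ is independent of $X_0$, while $Y_1=\LumpingFunction(X_1)$ is $X_1$-measurable — lets me enlarge the conditioning to $\Entropy{Y_{n+1}|Y_{\IntegerSet{n}},X_1,X_0}$ without changing the value; dropping $X_1$ then only increases entropy by~\eqref{eq:entropy:reduction}, whence $b_n\le\Entropy{Y_{n+1}|Y_{\IntegerSet{n}},X_0}=b_{n+1}$. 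Thus $b_n$ increases to some limit $L$; combining $a_n\downarrow\EntropyRate{Y}$ with $a_n-b_n\to 0$ forces $L=\EntropyRate{Y}$, and monotonicity gives $b_n\le L=\EntropyRate{Y}$, the left inequality. (Alternatively, extending $X$ to a two-sided stationary chain, the same screening lets me insert arbitrarily many past observations into the conditioning of $b_n$ and then drop $X_0$, yielding $b_n\le a_{n+m}\to\EntropyRate{Y}$ directly.)

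I expect the main obstacle to be the \NameMarkov{} screening step: correctly arguing that conditioning on the single present state renders the future conditionally independent of the entire past, so that $X_0$ (resp.\ $X_1$) serves as a sufficient statistic for the continuation of $Y$. This is the one place where the chain structure genuinely enters, and the care lies in the index bookkeeping — tracking that $Y_0$ (resp.\ $Y_1$) is measurable with respect to $X_0$ (resp.\ $X_1$), and that the one-step shifts are licensed purely by stationarity of the blocks, so that no two-sided extension is needed for the one-sided argument.
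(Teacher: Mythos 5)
Your proof is correct, but note the comparison point: the paper does not prove this lemma at all --- it imports it verbatim as~\cite[Thm.~4.5.1,~pp.~86]{CoverThomas_Information}, so you are reconstructing the textbook argument rather than a proof the paper supplies. Your reconstruction is sound: the upper bound is the standard monotone-limit argument for stationary processes; for the lower bound, writing $b_n=\Entropy{Y_n|Y_{\IntegerSet{0,n-1}},X_0}$ (since $Y_0$ is $X_0$-measurable), bounding $\sum_{n=1}^{N}(a_n-b_n)=\sum_{n=1}^{N}\CondMutInf{X_0}{Y_n}{Y_{\IntegerSet{0,n-1}}}\le\MutInf{X_0}{Y_{\IntegerSet{0,N}}}\le\ld\Cardinality{\SpaceX}$ via the chain rule, and proving $b_n\le b_{n+1}$ by stationarity, the \NameMarkov{} screening $\Entropy{Y_{n+1}|Y_{\IntegerSet{2,n}},X_1}=\Entropy{Y_{n+1}|Y_{\IntegerSet{n}},X_1,X_0}$ (weak union applied to $(Y_2,\dotsc,Y_{n+1})\IndependentOf X_0\mid X_1$), and then dropping $X_1$, together force $b_n\uparrow\EntropyRate{Y}$, hence $b_n\le\EntropyRate{Y}$. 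This differs from \NameCoverThomas{}'s own route for the lower bound, which conditions on an arbitrarily long two-sided past ($X_0,X_{-1},\dotsc,X_{-k}$, then $Y_0,\dotsc,Y_{-k}$) and takes $k\to\infty$ --- the route you only sketch parenthetically. What your primary route buys is that it stays entirely on the one-sided time axis $\NatNumZero$ used throughout the paper (no two-sided stationary extension needed), at the cost of needing the summable-gap estimate as an ingredient of the bound itself rather than as a separate asymptotic-tightness statement; the textbook route proves the sandwich first and the vanishing gap afterwards. Both are valid, and your monotonicity claim for $b_n$ is a small bonus not stated in the lemma.
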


In the stationary setting, equality on the rhs in~\eqref{eq_ratebounds}, for $n=k$, together with proposition~\ref{prop:ratetoMarkov} implies that $Y$ is $\ClassKMarkov$, i.e. $\Lumping$ is weakly $k$-lumpable. If there is also equality on the lhs in~\eqref{eq_ratebounds}, for $n=k$, then knowledge of the distribution of $X_0$ delivers no additional information about $Y_k$. In other words, $Y$ is $\ClassKMarkov$, for every starting distribution. Our second main results characterises higher-order lumpability:

\begin{Thm}\label{thm:klump:equalities}
The following statements are equivalent:
\begin{subequations}\label{eq:klump}
\begin{equation}\label{eq:klump:boundsequal}
 \Entropy{Y_k|Y_{\IntegerSet{k-1}},X_0}=\Entropy{Y_k|Y_{\IntegerSet{0,k-1}}}\,,
\end{equation}
\begin{equation}\label{eq:klump:strong}
 X \text{ is strongly $k$-lumpable.}
\end{equation}
\end{subequations}
\end{Thm}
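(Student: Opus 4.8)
The plan is to route the entropy identity \eqref{eq:klump:boundsequal} through a single conditional independence and then propagate it to all starting distributions by stationarity and the Markov property.

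First I would rewrite the left-hand side of \eqref{eq:klump:boundsequal}. Since $Y_0=\LumpingFunction(X_0)$ is a function of $X_0$, conditioning on $X_0$ subsumes conditioning on $Y_0$, so $\Entropy{Y_k|Y_{\IntegerSet{k-1}},X_0}=\Entropy{Y_k|Y_{\IntegerSet{0,k-1}},X_0}$. Subtracting this from the right-hand side yields
\begin{equation*}
 \Entropy{Y_k|Y_{\IntegerSet{0,k-1}}}-\Entropy{Y_k|Y_{\IntegerSet{k-1}},X_0}
 =\CondMutInf{Y_k}{X_0}{Y_{\IntegerSet{0,k-1}}}\ge 0\,,
\end{equation*}
so that \eqref{eq:klump:boundsequal} is equivalent to $\CondMutInf{Y_k}{X_0}{Y_{\IntegerSet{0,k-1}}}=0$, i.e. to the conditional independence $Y_k\IndependentOf X_0\mid Y_{\IntegerSet{0,k-1}}$ under $\MeasureX$. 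In the discrete setting this reads: for every $x_0$ and every realisable $(y_0,\dotsc,y_k)$ with $y_0=\LumpingFunction(x_0)$,
\begin{equation}\label{eq:CI}
 \Proba(Y_k=y_k\mid Y_{\IntegerSet{k-1}}=(y_1,\dotsc,y_{k-1}),X_0=x_0)
 =\Proba(Y_k=y_k\mid Y_{\IntegerSet{0,k-1}}=(y_0,\dotsc,y_{k-1}))\,.
\end{equation}

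For the direction strong $k$-lumpability $\Then$ \eqref{eq:klump:boundsequal} I would use that pinning $X_0=x_0$ erases the starting distribution: the conditional law $\Proba(\,\cdot\mid X_0=x_0)$ is the law of the lumping started from $\Dirac{x_0}$, computed purely from $\TransitionMatrix$, $\LumpingFunction$ and $x_0$. Write $q$ for the common transition kernel of the $\ClassKMarkov$ lumped process furnished by strong $k$-lumpability. Under $\Dirac{x_0}$ the coordinate $Y_0=\LumpingFunction(x_0)$ is deterministic, so the left-hand side of \eqref{eq:CI} equals $q(y_k\mid\LumpingFunction(x_0),y_1,\dotsc,y_{k-1})$; since the stationary chain is $\ClassKMarkov$ with the same $q$, the right-hand side of \eqref{eq:CI} equals the same value. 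Hence \eqref{eq:CI} holds and the entropy identity follows.

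The converse \eqref{eq:klump:boundsequal} $\Then$ strong $k$-lumpability is the heart of the matter. From \eqref{eq:CI} and stationarity I first shift the conditional independence to every instant, obtaining $Y_{m+k}\IndependentOf X_m\mid Y_{\IntegerSet{m,m+k-1}}$ for all $m$. Put $q(y_k\mid y_0,\dotsc,y_{k-1}):=\Proba(Y_k=y_k\mid Y_{\IntegerSet{0,k-1}}=(y_0,\dotsc,y_{k-1}))$; irreducibility together with the finiteness of $\SpaceX$ makes $\MeasureX$ strictly positive, so every pattern realisable under any starting distribution is realisable under $\MeasureX$ and $q$ is defined on all of them. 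Now fix an arbitrary starting distribution $\nu$ of $X_0$, write $\Proba_\nu$ for the resulting law, and fix a length $n\ge k$. Conditioning on the intermediate state $X_{n-k}$ and invoking the Markov property makes $Y_n$ conditionally independent of the distant past $Y_{\IntegerSet{0,n-k-1}}$; together with the shifted version of \eqref{eq:CI} this gives
\begin{equation*}
 \Proba_\nu(Y_n=y_n\mid X_{n-k}=x_{n-k},Y_{\IntegerSet{0,n-1}}=(y_0,\dotsc,y_{n-1}))
 =q(y_n\mid y_{n-k},\dotsc,y_{n-1})\,,
\end{equation*}
which crucially does not depend on $x_{n-k}$. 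Averaging over $X_{n-k}$ against $\Proba_\nu(X_{n-k}=\cdot\mid Y_{\IntegerSet{0,n-1}})$ removes the hidden state and shows $\Proba_\nu(Y_n=y_n\mid Y_{\IntegerSet{0,n-1}})=q(y_n\mid y_{n-k},\dotsc,y_{n-1})$, depending on the past only through its last $k$ entries; averaging further over $Y_{\IntegerSet{0,n-m-1}}$ for any $m\in[k,n]$ leaves this unchanged. Thus $Y$ is $\ClassKMarkov$ in the sense of definition~\ref{def:kthOrderMC}, with transition kernel $q$ manifestly independent of $\nu$, which is exactly strong $k$-lumpability.

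I expect the last propagation step to be the main obstacle: the entropy identity only yields conditional independence at a single horizon, whereas strong $k$-lumpability must hold for every length and every starting distribution. The decisive point is that conditioning on $X_{n-k}$ simultaneously collapses the memory to the last $k$ lumped symbols (Markov property) and activates the shifted form of \eqref{eq:CI}, after which the dependence on both $X_{n-k}$ and $\nu$ averages out. Throughout, one must check that all conditioning events carry positive probability, which is ensured by the strict positivity of $\MeasureX$.
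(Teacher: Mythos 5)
Your proposal is correct and follows essentially the same route as the paper: both pivot on the conditional independence $\CondMutInf{Y_k}{X_0}{Y_{\IntegerSet{0,k-1}}}=0$ (the paper's intermediate property~\eqref{eq:klump:probabilities}), prove the forward direction by taking Dirac starting distributions $\Dirac{x_0}$, and prove the converse by conditioning on the hidden state at the start of the length-$k$ window so that the Markov property and homogeneity erase the dependence on $\nu$ and the distant past. The only cosmetic difference is that your full conditional-independence formulation automatically covers the positivity bookkeeping that the paper isolates in a separate technical proposition (its proposition~\ref{prop_klump_technical}, handling the indicator $\Iverson{\beta_x>0}$ in the Chapman--Kolmogorov sum).
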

The proof of theorem~\ref{thm:klump:equalities} is in section~\ref{sec_proof_lumpability}. We stress the fact that~\eqref{eq:klump:boundsequal} is a condition only on the stationary setting, whereas~\eqref{eq:klump:strong} deals with all starting distributions. Theorem~\ref{thm:klump:equalities} is an information theoretic equivalent to \NameGurvitsLedoux{}'s characterisation~\cite[theorems~2~and~6]{Gurvits_Ledoux__MarkovPropertyForAFunctionOfAMarkovChain_ALinearAlgebraApproach} of $k$-lumpability via a linear algebraic description of invariant subspaces. A classic example~\cite[pp.~139]{Kemeny_Snell__FiniteMarkovChains__Springer_1976} shows that weak $k$-lumpability alone is not sufficient for~\eqref{eq:klump}. Moreover, the examples in figures~\ref{fig:SESXCounterExample} (page~\pageref{fig:SESXCounterExample}) and~\ref{fig:SFSnotStrong} (page~\pageref{fig:SFSnotStrong}) and example~\ref{exam:strong2notSFS2} (page~\pageref{exam:strong2notSFS2}) are strongly lumpable for some $k$ without satisfying the sufficient condition from section~\ref{sec:sufficient}.

%===============================================================================
\subsection{Sufficient conditions}
\label{sec:sufficient}
%===============================================================================
We present easy-to-check sufficient conditions for the preservation of the entropy rate and strong $k$-lumpability. Their proofs are in section~\ref{sec:proof:sufficient}. The conditions depend only on the transition graph $G$ and the lumping function $\LumpingFunction$.\\ 

Our first sufficient condition preserves the entropy rate:
\begin{Def}\label{def:singleEntry}
A lumping $\Lumping$ is \emph{single entry} (short: $\ClassSingleEntry$), iff
\begin{multline}\label{eq:singleEntry}
 \ForAll y\in\SpaceY,x\in\SpaceX:
 \Exists x'\in \LumpingPreimage(y):
 \ForAll x''\in \LumpingPreimage(y)\setminus\Set{x'}:
 \\
 \Proba(X_1=x''|X_0=x)=0\,,
\end{multline}
i.e., there is \emph{at most one} edge from a given state $x$ into the preimage $\LumpingPreimage(y)$.
\end{Def}

The $\ClassSingleEntry$ lumpings are entropy rate preserving:
\begin{Prop}\label{prop:SESXpreserves}
If $\Lumping$ is $\ClassSingleEntry$, then $\EntropyRate{X|Y}=0$.
\end{Prop}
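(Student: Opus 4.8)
The plan is to show that the single-entry property forces $\Kappa=\infty$, so that entropy rate preservation follows immediately from the preservation case~\eqref{eq:entropycharacterisation:preservation} of theorem~\ref{thm:entropycharacterisation}. The guiding intuition is that under $\ClassSingleEntry$ a realisable trajectory is reconstructed \emph{forward} and \emph{deterministically} from its starting state together with its lumped image: given the current state and the target lump $y$ for the next coordinate, condition~\eqref{eq:singleEntry} leaves at most one admissible successor inside $\LumpingPreimage(y)$. Consequently, two realisable trajectories sharing both their starting state and their lumped image must coincide coordinate by coordinate, and this is precisely the obstruction needed to rule out a split-merge situation.

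Concretely, I would argue by contradiction. Suppose $\Kappa<\infty$. Then by~\eqref{eq:kappa} there exist $\check{x},\hat{x}\in\SpaceX$, a lumped word $\Sequence{y}\in\SpaceY^\Kappa$, and two distinct realisable trajectories $\Sequence{x}',\Sequence{x}''\in\LumpingPreimage(\Sequence{y})$ that share the common predecessor $\check{x}$ at time $0$ and the common successor $\hat{x}$ at time $\Kappa+1$, with $\Sequence{x}'\neq\Sequence{x}''$. Let $j$ be the smallest index at which they differ, and let $w$ denote their common predecessor at position $j$, namely $w=\check{x}$ if $j=1$ and $w=x'_{j-1}=x''_{j-1}$ otherwise. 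Realisability supplies the edges $(w,x'_j)$ and $(w,x''_j)$ in $\TransitionGraph$, while $\LumpingFunction(x'_j)=\LumpingFunction(x''_j)=y_j$ and $x'_j\neq x''_j$. Thus $w$ admits two distinct edges into $\LumpingPreimage(y_j)$, contradicting~\eqref{eq:singleEntry}. Hence $\Kappa=\infty$, and~\eqref{eq:entropycharacterisation:preservation} gives $\EntropyRate{X|Y}=0$.

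As an equivalent and slightly more quantitative route, I would instead run the forward-reconstruction argument directly to bound the preimage count. By induction on the coordinates, every $\Sequence{x}\in R(\Sequence{y})$ is uniquely determined by its first state $x_1\in\LumpingPreimage(y_1)$, so the uniform bound $T_n\le\max_{y\in\SpaceY}\Cardinality{\LumpingPreimage(y)}\le\Cardinality{\SpaceX}$ holds almost surely for every $n$. This is exactly the bounded-preimage condition appearing in~\eqref{eq:entropycharacterisation:preservation}, and once more theorem~\ref{thm:entropycharacterisation} delivers $\EntropyRate{X|Y}=0$.

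I expect no serious obstacle; the single delicate point is the bookkeeping at the first index of difference $j$. In particular one must treat the boundary case $j=1$ separately, where the common predecessor is the external state $\check{x}$ rather than an interior coordinate, and confirm that minimality of $j$ genuinely produces a common predecessor $w$ from which both $x'_j$ and $x''_j$ are reachable.
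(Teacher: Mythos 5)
Your proposal is correct, but it takes a genuinely different route from the paper. You reduce $\ClassSingleEntry$ to the combinatorial condition $\Kappa=\infty$ (equivalently, to an almost-surely bounded preimage count $T_n$) and then invoke theorem~\ref{thm:entropycharacterisation}; the argument that a first index of disagreement $j$ produces a common predecessor $w$ with two distinct edges into $\LumpingPreimage(y_j)$ is sound, the boundary case $j=1$ is handled properly, and there is no circularity, since the preservation half of theorem~\ref{thm:entropycharacterisation} is proved independently of proposition~\ref{prop:SESXpreserves}. The paper instead argues directly with entropies: it sandwiches $\Entropy{Y_k|X_{k-1}}\le\Entropy{Y_k|Y_{\IntegerSet{k-1}},X_0}\le\EntropyRate{Y}\le\EntropyRate{X}=\Entropy{X_k|X_{k-1}}$ using lemma~\ref{lem:ratebounds} and data processing, and then observes that $\ClassSingleEntry$ makes the conditional law of $Y_k$ given $X_{k-1}=x$ a mere relabelling of that of $X_k$ given $X_{k-1}=x$, so the two outer terms coincide and the whole chain collapses to $\EntropyRate{Y}=\EntropyRate{X}$. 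Your route is more combinatorial, places $\ClassSingleEntry$ explicitly inside the $\Kappa=\infty$ class (consistent with figure~\ref{fig:LossLumpability}), and quantifies the preimage count, at the cost of invoking the full strength of theorem~\ref{thm:entropycharacterisation}. The paper's route is lighter and yields a byproduct your argument does not: equality on the left-hand side of~\eqref{eq_ratebounds} for every $n$, which is exactly what the proof of corollary~\ref{cor:SEweakStrong} cites; if your proof replaced the paper's, that corollary would need a separate argument (though one could extract it from your forward-reconstruction observation with a little extra work).
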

Figure~\ref{fig:LosslessNotWeak} (page~\pageref{fig:LosslessNotWeak}) and figure~\ref{fig:SESXCounterExample} (page~\pageref{fig:SESXCounterExample}) show that $\ClassSingleEntry$ is not necessary for entropy rate preservation.\\

\begin{Cor}\label{cor:SEweakStrong}
If $\Lumping$ is $\ClassSingleEntry$ and weakly $k$-lumpable, then it is strongly $k$-lumpable.
\end{Cor}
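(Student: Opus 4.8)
The plan is to reduce everything to the stationary entropy equality~\eqref{eq:klump:boundsequal} and then invoke Theorem~\ref{thm:klump:equalities}. Since $\Lumping$ is weakly $k$-lumpable, $Y$ is $\ClassKMarkov$, so Proposition~\ref{prop:ratetoMarkov} gives $\Entropy{Y_k|Y_{\IntegerSet{0,k-1}}}=\EntropyRate{Y}$; equivalently, the upper bound in Lemma~\ref{lem:ratebounds} is attained at $n=k$. Thus it suffices to prove that the $\ClassSingleEntry$ property forces the matching lower bound $\Entropy{Y_k|Y_{\IntegerSet{k-1}},X_0}=\EntropyRate{Y}$, for then both sides of~\eqref{eq:klump:boundsequal} equal $\EntropyRate{Y}$ and Theorem~\ref{thm:klump:equalities} delivers strong $k$-lumpability.

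First I would extract a deterministic reconstruction consequence of $\ClassSingleEntry$: along any realisable trajectory, the pair $(X_0,Y_{\IntegerSet{n-1}})$ determines $X_{\IntegerSet{0,n-1}}$. This follows by induction on the coordinate, since given $X_{i-1}$ and the next lumped symbol $Y_i$, Definition~\ref{def:singleEntry} leaves at most one $x\in\LumpingPreimage(Y_i)$ reachable from $X_{i-1}$, and realisability forces $X_i$ to be that state. Consequently the conditioning information carried by $(X_0,Y_{\IntegerSet{n-1}})$ and by $X_{\IntegerSet{0,n-1}}$ coincides on realisable paths, so $\Entropy{Y_n|Y_{\IntegerSet{n-1}},X_0}=\Entropy{Y_n|X_{\IntegerSet{0,n-1}}}=\Entropy{Y_n|X_{n-1}}$, the final step using the Markov property of $X$ together with $Y_n=\LumpingFunction(X_n)$ and $X_n\IndependentOf X_{\IntegerSet{0,n-2}}\mid X_{n-1}$.

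Next I would evaluate $\Entropy{Y_n|X_{n-1}}$ using $\ClassSingleEntry$ a second time. Fixing $X_{n-1}$, the map $X_n\mapsto\LumpingFunction(X_n)$ is injective on the support of the one-step transition, because from a fixed state there is at most one edge into each preimage $\LumpingPreimage(y)$; hence $\Entropy{X_n|X_{n-1},Y_n}=0$, and the chain rule gives $\Entropy{Y_n|X_{n-1}}=\Entropy{X_n|X_{n-1}}$. By stationarity and the Markov property, $\Entropy{X_n|X_{n-1}}=\EntropyRate{X}$, and Proposition~\ref{prop:SESXpreserves} supplies the entropy rate preservation $\EntropyRate{X}=\EntropyRate{Y}$. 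Specialising the resulting chain of equalities to $n=k$ yields $\Entropy{Y_k|Y_{\IntegerSet{k-1}},X_0}=\EntropyRate{Y}$, which combined with the weak-lumpability input establishes~\eqref{eq:klump:boundsequal}.

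I expect the main obstacle to lie in the reconstruction step: one must keep the argument confined to realisable trajectories, where the ``at most one'' of $\ClassSingleEntry$ becomes ``exactly one'', and must ensure that conditioning on null events is harmlessly excluded, so that the identification of the information generated by $(X_0,Y_{\IntegerSet{n-1}})$ with that generated by $X_{\IntegerSet{0,n-1}}$ — and hence the equality of the corresponding conditional entropies — is fully justified. Everything else is the chain rule and the Markov property.
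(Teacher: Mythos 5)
Your proposal is correct and follows the same route as the paper: weak $k$-lumpability supplies the right-hand equality in Lemma~\ref{lem:ratebounds} at $n=k$ (via Proposition~\ref{prop:ratetoMarkov}), the $\ClassSingleEntry$ property supplies the left-hand equality, and Theorem~\ref{thm:klump:equalities} then yields strong $k$-lumpability. The only difference is in mechanism, not in route: where the paper reads the left-hand equality off the collapsing inequality chain in the proof of Proposition~\ref{prop:SESXpreserves} (whose outer terms $\Entropy{Y_k|X_{k-1}}$ and $\Entropy{X_k|X_{k-1}}$ coincide under $\ClassSingleEntry$), you re-derive it directly through the trajectory-reconstruction identity $\Entropy{Y_k|Y_{\IntegerSet{k-1}},X_0}=\Entropy{Y_k|X_{k-1}}=\Entropy{X_k|X_{k-1}}=\EntropyRate{X}=\EntropyRate{Y}$, which is a sound, if slightly longer, substitute.
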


\begin{proof}
The proof of proposition~\ref{prop:SESXpreserves} shows that $\ClassSingleEntry$ implies equality on the lhs of~\eqref{eq_ratebounds}, for all $n$. Weak $k$-lumpability implies equality on the rhs of~\eqref{eq_ratebounds}, for $n=k$. Therefore, theorem~\ref{thm:klump:equalities} applies.
\end{proof}

An example of a lumping satisfying the conditions of the corollary is given in figure~\ref{fig:SFSnotStrong} (page~\pageref{fig:SFSnotStrong}). That a lumping can be $\ClassSingleEntry$ without being strongly lumpable, or strongly lumpable without being $\ClassSingleEntry$ is shown in figure~\ref{fig:SESFSCounterExample} (page~\pageref{fig:SESFSCounterExample}) and in example~\ref{exam:strong2notSFS2} (page~\pageref{exam:strong2notSFS2}) respectively.\\

\begin{figure}
\begin{center}
\begin{tikzpicture}[scale=2,>=stealth,rounded corners]
  \drawnode{a}{0,0.5}{$a$}
  \drawnode{b1}{1,1}{$b_1$}  \drawnode{b2}{1,0}{$b_2$}
  \drawnode{c1}{2,1}{$c_1$}  \drawnode{c2}{2,0}{$c_2$}

  \draw[->] (a) -- (b1);  \draw[->] (a) -- (b2);
  \draw[->] (b1) -- (c1);  \draw[->] (b2) -- (c2);
  \draw[->] (c1) .. controls (2,2) and (0,2) .. (a);
  \draw[->] (c2) .. controls (2,-1) and (0,-1) .. (a);
  \selfedge{a}{180};

  \draw[draw=red] (0.75,-0.25) rectangle (1.25,1.25);
  \draw[draw=red] (1.75,0.75) rectangle (2.25,1.25);
  \draw[draw=red] (1.75,-0.25) rectangle (2.25,0.25);
  \draw[draw=red] (-0.25,0.25) rectangle (0.25,0.75);
  \draw (1,0.5) node {\textcolor{red}{$B$}};
  \draw (-0.4,0.75) node {\textcolor{red}{$A$}};
  \draw (2.4,1.25) node {\textcolor{red}{$C_1$}};
  \draw (2.4,0.25) node {\textcolor{red}{$C_2$}};
\end{tikzpicture}
\end{center}
\caption[$\ClassSingleEntry$ is neither necessary for entropy rate preservation nor for weak $k$-lumpability]{(Colour online) The transition graph of a Markov chain with the lumping represented by red boxes. The lumping is not $\ClassSingleEntry$ (violated by transitions from $a$ into $B$). On the other hand, the existence of the uniquely represented states $C_1$ and $C_2$ allows to distinguish between the trajectories $(a,b_1,c_1,a)$ and $(a,b_2,c_2,a)$. Therefore, the lumping preserves the entropy rate. Furthermore, this lumping is weakly $1$-lumpable and strongly $2$-lumpable, but not strongly $1$-lumpable. Hence it shows that $\ClassSingleEntry$ is neither necessary for entropy rate preservation nor for weak $k$-lumpability. This also applies to $\ClassSFS{k}$, a subclass of $\ClassSingleEntry$.}
\label{fig:SESXCounterExample}
\end{figure}
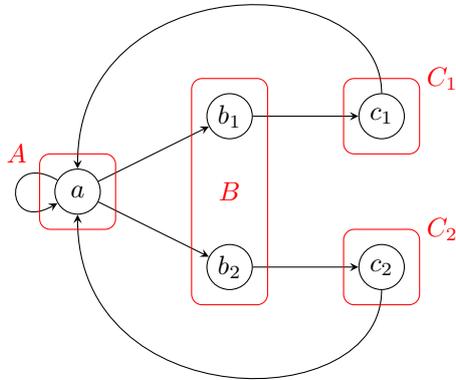

Our second sufficient condition preserves the entropy rate and guarantees higher-order lumpability:
\begin{Def}\label{def:singleforwardsequence}
For $k\ge 2$, a lumping $\Lumping$ has the \emph{single forward $k$-sequence property} (short: $\ClassSFS{k}$), iff
\begin{multline}\label{eq:singleforwardsequence}
 \ForAll\Sequence{y}\in\SpaceY^{k-1},y\in\SpaceY:
 \Exists \Sequence{x'}\in \LumpingPreimage(\Sequence{y}):
 \ForAll x\in \LumpingPreimage(y),\Sequence{x}\in \LumpingPreimage(\Sequence{y})\setminus\Set{\Sequence{x'}}:
 \\\Proba(
  X_{\IntegerSet{k-1}}=\Sequence{x}
  |Y_{\IntegerSet{k-1}}=\Sequence{y},X_0=x
 )=0\,,
\end{multline}
i.e., there is \emph{at most one} realisable sequence in the preimage $\LumpingPreimage(\Sequence{y})$ starting in $y$.
\end{Def}

The $\ClassSFS{k}$ property implies entropy rate preservation and strong $k$-lumpability:
\begin{Prop}\label{prop:UEimpliesklump}
If $\Lumping$ is $\ClassSFS{k}$, then it is strongly $k$-lumpable and $\ClassSingleEntry$.
\end{Prop}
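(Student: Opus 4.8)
The plan is to prove the two conclusions separately: $\ClassSingleEntry$ by a contrapositive trace-back argument, and strong $k$-lumpability by showing that $k$ consecutive lumped symbols already determine the underlying state.

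For $\ClassSFS{k}\Then\ClassSingleEntry$ I would argue by contraposition. Suppose $\ClassSingleEntry$ fails, so there are $x\in\SpaceX$ and distinct $x_1\neq x_2$ with $\Proba(X_1=x_i|X_0=x)>0$ for $i\in\Set{1,2}$ and $\LumpingFunction(x_1)=\LumpingFunction(x_2)$. Since $X$ is irreducible on a finite state space, every state has a predecessor, so I would trace back $k-2$ realisable steps from $x$ to obtain a realisable path $v_0\to v_1\to\dotsb\to v_{k-2}=x$ (for $k=2$ this path is just $v_0=x$). Appending the ambiguous last step gives the two hidden trajectories $(v_1,\dotsc,v_{k-2},x_1)$ and $(v_1,\dotsc,v_{k-2},x_2)$, both realisable from $X_0=v_0$ under the stationary (full-support) chain. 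They differ only in the last coordinate, yet share the lumped image $\Sequence{y}:=(\LumpingFunction(v_1),\dotsc,\LumpingFunction(v_{k-2}),\LumpingFunction(x_1))\in\SpaceY^{k-1}$. Two distinct realisable preimage sequences for the same pair $(\Sequence{y},\LumpingFunction(v_0))$ contradict $\ClassSFS{k}$. The one point to get right is to place the ambiguous step at the \emph{end} of the length-$(k-1)$ window, so that no further lumped coordinates have to be matched; the trace-back merely supplies enough earlier coordinates to fill the window.

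For strong $k$-lumpability the key lemma is that under $\ClassSFS{k}$ the state $X_{n-1}$ is an almost-surely deterministic function $x^\ast$ of the last $k$ lumped symbols $Y_{[n-k,n-1]}$, and that $x^\ast$ depends only on $\TransitionGraph$ and $\LumpingFunction$, not on the law of $X_0$. This is $\ClassSFS{k}$ applied to the window starting at time $n-k$: conditioning on $Y_{[n-k,n-1]}=\Sequence{y}$ with positive probability forces $X_{n-k}\in\LumpingPreimage(y_{n-k})$, and the clause of~\eqref{eq:singleforwardsequence} that is uniform over the starting state leaves only one realisable hidden continuation with the prescribed image, hence a single value $x^\ast=x^\ast(\Sequence{y}_{[n-k,n-1]})$ for $X_{n-1}$. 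Because this uses only positivity of transition probabilities, the conclusion persists verbatim for every initial distribution. Given the lemma I would finish by a Markov reduction: for all $n$ and $m\in[k,n]$,
\begin{equation*}
 \Proba(Y_n=y_n|Y_{[n-m,n-1]}=\Sequence{y})
 =\Proba(Y_n=y_n|X_{n-1}=x^\ast)
 =\sum_{x_n\in\LumpingPreimage(y_n)}\Proba(X_n=x_n|X_{n-1}=x^\ast)\,,
\end{equation*}
where the first equality uses that $X_{n-1}=x^\ast$ almost surely on the conditioning event together with the Markov property of $X$ (which makes $Y_n=\LumpingFunction(X_n)$ conditionally independent of $Y_{[n-m,n-1]}$ given $X_{n-1}$), and the second is the one-step lumped transition. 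The right-hand side depends only on $\Sequence{y}_{[n-k,n-1]}$ and on $\Lumping$, which is exactly $\ClassKMarkov$ with transition probabilities independent of the starting distribution, i.e. strong $k$-lumpability. (Having already established $\ClassSingleEntry$, one could instead prove only the stationary case and invoke corollary~\ref{cor:SEweakStrong} to upgrade weak to strong $k$-lumpability.)

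I expect the main obstacle to be the careful justification of the determinism lemma for \emph{arbitrary} starting distributions and its interaction with the conditioning. One must check that restricting the support of $X_0$ only shrinks the set of realisable trajectories, so the ``at most one'' clause of $\ClassSFS{k}$ is inherited, and that the almost-surely determined value $x^\ast(\Sequence{y}_{[n-k,n-1]})$ genuinely does not depend on the remaining past $Y_{[n-m,n-k-1]}$ nor on the law of $X_0$. The trace-back in the first part and the conditional-independence bookkeeping in the second are then routine once the window is aligned correctly.
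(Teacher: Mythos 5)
Your proposal is correct, and its two halves relate differently to the paper's proof. The $\ClassSingleEntry$ part is essentially the paper's argument: the paper also argues by contradiction, choosing a realisable path $\Sequence{x}_{\IntegerSet{0,k-3}}$ ending with a positive-probability transition into the state $x^\star$ from which two states of the same lump are reachable, so that the ambiguous step sits at the end of the length-$(k-1)$ window --- exactly your trace-back construction. The strong $k$-lumpability part, however, takes a genuinely different route. The paper verifies the technical condition~\eqref{eq:klump:probabilities} in the \emph{stationary} setting only --- using~\eqref{eq:singleforwardsequence} to collapse $p_{Y_k|Y_{\IntegerSet{k-1}},X_0}(y|\Sequence{y},x)$ onto the unique preimage sequence and then the \NameMarkov{} property of $X$, showing the result depends on $x$ only through $\LumpingFunction(x)$ --- and then invokes proposition~\ref{prop_klump_strong}, a Kemeny--Snell-type equivalence between~\eqref{eq:klump:probabilities} and strong $k$-lumpability whose Chapman--Kolmogorov proof handles arbitrary starting distributions once and for all. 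You instead prove strong lumpability directly: your determinism lemma (the current hidden state is an almost-surely deterministic function of the last $k$ lumped symbols, for \emph{every} initial law, because the witness $\Sequence{x'}$ in~\eqref{eq:singleforwardsequence} is uniform over the starting lump) plus a Markov reduction establishes $\ClassKMarkov$ with a kernel independent of the initial distribution, without any appeal to the equivalence proposition. Your argument is self-contained and makes explicit the structural fact that $\ClassSFS{k}$ lets the lumped trajectory pin down the hidden state; the cost is that you re-derive, in this deterministic special case, the direction of proposition~\ref{prop_klump_strong} that the paper gets for free from machinery it needs anyway for theorem~\ref{thm:klump:equalities}. Your parenthetical fallback --- prove only weak $k$-lumpability in the stationary case and upgrade via corollary~\ref{cor:SEweakStrong}, since you already have $\ClassSingleEntry$ --- is also a valid route entirely inside the paper's framework. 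The subtleties you flag (inheritance of the ``at most one'' clause under arbitrary initial laws, and independence of $x^\ast$ from the pre-window past) are real and are resolved exactly as you indicate: conditional on $X_{n-k}=x$, the law of the ensuing window is determined by $\TransitionMatrix$ alone, so~\eqref{eq:singleforwardsequence} transfers verbatim.
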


Figure~\ref{fig:LossLumpability} (page~\pageref{fig:LossLumpability}) gives an overview of the various classes and examples, in particular showing that the sufficient conditions are not necessary. Figures~\ref{fig:SESXCounterExample} (page~\pageref{fig:SESXCounterExample}) and~\ref{fig:SFSnotStrong} (page~\pageref{fig:SFSnotStrong}) show that $\ClassSFS{2}$ is neither necessary for weak $1$-lumpability, nor for entropy rate preservation, nor for $\ClassSingleEntry$. Figure~\ref{fig:SESFSCounterExample} (page~\pageref{fig:SESFSCounterExample}) shows that $\ClassSingleEntry$ does neither imply $\ClassSFS{k}$ nor strong $k$-lumpability, for every $k$. Figure~\ref{fig:SFS2Example} (page~\pageref{fig:SFS2Example}) gives an example of a lumping being $\ClassSFS{2}$ and not strongly $1$-lumpable. Finally, example~\ref{exam:strong2notSFS2} (page~\pageref{exam:strong2notSFS2}) gives a strongly $2$-lumpable lumping which is not $\ClassSFS{2}$.

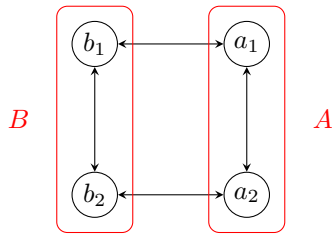
\begin{figure}
\begin{center}
\begin{tikzpicture}[scale=2,>=stealth,rounded corners]
  \drawnode{b1}{1,1}{$b_1$}  \drawnode{b2}{1,0}{$b_2$}
  \drawnode{a}{2,1}{$a_1$}  \drawnode{c}{2,0}{$a_2$}

  \draw[<->] (a) -- (b1);
  \draw[<->] (b1) -- (b2); 
  \draw[<->] (a) -- (c);
  \draw[<->] (c) -- (b2);

  \draw[draw=red] (0.75,-0.25) rectangle (1.25,1.25);
  \draw[draw=red] (1.75,-0.25) rectangle (2.25,1.25);
  \draw (0.5,0.5) node {\textcolor{red}{$B$}};
  \draw (2.5,0.5) node {\textcolor{red}{$A$}};
\end{tikzpicture}
\end{center}
\caption[$\ClassSFS{k}$ is neither necessary for entropy rate preservation nor for strong lumpability]{
(Colour online) The transition graph of a \NameMarkov{} chain with the lumping represented by red boxes. The lumping is $\ClassSingleEntry$ and thus preserves the entropy rate. Furthermore, if all transitions have probability $1/2$, it is strongly $1$-lumpable and thus $\Entropy{Y_1|X_0}=\Entropy{Y_1|Y_0}$ (see theorem~\ref{thm:klump:equalities}). However, observing an arbitrarily long trajectory of the lumped process does not determine the current preimage state. Whence $\Lumping$ is not $\ClassSFS{k}$, for every $k$. Therefore, $\ClassSFS{k}$ is neither necessary for entropy rate preservation nor for strong lumpability.}
\label{fig:SFSnotStrong}
\end{figure}

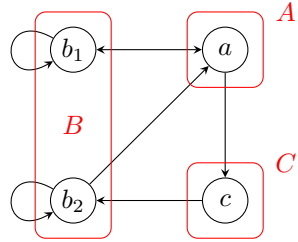
\begin{figure}
\begin{center}
\begin{tikzpicture}[scale=2,>=stealth,rounded corners]
  \drawnode{b1}{1,1}{$b_1$}  \drawnode{b2}{1,0}{$b_2$}
  \drawnode{a}{2,1}{$a$}  \drawnode{c}{2,0}{$c$}

  \draw[<->] (a) -- (b1); 
  \draw[->] (a) -- (c);
  \draw[->] (c) -- (b2);
  \draw[->] (b2) -- (a);
  \selfedge{b1}{180};\selfedge{b2}{180};

  \draw[draw=red] (0.75,-0.25) rectangle (1.25,1.25);
  \draw[draw=red] (1.75,0.75) rectangle (2.25,1.25);
  \draw[draw=red] (1.75,-0.25) rectangle (2.25,0.25);
  \draw (1,0.5) node {\textcolor{red}{$B$}};
  \draw (2.4,1.25) node {\textcolor{red}{$A$}};
  \draw (2.4,0.25) node {\textcolor{red}{$C$}};
\end{tikzpicture}
\end{center}
\caption[$\ClassSingleEntry$ does neither imply $\ClassSFS{k}$ nor strong $k$-lumpability]{
(Colour online) The transition graph of a \NameMarkov{} chain with the lumping represented by red boxes. The lumping is $\ClassSingleEntry$. The loops at $b_1$ and $b_2$ imply that the lumped process is not $\ClassKMarkov$, for every $k$ and regardless of the distribution of $X_0$. This is easily seen by the inability to differentiate between $n$ consecutive $b_1$'s and $n$ consecutive $b_2$'s. When starting in $B$ and as long as $\Proba(X_1=a|X_0=b_1)\not=\Proba(X_1=a|X_0=b_2)$ and $\Proba(X_1=b_1|X_0=b_1)\not=\Proba(X_1=b_2|X_0=b_2)$, this long sequence of $B$s prevents determining the probability of entering $A$. Thus it is neither $\ClassSFS{k}$ nor strongly $k$-lumpable, for each $k$.
}
\label{fig:SESFSCounterExample}
\end{figure}

\begin{figure}
\begin{center}
\begin{tikzpicture}[scale=2,>=stealth,rounded corners]
  \drawnode{b1}{1,1}{$b_1$}  \drawnode{b2}{1,0}{$b_2$}
  \drawnode{c}{2,1.5}{$c$}
  \drawnode{a1}{3,1}{$a_1$}  \drawnode{a2}{3,0}{$a_2$}

  \draw[->] (a2) -- (a1); 
  \draw[->] (a2) -- (c);
  \draw[<->] (c) -- (a1);
  \draw[<->] (a2) -- (b2);
  \draw[->] (b2) -- (b1);
  \draw[->] (b1) -- (c);
  \draw[<->] (c) -- (b2);
  \selfedge{a1}{0};\selfedge{b1}{180};\selfedge{c}{90};

  \draw[draw=red] (0.75,-0.25) rectangle (1.25,1.25);
  \draw[draw=red] (2.75,-0.25) rectangle (3.25,1.25);
  \draw[draw=red] (1.75,1.25) rectangle (2.25,1.75);
  \draw (0.6,0.5) node {\textcolor{red}{$B$}};
  \draw (3.4,0.5) node {\textcolor{red}{$A$}};
  \draw (2.4,1.75) node {\textcolor{red}{$C$}};
\end{tikzpicture}
\end{center}
\caption[An $\ClassSFS{2}$ lumping]{(Colour online) The transition graph of a \NameMarkov{} chain with the lumping represented by red boxes. After at most two steps one either enters a new lumped state at a unique original state or is circling in either $b_1$ or $a_1$. Hence, this lumping is $\ClassSFS{2}$ and not strongly $1$-lumpable. The space of \NameMarkov{} chains with this transition graph contains at least the interior of a multi-simplex in $\RealNum^{13}$, parametrised by $8$ parameters ($13$ directed edges minus $5$ nodes).}
\label{fig:SFS2Example}
\end{figure}
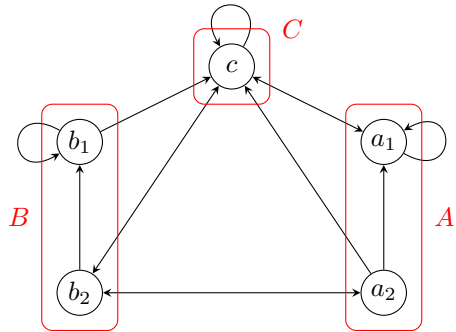

\begin{Exam}\label{exam:strong2notSFS2}
Consider the following transition matrix, where the lines divide lumped states:
\begin{equation*}
P:=
\left[
\begin{array}{cc|cc}
 0.6   &  0.4  &  0    &  0  
 \\
 0.3  &  0.2 &  0.1 &  0.4  
 \\ \hline
 0.2   &  0.05 &  0.375 &  0.375 
 \\
  0.2   &  0.05 &  0.375 &  0.375  
\end{array}
\right]\,.
\end{equation*}
This lumping is strongly $2$-lumpable and satisfies~\eqref{eq:klump:boundsequal} with $\EntropyRate{Y}=\Entropy{Y_2|Y_{[0,1]}}=\Entropy{Y_2|Y_1,X_0}=0.733$ (with an accuracy of $0.001$). However, it does not preserve entropy: $1.480=\EntropyRate{X}>\EntropyRate{Y}$, whence it is neither $\ClassSingleEntry$ nor $\ClassSFS{2}$.
\end{Exam}

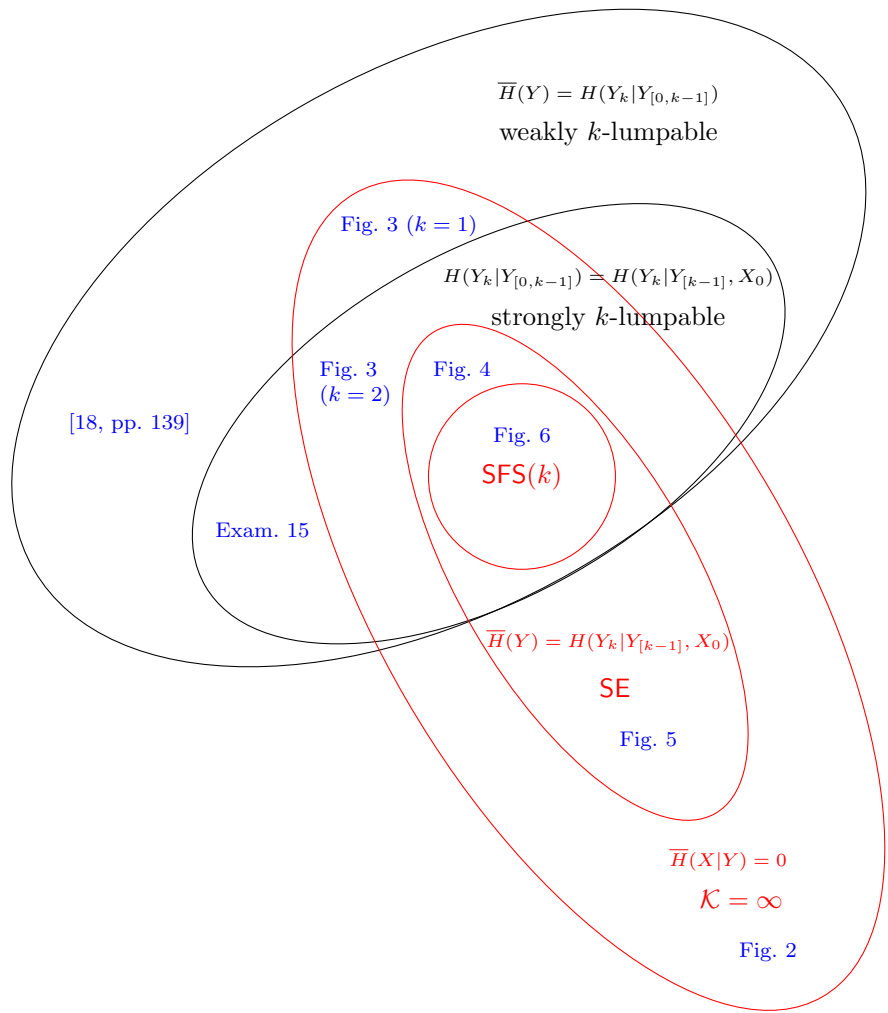
\begin{figure}
 \centering
 \begin{tikzpicture}[scale=1.75,>=stealth,rounded corners]
	\draw[rotate=30] (0,0) ellipse (70pt and 37pt);
	\draw[rotate=30] (0.0,0.75) ellipse (100pt and 58pt);

	\draw[rotate=-60,color=red] (1.3,-0) ellipse (60pt and 25pt);
	\draw[rotate=-60,color=red] (1.5,0) ellipse (100pt and 45pt);
	\draw[color=red] (0.25,-0.4) circle (20pt);

	\draw (0.25,-0.4) node {\textcolor{red}{$\ClassSFS{k}$}};

	\draw (0.95,-2) node {\textcolor{red}{$\ClassSingleEntry$}};
	\draw (0.90,-1.65) node {\scriptsize \textcolor{red}{$\EntropyRate{Y}=\Entropy{Y_k|Y_{[k-1]},X_0}$}};
	
	\draw (1.9,-3.6) node {\textcolor{red}{$\Kappa=\infty$}};
	\draw (1.8,-3.3) node {\scriptsize\textcolor{red}{$\EntropyRate{X|Y}=0$}};

	\draw (0.9,0.8) node {strongly $k$-lumpable};
	\draw (0.9,1.1) node {\scriptsize $\Entropy{Y_k|Y_{[0,k-1]}}=\Entropy{Y_k|Y_{[k-1]},X_0}$};

	\draw (0.9,2.2) node {weakly $k$-lumpable};
	\draw (0.9,2.5) node {\scriptsize $\EntropyRate{Y}=\Entropy{Y_k|Y_{[0,k-1]}}$};

	\draw (1.2,-2.4) node {\footnotesize \textcolor{blue}{Fig.~\ref{fig:SESFSCounterExample}}};
	\draw (-0.7,0.3) node {\footnotesize \textcolor{blue}{\parbox{2cm}{Fig.~\ref{fig:SESXCounterExample}\\($k=2$)}}};
	\draw (-0.6,1.5) node {\footnotesize \textcolor{blue}{Fig.~\ref{fig:SESXCounterExample}~($k=1$)}};
	\draw (0.25,-0.1) node {\footnotesize \textcolor{blue}{Fig.~\ref{fig:SFS2Example}}};
	\draw (-0.2,0.4) node {\footnotesize \textcolor{blue}{Fig.~\ref{fig:SFSnotStrong}}};
	\draw (-1.7,-0.8) node {\footnotesize \textcolor{blue}{Exam.~\ref{exam:strong2notSFS2}}};
	\draw (-2.7,0) node {\footnotesize \textcolor{blue}{{\cite[pp.~139]{Kemeny_Snell__FiniteMarkovChains__Springer_1976}}}};
	\draw (2.1,-4) node {\footnotesize \textcolor{blue}{Fig.~\ref{fig:LosslessNotWeak}}};
\end{tikzpicture}
 \caption[Relation between different classes and location of counterexamples in this paper]{(Colour online) Venn diagram of the relation between different classes and location of counterexamples in this paper.}
 \label{fig:LossLumpability}
\end{figure}

%===============================================================================
\subsection{Further discussion}
\label{sec:discussion}
%===============================================================================

The study of functions of \NameMarkov{} chains has a long tradition. In particular, whether a function of a \NameMarkov{} chain possesses the \NameMarkov{} property or not~\cite{Burke_Rosenblatt__AMarkovianFunctionOfAMarkovChain__AMS_1958,Rogers_Pitman__MarkovFunctions__AP_1981}. Kemeny \& Snell~\cite{Kemeny_Snell__FiniteMarkovChains__Springer_1976} coined the term \emph{lumpability} for retaining the \NameMarkov{} property. \NameGurvitsLedoux{}~\cite{Gurvits_Ledoux__MarkovPropertyForAFunctionOfAMarkovChain_ALinearAlgebraApproach} analysed higher-order lumpability, as we use it in this work. They showed that the class of \NameMarkov{} chains being lumpable is nowhere dense.\\

A related problem is the \emph{identification problem}, initially posed by Blackwell \& Koopmanns~\cite{Blackwell_Koopmans__OnTheIdentifiabilityProblemForFunctionsOfFiniteMarkovChains__AMS_1957}: given a stationary process on a finite state space, is it representable by a lumping of a \NameMarkov{} chain? The question of existence of a finite state space representation has a long tradition~\cite{Gilbert__OnTheIdentifiabilityProblemForFunctionsOfFiniteMarkovChains__AMS_1959,Heller__OnStochasticProcessesDerivedFromMarkovChains__AMS_1965,Anderson__TheRealizationProblemForHiddenMarkovModels__MCSS_1999}, without a definite algorithmic solution. Two results from research into this topic have a connection to the present work.\\

First, \NameCarlyle{}~\cite{Carlyle__IdentificationOfStateCalculableFunctionsOfFiniteMarkovChains__AMS_1967} shows that every stationary stochastic process on a finite state space is representable as a lumping of a \NameMarkov{} chain on an at most countable state space. The representation is $\ClassSingleEntry$. If it involves a \NameMarkov{} chain on a finite state space, then proposition~\ref{prop:SESXpreserves} guarantees entropy rate preservation of the representation.\\

Second, Gilbert~\cite{Gilbert__OnTheIdentifiabilityProblemForFunctionsOfFiniteMarkovChains__AMS_1959} shows that the distribution of a lumping of a finite-state \NameMarkov{} chain is uniquely determined by the distribution of $m$ consecutive samples, where $m$ depends on the cardinalities of the input and output alphabet. This does not contradict the nowhere dense result of \NameGurvitsLedoux{}, however, since the construction of the process distribution is different from a product of conditional distributions (as it is in the case of lumpability).\\

Moreover, the nowhere dense property does not prevent our results from being practically relevant. In particular, our sufficient condition holds for non-trivial lower-dimensional subspaces of the space of \NameMarkov{} transition matrices. See figure~\ref{fig:SFS2Example} (page~\pageref{fig:SFS2Example}). In other words, if the transition matrix is sufficiently sparse, one can hope that the lumping satisfies some of our sufficient conditions. More generally, one can hope that for a given \NameMarkov{} model there exists a lumping function with a desired output alphabet size such that the resulting lumping satisfies our sufficient conditions. Sparse transition matrices appear, e.g., in $n$-gram models in automatic speech recognition~\cite[Table~1]{Brown_deSouza_Mercer_Pietra_Lai__ClassBasedNGramModelsOfNaturalLanguage__CL_1992}, chemical reaction networks~\cite{Henzinger_Mikeev_Mateescu_Wolf__HybridNumericalSolutionOfTheChemicalMasterEquation__CMSB_2010,Heiner_Rohr_Schwarick_Streif__AComparativeStudyOfStochasticAnalysisTechniques__CMSB_2010,Wilkinson_StochasticModellingForSystemsBiology__TF_2011} and link prediction and path analysis~\cite{Sarukkai__LinkPredictionAndPathAnalysisUsingMarkovChains__CN_2000}. That the sufficient conditions for entropy preservation and weak $k$-lumpability are not
overly restrictive was recently shown for a letter bi-gram model~\cite{Geiger_Temmel__InformationPreservingMarkovAggregation__ITW_2013}: The bi-gram model exhibited the $\ClassSFS{2}$-property and thus permitted lossless compression.\\

In the non-stationary case, i.e. with $X_0$ having a different distribution than the invariant one, we are still \emph{stationary in the asymptotic mean}~\cite{Kieffer_Rahe__MarkovChannelsAreAsymptoticallyMeanStationary__JMA_1981,Gray__EntropyAndInformationTheory__Springer_2009}. In particular, we have entropy rates and an ergodic theorem. Hence, all statements of this paper should generalise to this setting. Whether we can drop the restriction to aperiodic and irreducible chains is a more difficult question.\\

We give crude upper bounds on the algorithmic complexity of checking the properties introduced in the present paper. By proposition~\ref{prop:kappaBound}, determining the finiteness and value of $\Kappa$ takes at most $\mathcal{O}(\Cardinality{\SpaceY}\exp(1+\Cardinality{\SpaceX}^2))$ steps. We can check the $\ClassSingleEntry$ property in $\mathcal{O}(\Cardinality{\SpaceX}^2)$ steps and the $\ClassSFS{k}$ property in $\mathcal{O}(\Cardinality{\SpaceX}^k)$ steps. Finally, the verification of strong $k$-lumpability via~\eqref{eq:klump:boundsequal} requires $\mathcal{O}(\Cardinality{\SpaceX}^{k+1})$ steps. The last bound is of a similar order as \NameGurvitsLedoux{}'s algorithm for weak $k$-lumpability~\cite[Section~2.2.2]{Gurvits_Ledoux__MarkovPropertyForAFunctionOfAMarkovChain_ALinearAlgebraApproach}. Details are in section~\ref{sec_algorithmic}.\\

There is another notion of information loss through lumping: Lindqvist~\cite{Lindqvist__OnTheLossOfInformationIncurredByLumpingStatesOfAMarkovChain__SJS_1978} discusses sufficient statistics for estimating $X_0$ from $Y_n$. \NameGurvitsLedoux{} introduced $g$-observability~\cite[Section~3]{Gurvits_Ledoux__MarkovPropertyForAFunctionOfAMarkovChain_ALinearAlgebraApproach} for determining $X_0$ from $Y_{\IntegerSet{0,n}}$. Simple examples show that entropy rate preservation is independent of $g$-observability. See section~\ref{sec_gObservability}.
%===============================================================================
\section{Proof of entropy rate preservation}
\label{sec:proof:entropyrate}
%===============================================================================

\begin{proof}[Proof of theorem~\ref{thm:entropycharacterisation}]

Statement~\eqref{eq:entropycharacterisation} follows from the mutually exhaustive implications
\begin{subequations}\label{eq:entropyrate}
\begin{align}
 \label{eq:entropyrate:loss}
 \Kappa<\infty
 &\Then \EntropyRate{X|Y}>0\,,
 \\
 \label{eq:entropyrate:preservation}
 \Kappa=\infty
 &\Then \EntropyRate{X|Y}=0
\end{align}
\end{subequations}
and
\begin{subequations}\label{eq:trajectory}
\begin{align}
 \label{eq:trajectory:explosion}
 \Kappa<\infty
 &\Then \Exists C>1:
 \Proba\left(\liminf_{n\to\infty}\sqrt[n]{T_n}\ge C\right)=1\,,
 \\
 \label{eq:trajectory:bound}
 \Kappa=\infty
 &\Then \Exists C<\infty: \Proba\left(\sup_{n\to\infty} T_n \le C\right)=1\,.
\end{align}
\end{subequations}

The proofs of implications~\eqref{eq:entropyrate:preservation} and~\eqref{eq:trajectory:bound} and of proposition~\ref{prop:blockEntropyBound} are in section~\ref{sec:proof:preservation} and the proofs of implications~\eqref{eq:entropyrate:loss} and~\eqref{eq:trajectory:explosion} and of proposition~\ref{prop:kappaBound} are in section~\ref{sec:proof:loss}. Sections~\ref{sec:nonOverLappingTraversalInstants} and~\ref{sec:conditionalMarkovProperty} contain technical results about \NameMarkov{} chains needed in the proof of the loss case in section~\ref{sec:proof:loss}.

\end{proof}
%===============================================================================
\subsection{The preservation case}
\label{sec:proof:preservation}
%===============================================================================
The definition of $\Kappa$ in~\eqref{eq:kappa} implies that lumped trajectories of length less than $\Kappa$ have a unique preimage contingent on the endpoints, i.e., if $n<\Kappa$, then $\ForAll\check{x},\hat{x}\in\SpaceX,\Sequence{y}\in\SpaceY^n$:
\begin{multline}\label{eq:uniquePreimage}
 \Proba(X_0=\check{x},Y_{\IntegerSet{n}}=\Sequence{y},X_{n+1}=\hat{x})>0
 \\\Then
 \ExistsUnique\Sequence{x}\in\SpaceX^n:
 \Proba(
  X_{\IntegerSet{n}}=\Sequence{x}|
  X_0=\check{x},Y_{\IntegerSet{n}}=\Sequence{y},X_{n+1}=\hat{x}
 )=1\,.
\end{multline}

\begin{proof}[Proof of proposition~\ref{prop:blockEntropyBound}]
We assume $n-2<\Kappa$. The unique preimage~\eqref{eq:uniquePreimage} implies that the conditional entropy of the interior of a block, given its lumped image and the states at its ends, is zero:
\begin{multline}\label{eq:uniquePreimageEntropy}
 \Entropy{X_{[2,n-1]}|X_1,X_n,Y_{\IntegerSet{n}}}
 \\=
 \sum_{\substack{\Sequence{y}\in\SpaceY^n\\\check{x},\hat{x}\in\SpaceX}}
  \Proba(X_1=\check{x},X_n=\hat{x},Y_{\IntegerSet{n}}=\Sequence{y})
  \underbrace{
  \Entropy{X_{[2,n-1]}|X_1=\check{x},X_n=\hat{x},Y_{\IntegerSet{n}}=\Sequence{y}}
  }_{=0\text{ by~\eqref{eq:uniquePreimage}}}
 \\=0\,.
\end{multline}
We apply the chain rule of entropy (cf.~\cite[pp.~22]{Cover_Thomas__ElementsOfInformationTheory_Ed2__Wiley_2006}) to decompose the conditional block entropy into its interior and its boundary. The interior vanishes by~\eqref{eq:uniquePreimageEntropy} and the entropy at the endpoints is maximal for the uniform distribution:
\begin{align*}
 \Entropy{X_{\IntegerSet{n}}|Y_{\IntegerSet{n}}}
 &=\Entropy{X_{[2,n-1]}|X_1,X_n,Y_{\IntegerSet{n}}} + \Entropy{X_1,X_n|Y_{\IntegerSet{n}}}
 \\&\le 0 + \Entropy{X_1,X_n|Y_1,Y_n}
 \\&\le \Entropy{X_1|Y_1} + \Entropy{X_n|Y_n}
 \\&\le 2\max\Set{\ld\Cardinality{\LumpingPreimage(y)}: y\in\SpaceY}
 \\&\le 2\ld(\Cardinality{\SpaceX}-\Cardinality{\SpaceY}+1)\,.
\end{align*}
\end{proof}

\begin{proof}[Proof of~\eqref{eq:entropyrate:preservation}]
As $\Kappa=\infty$, the bound from~\eqref{eq:blockEntropyBound} holds uniformly. Thus
\begin{equation*}
 \EntropyRate{X|Y}
 =\LimN\OneOverN\Entropy{X_{\IntegerSet{n}}|Y_{\IntegerSet{n}}}
 \le\LimN\frac{2\ld(\Cardinality{\SpaceX}-\Cardinality{\SpaceY}+1)}{n}
 =0\,.
\end{equation*}
\end{proof}

\begin{proof}[Proof of~\eqref{eq:trajectory:bound}]
Recall that we assume $\Kappa=\infty$. We show that, for all $\Sequence{y}\in\SpaceY^n$ with $\Proba(Y_{\IntegerSet{n}}=\Sequence{y})>0$, we have
\begin{equation}\label{eq:trajectory:bound:explicit}
 \Proba(
  T_n\le(\Cardinality{\SpaceX}-\Cardinality{\SpaceY}+1)^2
  |Y_{\IntegerSet{n}}=\Sequence{y}
 )=1\,.
\end{equation}
This implies~\eqref{eq:trajectory:bound}. To show~\eqref{eq:trajectory:bound:explicit}, we use~\eqref{eq:uniquePreimage} to bound
\begin{align*}
 &\sum_{\Sequence{x}\in \LumpingPreimage(\Sequence{y})}
  \Iverson{\Proba(X_{\IntegerSet{n}}=\Sequence{x})>0}\,.
 \\=& \sum_{x_1,x_n\in \LumpingPreimage(\Sequence{y}_{\Set{1,n}})}
  \Iverson{\Proba(X_1=x_1,X_n=x_n|Y_{\IntegerSet{n}}=\Sequence{y})>0}
 \\ &\times\sum_{\Sequence{x}\in \LumpingPreimage(\Sequence{y}_{[2,n-1]})}
  \Iverson{\Proba(X_{[2,n-1]}=\Sequence{x}|X_1=x_1,X_n=x_n,Y_{[2,n-1]}=\Sequence{y}_{[2,n-1]})>0}
 \\\le& \sum_{x_1,x_n\in \LumpingPreimage(\Sequence{y}_{\Set{1,n}})}
  \Iverson{\Proba(X_1=x_1,X_n=x_n|Y_{\IntegerSet{n}}=\Sequence{y})>0}
 \\\le&\, \Cardinality{\LumpingPreimage(\Sequence{y}_{\Set{1,n}})}\
  \le (\Cardinality{\SpaceX}-\Cardinality{\SpaceY}+1)^2\,.
\end{align*}
\end{proof}
%===============================================================================
\subsection{Non-overlapping traversal instants}
\label{sec:nonOverLappingTraversalInstants}
%===============================================================================

The main result of this section is an almost-sure \emph{linear lower growth bound} for non-overlapping occurrences of a fixed, finite pattern in a realisation in proposition~\ref{prop:lowerBound}.\\

Let $Z:=(Z_n)_{n\in\NatNum}$ be a stationary stochastic process taking values in $\SpaceZ$. The \emph{occupation instants} of a state $z$ is the set of indices
\begin{subequations}\label{eq:instants}
\begin{equation}\label{eq:instantsOccupation}
 \InstantsOccupation{Z}{z}{n}:=\Set{i\in\IntegerSet{n}: Z_i=z}\,.
\end{equation}
The classic \emph{occupation time}~\cite[section 6.4]{Parzen__StochasticProcesses__SIAM_1999} is the cardinality of the occupation instants. The \emph{traversal instants} of a sequence $\Sequence{z}\in\SpaceZ^{k}$ is the set of indices
\begin{equation}\label{eq:instantsTraversal}
 \InstantsTraversal{Z}{\Sequence{z}}{n}:=\Set{
  i\in\IntegerSet{n-k+1}:
  Z_{\IntegerSet{i,i+k-1}}=\Sequence{z}
 }\,.
\end{equation}
The \emph{non-overlapping traversal instants} of a sequence $\Sequence{z}\in\SpaceZ^{k}$ is the set of indices
\begin{equation}\label{eq:instantsNonOverLappingTraversal}
 \InstantsNonOverLappingTraversal{Z}{\Sequence{z}}{n}
 :=\BigSet{i\in [n-k+1]:
  \begin{gathered}
   Z_{[i,i+k-1]}=\Sequence{z}\\
   \ForAll j\in [i+1,i+k-1]: Z_{[j,j+k-1]}\not=\Sequence{z}
  \end{gathered}
 }\,,
\end{equation}
\end{subequations}
where we select lower indices greedily.\\

For $k\in\NatNum$, the \emph{$k$-transition process} $Z^{(k)}$ of $Z$ is the stochastic process on $\SpaceZ^k$ with marginals
\begin{equation}\label{eq:transitionProcess}
 \Proba(Z^{(k)}_{\IntegerSet{n}}=(\Sequence{z}^{i})_{i=1}^n)
 =\Proba(
  Z_{\IntegerSet{n-1}}=(\Sequence{z}_\Set{1}^{i})_{i=1}^{n-1},
  Z_{\IntegerSet{n,n+k-1}}=\Sequence{z}^{n}
 )\,,
\end{equation}
if $\ForAll i\in\IntegerSet{n-1}: \Sequence{z}^{i}_{\IntegerSet{2,k}}=\Sequence{z}^{i+1}_{\IntegerSet{k-1}}$, and zero else.
\begin{subequations}\label{eq:instantRelations}
Obvious relations are
\begin{equation}\label{eq:instantRelation:equality}
 \InstantsTraversal{Z}{\Sequence{z}}{n}=\InstantsOccupation{Z^{(k)}}{\Sequence{z}}{n-k}
\end{equation}
and
\begin{equation}\label{eq:instantRelation:subset}
 \InstantsNonOverLappingTraversal{Z}{\Sequence{z}}{n}
 \subseteq
 \InstantsTraversal{Z}{\Sequence{z}}{n}
 \quad\text{ with }\quad
 \Cardinality{\InstantsNonOverLappingTraversal{Z}{\Sequence{z}}{n}}
 \ge\frac{1}{k}\Cardinality{\InstantsTraversal{Z}{\Sequence{z}}{n}}\,.
\end{equation}
\end{subequations}

\begin{Prop}\label{prop:lowerBound}
Let $\Sequence{s}\in\SpaceX^k$ be realisable with $p:=\Proba(X_{\IntegerSet{k}}=\Sequence{s}|X_1=\Sequence{s}_\Set{1})>0$. Then
\begin{subequations}\label{eq:lowerBound}
\begin{equation}\label{eq:lowerBound:almostSure}
 \Proba\left(
  \liminf_{n\to\infty}\OneOverN
   \Cardinality{\InstantsNonOverLappingTraversal{X}{\Sequence{s}}{n}}
  \ge \frac{p\MeasureX(\Sequence{s}_\Set{1})}{k}
 \right)=1
\end{equation}
and
\begin{equation}\label{eq:lowerBound:probability}
 \ForAll\varepsilon>0:\quad
 \LimN\Proba\left(
  \Cardinality{\InstantsNonOverLappingTraversal{X}{\Sequence{s}}{n}}
  \ge \left(\frac{p\MeasureX(\Sequence{s}_\Set{1})}{k}-\varepsilon\right)n
 \right) = 1\,.
\end{equation}
\end{subequations}
\end{Prop}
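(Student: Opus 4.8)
The plan is to read off both statements from the ergodic theorem applied to the $k$-transition process $X^{(k)}$. First I would use~\eqref{eq:instantRelation:equality} to rewrite the traversal instants as occupation instants, $\InstantsTraversal{X}{\Sequence{s}}{n}=\InstantsOccupation{X^{(k)}}{\Sequence{s}}{n-k}$, so that counting windows of $X$ equal to $\Sequence{s}$ becomes counting visits of $X^{(k)}$ to the single state $\Sequence{s}\in\SpaceX^k$. Since $X$ is stationary and irreducible, the sliding-window process $X^{(k)}$ is stationary and ergodic on its realisable states, with stationary mass on $\Sequence{s}$ equal to $\Proba(X_{\IntegerSet{k}}=\Sequence{s})$. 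By stationarity and the definition of $p$,
\begin{equation*}
 \Proba(X_{\IntegerSet{k}}=\Sequence{s})
 =\MeasureX(\Sequence{s}_\Set{1})\,\Proba(X_{\IntegerSet{k}}=\Sequence{s}|X_1=\Sequence{s}_\Set{1})
 =p\,\MeasureX(\Sequence{s}_\Set{1})=:q\,.
\end{equation*}

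Next I would invoke the ergodic theorem for the occupation time of the stationary ergodic process $X^{(k)}$, which gives $\LimN\OneOverN\Cardinality{\InstantsOccupation{X^{(k)}}{\Sequence{s}}{n}}=q$ almost surely; equivalently, Birkhoff's ergodic theorem applied to $X$ and the shift-average of $\Indicator{X_{\IntegerSet{i,i+k-1}}=\Sequence{s}}$ yields the same limit. Combining this with~\eqref{eq:instantRelation:equality} (the factor $(n-k)/n\to1$ is harmless) gives $\LimN\OneOverN\Cardinality{\InstantsTraversal{X}{\Sequence{s}}{n}}=q$ almost surely, and the lower bound in~\eqref{eq:instantRelation:subset} transfers this growth to the non-overlapping instants:
\begin{equation*}
 \liminf_{n\to\infty}\OneOverN\Cardinality{\InstantsNonOverLappingTraversal{X}{\Sequence{s}}{n}}
 \ge\frac{1}{k}\LimN\OneOverN\Cardinality{\InstantsTraversal{X}{\Sequence{s}}{n}}
 =\frac{q}{k}
 =\frac{p\MeasureX(\Sequence{s}_\Set{1})}{k}
\end{equation*}
almost surely. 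This is~\eqref{eq:lowerBound:almostSure}.

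Finally, I would deduce~\eqref{eq:lowerBound:probability} from~\eqref{eq:lowerBound:almostSure} by bounded convergence. Fix $\varepsilon>0$ and put $A_n:=\Set{\OneOverN\Cardinality{\InstantsNonOverLappingTraversal{X}{\Sequence{s}}{n}}\ge q/k-\varepsilon}$. On the probability-one event where the $\liminf$ is at least $q/k$, one has $\Indicator{A_n}=1$ for all large $n$, hence $\Indicator{A_n}\to1$ almost surely; bounded convergence then yields $\Proba(A_n)=\Expect\Indicator{A_n}\to1$, which is~\eqref{eq:lowerBound:probability}.

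The only genuine work is justifying the ergodicity behind the ergodic theorem: the cleanest route is that a stationary irreducible finite-state \NameMarkov{} chain is ergodic under the shift, so Birkhoff's theorem applies to the bounded function counting windows equal to $\Sequence{s}$; alternatively one verifies directly that $X^{(k)}$ is irreducible on the realisable $k$-windows, connecting any two of them through a path furnished by the irreducibility of $X$. The one-sidedness of the conclusion (a $\liminf$, not a limit) is forced by the greedy non-overlapping selection, for which only the lower bound in~\eqref{eq:instantRelation:subset} is available.
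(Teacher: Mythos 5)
Your proposal is correct and follows essentially the same route as the paper: the paper also passes to the $k$-transition process $X^{(k)}$, identifies its invariant mass at $\Sequence{s}$ as $p\MeasureX(\Sequence{s}_\Set{1})$, applies the Markov-chain ergodic theorem to the occupation instants, and transfers the linear growth to the non-overlapping traversal instants via the factor-$\frac{1}{k}$ bound in~\eqref{eq:instantRelation:subset}. Your bounded-convergence argument is simply an explicit filling-in of the paper's remark that~\eqref{eq:lowerBound:probability} is a direct consequence of~\eqref{eq:lowerBound:almostSure}.
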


\begin{Lem}[Ergodic theorem {\cite[theorem 3.55 on page 69]{Woess__DenumerableMarkovChains__EMS_2009}}]
\label{lem:ergodicTheorem:finite}
For every homogeneous, irreducible and aperiodic \NameMarkov{} chain $Z:=(Z_n)_{n\in\NatNum}$ on a finite state space $\SpaceZ$ with invariant measure $\nu$, all $f:\SpaceZ\to\RealNum$ and each starting distribution $\alpha\in\mathcal{M}_1(\SpaceZ)$ of $Z_1$, we have
\begin{equation}\label{eq:ergodicTheorem:finite}
 \Proba_\alpha\left(
  \LimN \OneOverN \sum_{i=1}^n f(Z_i)
  = \int_\SpaceZ f(z)d\nu(z)
  =:\nu(f)
 \right)=1\,.
\end{equation}
\end{Lem}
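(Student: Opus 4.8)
The plan is to prove this strong law of large numbers for Markov chains by a regeneration argument that reduces the statement to the classical strong law of large numbers for i.i.d.\ sequences. Note first that only irreducibility is really used: aperiodicity, although assumed in the statement, plays no role in the regeneration proof. Fix a reference state $a\in\SpaceZ$. Since $\SpaceZ$ is finite and the chain is irreducible, it is positive recurrent; hence from any starting distribution $\alpha$ the chain almost surely hits $a$ in finite time, and the expected return time $\Expect_a[\tau_a]$ to $a$ is finite, where $\tau_a$ denotes the first return time to $a$ under $\Proba_a$. Let $M:=\max_{z\in\SpaceZ}\Modulus{f(z)}<\infty$.

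First I would introduce the successive visits to $a$. Let $\tau_0$ be the first hitting time of $a$ and, for $j\ge 0$, let $\tau_{j+1}$ be the first time after $\tau_j$ at which the chain is again in $a$; all these times are almost surely finite. By the strong Markov property at each $\tau_j$ (where $Z_{\tau_j}=a$), the excursions between consecutive returns are i.i.d. In particular, the pairs $(L_j,S_j)$ with excursion length $L_j:=\tau_{j+1}-\tau_j$ and block sum $S_j:=\sum_{i=\tau_j}^{\tau_{j+1}-1}f(Z_i)$ form, for $j\ge 0$, an i.i.d.\ sequence whose common law equals that of $\bigl(\tau_a,\sum_{i=0}^{\tau_a-1}f(Z_i)\bigr)$ under $\Proba_a$. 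Since $L_j$ is integrable and $\Modulus{S_j}\le M L_j$, both $L_j$ and $S_j$ are integrable, so the classical strong law gives, almost surely,
\[
 \frac{1}{m}\sum_{j=0}^{m-1}L_j\to\Expect_a[\tau_a]
 \quad\text{and}\quad
 \frac{1}{m}\sum_{j=0}^{m-1}S_j\to\Expect_a\Bigl[\textstyle\sum_{i=0}^{\tau_a-1}f(Z_i)\Bigr]\,.
\]
The bridge to the invariant measure is the ratio formula for the unique invariant probability of a finite irreducible chain,
\[
 \nu(z)=\frac{\Expect_a\bigl[\Cardinality{\Set{0\le i<\tau_a:Z_i=z}}\bigr]}{\Expect_a[\tau_a]}\,,
\]
which upon summing against $f$ yields $\Expect_a\bigl[\sum_{i=0}^{\tau_a-1}f(Z_i)\bigr]=\Expect_a[\tau_a]\,\nu(f)$.

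Finally I would interpolate from complete excursions to an arbitrary horizon $n$. Let $m(n)$ be the number of excursions completed by time $n$, so that $\tau_{m(n)}\le n<\tau_{m(n)+1}$. The partial sum $\sum_{i=1}^{n}f(Z_i)$ differs from $\sum_{j=0}^{m(n)-1}S_j$ only through the fixed, almost-surely finite initial segment before $\tau_0$ and through the incomplete excursion at the end, whose contribution is at most $M\,L_{m(n)}$ in absolute value. Dividing by $n$ and using $n\sim\tau_{m(n)}\sim m(n)\,\Expect_a[\tau_a]$ together with the two limits above gives $\OneOverN\sum_{i=1}^{n}f(Z_i)\to\nu(f)$ almost surely. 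Because the initial delay before $\tau_0$ is the only place the starting distribution enters, and it is negligible after normalisation, the conclusion holds under every $\Proba_\alpha$.

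I expect the main obstacle to be the boundary control in this last step: one must show that the leftover incomplete excursion is negligible after normalisation, i.e.\ $L_{m(n)}/n\to 0$ almost surely. This follows from $L_m/m\to 0$ (a consequence of the convergence of the Ces\`aro averages of the integrable lengths $L_j$) combined with the comparison $n\ge\tau_{m(n)}\ge m(n)$, but substituting the random index $m=m(n)$ and making the asymptotic equivalences $n\sim\tau_{m(n)}\sim m(n)\Expect_a[\tau_a]$ rigorous requires care. The identification of $\nu$ via the ratio formula is standard yet also deserves a clean statement, since it is precisely what converts the i.i.d.-block limit into the claimed space average $\nu(f)$.
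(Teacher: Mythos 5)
The paper never proves this lemma: it is imported verbatim from Woess~\cite[Theorem~3.55]{Woess_DenumerableMarkovChains}, so there is no internal proof to compare against. Your regeneration argument is correct and complete, and it is the standard elementary proof of the ergodic theorem for positive recurrent chains --- essentially the route taken in the cited source itself, reducing the Markov statement to the i.i.d.\ strong law. The delicate points are all handled soundly: the pairs $(L_j,S_j)$ are i.i.d.\ by the strong Markov property at the successive return times $\tau_j$ (where the chain sits at $a$); integrability of $S_j$ follows from $\Modulus{S_j}\le M L_j$ together with $\Expect_a[\tau_a]<\infty$, positive recurrence being automatic for a finite irreducible chain; the occupation-measure representation of $\nu$ is exactly what converts the block limit $\Expect_a\bigl[\sum_{i=0}^{\tau_a-1}f(Z_i)\bigr]$ into $\Expect_a[\tau_a]\,\nu(f)$; and the boundary control $L_{m(n)}/n\to 0$ does follow, as you argue, from $L_m/m\to 0$ (a consequence of the a.s.\ convergence of $\tau_m/m$, or directly of Borel--Cantelli for i.i.d.\ integrable lengths) combined with $n\ge\tau_{m(n)}\ge m(n)$, while $n/m(n)\to\Expect_a[\tau_a]$ follows by sandwiching $\tau_{m(n)}\le n<\tau_{m(n)+1}$ and using $m(n)\to\infty$. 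Your observation that aperiodicity is superfluous is also right: the almost-sure ergodic theorem needs only irreducibility here, and aperiodicity appears in the hypothesis solely because the paper assumes it globally for $X$. The one place deserving a half-sentence more care is the initial segment: since the chain is indexed with $Z_1\DistributedAs\alpha$, the discarded prefix runs over $1\le i<\tau_0$ and is bounded by $M\tau_0$, which is $\Proba_\alpha$-a.s.\ finite by irreducibility --- this is indeed the only point where the starting distribution enters, and it vanishes after division by $n$, exactly as you claim.
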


\begin{proof}[Proof of proposition~\ref{prop:lowerBound}]
Statement~\eqref{eq:lowerBound:probability} is a direct consequence of~\eqref{eq:lowerBound:almostSure}.\\

The $k$-transition process $X^{(k)}$ of $X$ is a homogeneous \NameMarkov{} chain with transition probabilities
\begin{equation}\label{eq:transitionChainTransitionProbabilities}
 \Proba(X^{(k)}_2=\Sequence{x'}|X^{(k)}_1=\Sequence{x})
 =
 \begin{cases}
  \Proba(X_{k+1}=\Sequence{x'}_{\Set{k}}|X_{k}=\Sequence{x}_{\Set{k}})
  &\text{if }\Sequence{x}_{\IntegerSet{2,k}}
            =\Sequence{x'}_{\IntegerSet{k-1}}\,,
  \\0 &\text{else.}
 \end{cases}
\end{equation}
Furthermore, as $X$ is irreducible and aperiodic, then so is $X^{(k)}$. Its invariant measure $\MeasureX^{(k)}$ fulfils $\MeasureX^{(k)}(\Sequence{x})=\MeasureX(\Sequence{x}_\Set{1})\prod_{i=1}^{k-1} \Proba(X_2=\Sequence{x}_\Set{i+1}|X_1=\Sequence{x}_\Set{i})$.\\

Let $f$ be the indicator function of $\Sequence{s}$. We use~\eqref{eq:instantRelations} and lemma~\ref{lem:ergodicTheorem:finite} to derive
\begin{align*}
 &\FirstAlign\Proba\left(
  \liminf_{n\to\infty}\OneOverN
  \Cardinality{\InstantsNonOverLappingTraversal{X}{\Sequence{s}}{n}}
  \ge\frac{\MeasureX^{(k)}(f)}{k}
 \right)
 \\&\ge\Proba\left(
  \liminf_{n\to\infty}\OneOverN
  \Cardinality{\InstantsTraversal{X}{\Sequence{s}}{n}}
  \ge \MeasureX^{(k)}(f)
 \right)
 \\&=\Proba\left(
  \liminf_{n\to\infty}\OneOverN
  \Cardinality{\InstantsOccupation{X^{(k)}}{\Sequence{s}}{n-k}}
  \ge \MeasureX^{(k)}(f)
 \right)
 \\&=\Proba\left(
  \LimN\OneOverN
  \Cardinality{\InstantsOccupation{X^{(k)}}{\Sequence{s}}{n}}
  \ge \MeasureX^{(k)}(f)
 \right)
 \\&= 1\,.
\end{align*}
Finally, $\MeasureX^{(k)}(f)=\MeasureX^{(k)}(\Sequence{s})=p\MeasureX(\Sequence{s}_\Set{1})$.
\end{proof}

%===============================================================================
\subsection{Conditional \NameMarkov{} property}
\label{sec:conditionalMarkovProperty}
%===============================================================================
This section presents two technical statements about discrete \NameMarkov{} processes. Let $X:=(X_n)_{n\in\NatNumZero}$ be a stochastic process on the Cartesian product $\mathcal{S}:=\prod_{n\in\NatNumZero} \mathcal{S}_n$ of the finite sets $(\mathcal{S}_n)_{n\in\NatNumZero}$. For $A\subseteq\NatNumZero$, let $\mathcal{S}_A:=\prod_{n\in A} \mathcal{S}_n$. In the remainder of this section, we assume that all conditional probabilities are well-defined. The process $X$ is \NameMarkov{}, iff
\begin{multline}\label{eq:markov:classic}
 \ForAll n,m\in\NatNum
 ,m\le n
 ,s_n\in \mathcal{S}_n
 ,s_{[n-m,n-1]}\in \mathcal{S}_{[n-m,n-1]}
 :\\\Proba(X_n=s_n|X_{[n-m,n-1]}=s_{[n-m,n-1]})
  =\Proba(X_n=s_n|X_{n-1}=s_{n-1})\,.
\end{multline}
We denote by $A\Subset\NatNumZero$ the fact that $A$ is a \emph{finite subset} of $\NatNumZero$. The first statement is a factorisation of conditional probabilities over disjoint index blocks:
\begin{multline}\label{eq:markov:productFactorisation}
 \ForAll\emptyset\not=B_0,A_1,B_1,\dotsc,B_{m-1},A_m,B_m\Subset\NatNumZero
 ,\\A\cap B=\emptyset \text{ where }
 A:=\biguplus_{i=1}^m A_i \text{ and } B:=\biguplus_{i=0}^m B_i
 ,x_A\in\mathcal{S}_A
 ,x_B\in\mathcal{S}_B
 ,\\\Bigl(
  \ForAll i \in [m] :\,\,
  b_i^-:=\min(B_{i-1})<\min(A_i)
  ,\max(A_i)<\min(B_i)=:b_i^+
 \,\Bigr)
 :\\\Proba(X_A=x_A|X_B=x_B)
  =\prod_{i=1}^m
  \Proba(X_{A_i}=x_{A_i}|X_{b_i^-}=x_{b_i^-},X_{b_i^+}=x_{b_i^+})\,.
\end{multline}

Secondly, a \NameMarkov{} process retains the \NameMarkov{} property under a \emph{Cartesian conditioning}:
\begin{multline}\label{eq:markov:remainsMarkovUnderCartesianConditioning}
 \ForAll \emptyset\not=C\Subset\NatNumZero
 ,S_C:=\prod_{n\in C} S_n \text{ with } S_n\subseteq\mathcal{S}_n:\quad
 (X|X_C\in S_C)\text{ is \NameMarkov{}.}
\end{multline}

\begin{proof}
We need the intermediate statements
\begin{multline}\label{eq:markov:closestPastSingle}
 \ForAll n\in\NatNum
 ,\emptyset\not=B\subseteq\IntegerSet{0,n-1}
 ,x_n\in\mathcal{S}_n
 ,x_B\in\mathcal{S}_B
 :\\\Proba(X_n=x_n|X_B=x_B)
  =\Proba(X_n=x_n|X_{\max(B)}=x_{\max(B)})
\end{multline}
and
\begin{multline}\label{eq:markov:cartesianLast}
 \ForAll \emptyset\not=A,B\Subset\NatNumZero
 ,b:=\max(B)<\min(A)
 ,x_A\in\mathcal{S}_A
 ,S_B\subseteq\Set{x_b}\times\mathcal{S}_{B\setminus\Set{b}}
 :\\\Proba(X_A=x_A|X_B\in S_B)
  =\Proba(X_A=x_A|X_b=x_b)\,.
\end{multline}

Proof of~\eqref{eq:markov:closestPastSingle}: Let $b:=\max(B)$, $C:=\IntegerSet{b+1,n-1}$ and $D:=\IntegerSet{\min(B),b}\setminus B$. We use~\eqref{eq:markov:classic} to get
\begin{align*}
 &\FirstAlign
  \Proba(X_n=x_n|X_B=x_B)
 \\&=\frac%
  {\sum_{x_C,x_D} \Proba(X_n=x_n,X_C=x_C,X_B=x_B,X_D=x_D)}
  {\Proba(X_B=x_B)}
 \\&=\frac%
  {\displaystyle\sum_{x_C,x_D}
   \Proba(X_n=x_n,X_C=x_C|X_B=x_B,X_D=x_D)\Proba(X_B=x_B,X_D=x_D)
  }
  {\Proba(X_B=x_B)}
 \\&=\frac%
  {\displaystyle\sum_{x_C} \Proba(X_n=x_n,X_C=x_C|X_b=x_b)
  \sum_{x_D} \Proba(X_B=x_B,X_D=x_D)
  }
  {\Proba(X_B=x_B)}
 \\&= \Proba(X_n=x_n|X_b=x_b)\,.
\end{align*}

Proof of~\eqref{eq:markov:productFactorisation}: For $A\Subset\NatNumZero$, we abbreviate the event $E_A:=\Iverson{X_A=x_A}$. Apply~\eqref{eq:markov:closestPastSingle} to get

\begin{align*}
 &\FirstAlign
  \Proba(X_A=x_A|X_B=x_B)
 \\&=\frac{\Proba(X_A=x_A,X_B=x_B)}{\Proba(X_B=x_B)}
 \\&=\frac%
  {\displaystyle
   \Proba(E_{B_0})
   \prod_{i=1}^m
   \Proba(E_{B_i\setminus\Set{b_i^+}}|
        (E_{A_j})_{j\le i},(E_{B_j})_{j<i},E_{b_i^+})
   \Proba(E_{b_i^+},E_{A_i}|(E_{A_j})_{j<i},(E_{B_j})_{j<i})
  }
  {\displaystyle
   \Proba(E_{B_0})
   \prod_{i=1}^m
   \Proba(E_{B_i\setminus\Set{b_i^+}}|(E_{B_j})_{j<i},E_{b_i^+})
   \Proba(E_{b_i^+}|(E_{B_j})_{j<i})
  }
 \\&=\prod_{i=1}^m \frac%
  {
   \Proba(E_{B_i\setminus\Set{b_i^+}}|E_{b_i^+})
   \Proba(E_{b_i^+},E_{A_i}|E_{b_i^-})
  }
  {
   \Proba(E_{B_i\setminus\Set{b_i^+}}|E_{b_i^+})
   \Proba(E_{b_i^+}|E_{b_i^-})
  }
 \\&=\prod_{i=1}^m \Proba(E_{A_i}|E_{b_i^-},E_{b_i^+})\,.
\end{align*}

Proof of~\eqref{eq:markov:cartesianLast}: Let $b:=\max(B)$. We apply~\eqref{eq:markov:closestPastSingle} to get

\begin{align*}
 &\FirstAlign
  \Proba(X_A=x_A|X_B\in S_B)
 \\&= \frac%
  {\sum_{x_B\in S_B} \Proba(X_A=x_A,X_B=x_B)}
  {\Proba(X_B\in S_B)}
 \\&= \frac%
  {\sum_{x_B\in S_B}\Proba(X_A=x_A|X_{b}=x_{b})\Proba(X_B=x_B)}
  {\Proba(X_B\in S_B)}
 \\&=\Proba(X_A=x_A|X_{b}=x_{b})\,.
\end{align*}

Proof of~\eqref{eq:markov:remainsMarkovUnderCartesianConditioning}: Let $n,m\in\NatNum$ with $m\le n$. Let $B:=\IntegerSet{n-m,n-1}$, $x_n\in\mathcal{S}_n$ and $x_B\in\mathcal{S}_B$. Let $C_+:=C\setminus\IntegerSet{0,n-1}$ and $C_-:=C\cap\IntegerSet{0,n-1}$. Thus $S_C=S_{C_-}\times S_{C_+}$. We apply~\eqref{eq:markov:cartesianLast} twice to show that $(X|X_C\in S_C)$ fulfils~\eqref{eq:markov:classic} and is thus \NameMarkov{}:

\begin{align*}
 &\FirstAlign
  \Proba(X_n=x_n|X_B=x_B,X_C\in S_C)
 \\&= \frac%
  {\Proba(X_n=x_n,X_{C_+}\in S_{C_+},X_B=x_B,X_{C_-}\in S_{C_-})}
  {\Proba(X_B=x_B,X_C\in S_C)}
 \\&=\frac%
  {\Proba(X_n=x_n,X_{C_+}\in S_{C_+}|X_B=x_B,X_{C_-}\in S_{C_-})
  \Proba(X_B=x_B,X_{C_-}\in S_{C_-})
  }
  {\Proba(X_{C_+}\in S_{C_+}|X_B=x_B,X_{C_-}\in S_{C_-})
  \Proba(X_B=x_B,X_{C_-}\in S_{C_-})}
 \\&=\frac%
  {\Proba(X_n=x_n,X_{C_+}\in S_{C_+}|X_{n-1}=x_{n-1})}
  {\Proba(X_{C_+}\in S_{C_+}|X_{n-1}=x_{n-1})}
 \\&=\Proba(X_n=x_n|X_{n-1}=x_{n-1},X_{C_+}\in S_{C_+})
 \\&=\Proba(X_n=x_n|X_{n-1}=x_{n-1},X_{C_+}\in S_{C_+},X_{C_-}\in S_{C_-})
 \\&=\Proba(X_n=x_n|X_{n-1}=x_{n-1},X_C\in S_C)\,.
\end{align*}
\end{proof}
%===============================================================================
\subsection{The loss case}
\label{sec:proof:loss}
%===============================================================================
We start with some derivations common to the proof of~\eqref{eq:entropyrate:loss} and~\eqref{eq:trajectory:explosion}. We assume $\Kappa<\infty$. Equation~\eqref{eq:kappa} is equivalent to the existence of $\check{x},\hat{x}\in\SpaceX,\Sequence{y}\in\SpaceY^{\Kappa},\Sequence{x}\in \LumpingPreimage(\Sequence{y})$ with
\begin{equation}\label{eq:finite:setting}
 0
 <\Proba(X_0=\check{x},X_{\IntegerSet{\Kappa}}=\Sequence{x},X_{\Kappa+1}=\hat{x})
 <\Proba(X_0=\check{x},Y_{\IntegerSet{\Kappa}}=\Sequence{y},X_{\Kappa+1}=\hat{x})\,.
\end{equation}
Let $\Sequence{s}:=(\check{x},\Sequence{x},\hat{x})$. The \emph{unreconstructable set of trajectories} $\mathcal{H}$ is
\begin{equation}\label{eq:finite:lossSet}
 \mathcal{H}:=\Set{\check{x}}\times \LumpingPreimage(\Sequence{y})\times\Set{\hat{x}}\,.
\end{equation}
Equation~\eqref{eq:kappa} implies that $\mathcal{H}$ contains at least two elements with positive probability. If we pass through $\mathcal{H}$, then we incur an \emph{entropy loss} $L$:
\begin{equation}\label{eq:finite:minimumLoss}
 L:=\Entropy{X_{\IntegerSet{\Kappa}}|X_{\IntegerSet{0,\Kappa+1}}\in \mathcal{H}}>0\,.
\end{equation}
Let $\InstantsHiddenTraversal$ be the random set of indices marking the start of non-overlapping runs of $X_{\IntegerSet{n}}$ through $\mathcal{H}$, that is,
\begin{equation}\label{eq:finite:indices}
 \InstantsHiddenTraversal
 :=\BigSet{i\in \IntegerSet{n-\Kappa-1}:
  \begin{gathered}
   X_{\IntegerSet{i,i+\Kappa+1}}\in \mathcal{H}\\
   \text{and}\\
   \ForAll j\in \IntegerSet{i+1,i+\Kappa+1}:
   X_{\IntegerSet{j,j+\Kappa+1}}\not\in \mathcal{H}
  \end{gathered}
 }\,,
\end{equation}
where we select lower indices greedily. For the $\Sequence{s}$ from after~\eqref{eq:finite:setting}, we lower-bound the tail probability of the cardinality of $\InstantsHiddenTraversal$ by the one of $\InstantsNonOverLappingTraversal{X}{\Sequence{s}}{n}$:
\begin{equation}\label{eq:finite:indexSetComparison}
 \ForAll m\in\NatNum:\quad
 \Proba(\Cardinality{\InstantsHiddenTraversal}\ge m)
 \ge \Proba(\Cardinality{\InstantsNonOverLappingTraversal{X}{\Sequence{s}}{n}}\ge m)\,.
\end{equation}
Finally, let
\begin{equation}\label{eq:finite:alpha}
 \alpha:=\frac{\Proba(X_{\IntegerSet{\Kappa+2}}=\Sequence{s})}{2(\Kappa+2)}>0\,.
\end{equation}

\begin{proof}[Proof of~\eqref{eq:entropyrate:loss}]
We claim that, for every $m\in\NatNum$:
\begin{equation}\label{eq:finite:linearLoss}
 \Entropy{X_{\IntegerSet{n}}|Y_{\IntegerSet{n}}}
 \ge \Proba(\Cardinality{\InstantsHiddenTraversal}\ge m)
 \Entropy{X_{\IntegerSet{n}}|Y_{\IntegerSet{n}},\Cardinality{\InstantsHiddenTraversal}\ge m}
 \ge \Proba(\Cardinality{\InstantsHiddenTraversal}\ge m)\, m L\,.
\end{equation}
Combining~\eqref{eq:finite:linearLoss} and~\eqref{eq:finite:indexSetComparison}, for $m=\alpha n$, with ~\eqref{eq:lowerBound:probability}, we arrive at~\eqref{eq:entropyrate:loss}:
\begin{multline}\label{eq:loss:quantification}
 \FirstAlign\EntropyRate{X|Y}
 =
 \LimN\OneOverN\Entropy{X_{\IntegerSet{n}}|Y_{\IntegerSet{n}}}\\
 \ge
 \LimN\OneOverN \Proba(\Cardinality{\InstantsHiddenTraversal}\ge \alpha n)\, \alpha n L
 \ge
 \alpha L \LimN \Proba(\Cardinality{\InstantsNonOverLappingTraversal{X}{\Sequence{s}}{n}}\ge \alpha n)
 =
 \alpha L >0\,.
\end{multline}
It rests to prove~\eqref{eq:finite:linearLoss}. We fix $m,n\in\NatNum$. For $\RealisationHiddenTraversal\subseteq \IntegerSet{n}$ with $\Proba(\InstantsHiddenTraversal=\RealisationHiddenTraversal)>0$ and each $i\in\RealisationHiddenTraversal$, we derive the indices of the block $B_i:=\IntegerSet{i,i+\Kappa+1}$ and its interior $\widehat{B_i}:=\IntegerSet{i+1,i+\Kappa}$. Their unions are $B:=\biguplus_{i\in\RealisationHiddenTraversal} B_i$ and $\widehat{B}:=\biguplus_{i\in\RealisationHiddenTraversal} \widehat{B}_i$ respectively. Hence,
\begin{subequations}\label{eq:finite:steps}
\begin{align}
 \notag
 &\FirstAlign
  \Entropy{X_{\IntegerSet{n}}|Y_{\IntegerSet{n}},\InstantsHiddenTraversal=\RealisationHiddenTraversal}\,,
 \\\label{eq:finite:steps:outside}
 &\ge
  \Entropy{X_{\widehat{B}}|X_{\IntegerSet{n}\setminus B},\forall i\in\RealisationHiddenTraversal: X_{B_i}\in \mathcal{H}}\,,
 \\\label{eq:finite:steps:markov}
 &=
  \Entropy{X_{\widehat{B}}|\forall i\in\RealisationHiddenTraversal: X_{B_i}\in \mathcal{H}}\,,
 \\\label{eq:finite:steps:conditional}
 &=
  \sum_{i\in\RealisationHiddenTraversal}\Entropy{X_{\widehat{B_i}}|X_{B_i}\in \mathcal{H}}\,,
 \\\label{eq:finite:steps:minLoss}
 &= \Cardinality{\RealisationHiddenTraversal}\times L\,,
\end{align}
\end{subequations}
where in~\eqref{eq:finite:steps:outside} we throw away all information outside $\widehat{B}$ and condition on it, in~\eqref{eq:finite:steps:markov} we apply the conditional factorisation~\eqref{eq:markov:productFactorisation} to remove every condition except the block ends, in~\eqref{eq:finite:steps:conditional} we apply the conditional factorisation~\eqref{eq:markov:productFactorisation} to the \NameMarkov{} process $(X|X_B\in\mathcal{H}^{\Cardinality{\RealisationHiddenTraversal}})$ (as $\mathcal{H}$ is a cartesian product) and in~\eqref{eq:finite:steps:minLoss} we conclude by stationarity and the minimum loss~\eqref{eq:finite:minimumLoss}. Hence,
\begin{align*}
  \Entropy{X_{\IntegerSet{n}}|Y_{\IntegerSet{n}},\Cardinality{\InstantsHiddenTraversal}\ge m}
 &=
  \sum_{\substack{\RealisationHiddenTraversal\subseteq\IntegerSet{n}\\\Cardinality{\RealisationHiddenTraversal}\ge m}}
  \Proba(\InstantsHiddenTraversal=\RealisationHiddenTraversal|\Cardinality{\InstantsHiddenTraversal}\ge m)
  \Entropy{X_{\IntegerSet{n}}|Y_{\IntegerSet{n}},\InstantsHiddenTraversal=\RealisationHiddenTraversal}
 \\&\ge
  \sum_{\substack{\RealisationHiddenTraversal\subseteq\IntegerSet{n}\\\Cardinality{\RealisationHiddenTraversal}\ge m}}
  \Proba(\InstantsHiddenTraversal=\RealisationHiddenTraversal|\Cardinality{\InstantsHiddenTraversal}\ge m)
  \times\Cardinality{\RealisationHiddenTraversal}\times L
 \\&\ge mL\,.
\end{align*}

\end{proof}

\begin{proof}[Proof of~\eqref{eq:trajectory:explosion}]

For the $\Sequence{s}$ from after~\eqref{eq:finite:setting}, we have
\begin{equation}\label{eq:preimageCountVsPreimageTraversal}
 T_n \ge 2^{\Cardinality{\InstantsNonOverLappingTraversal{X}{\Sequence{s}}{n}}}\,.
\end{equation}
Thus,~\eqref{eq:preimageCountVsPreimageTraversal} and~\eqref{eq:lowerBound:almostSure} imply that
\begin{multline}\label{eq:trajectory:rate}
 \liminf_{n\to\infty} \sqrt[n]{T_n}
 \ge \liminf_{n\to\infty} \Exponential((\log 2)\OneOverN \Cardinality{\InstantsNonOverLappingTraversal{X}{\Sequence{s}}{n}})
 \\
 =\Exponential((\log 2)\liminf_{n\to\infty}\OneOverN \Cardinality{\InstantsNonOverLappingTraversal{X}{\Sequence{s}}{n}})
 \stackrel{\Proba-a.s.}{\ge} \Exponential((\log 2)\alpha)
 = 2^\alpha>1\,.
\end{multline}
\end{proof}

\begin{proof}[Proof of proposition~\ref{prop:kappaBound}]
Let $x_0,x_{\Kappa+1},\Sequence{y},\Sequence{x'},\Sequence{x''}$ be as in~\eqref{eq:kappa}. Suppose that $\Kappa> K:=\sum_{y\in\SpaceY}\Cardinality{\LumpingPreimage(y)}(\Cardinality{\LumpingPreimage(y)}-1)$ and $\Kappa>1$. We apply the \emph{pigeon-hole principle}, first to every $x\in\LumpingPreimage(y)$ and then to each $\LumpingPreimage(y)$, for every $y\in\Support{\Sequence{y}}$. This ensures that the two trajectories intersect:
\begin{equation}\label{eq:pigeonhole}
 \Exists m\in\IntegerSet{\Kappa}:
 \quad\Sequence{x'}_\Set{m}=\Sequence{x''}_\Set{m}\,.
\end{equation}
Choose $m$ fulfilling~\eqref{eq:pigeonhole}. If $m=1$, then $\Sequence{x'}_\Set{1},x_{\Kappa+1},\Sequence{y}_{\IntegerSet{2,\Kappa}},\Sequence{x'}_{\IntegerSet{2,\Kappa}},\Sequence{x''}_{\IntegerSet{2,\Kappa}}$ fulfil the conditions in~\eqref{eq:kappa}. If $m>1$, then $x_0,\Sequence{x'}_\Set{m},\Sequence{y}_{\IntegerSet{m-1}},\Sequence{x'}_{\IntegerSet{m-1}},\Sequence{x''}_{\IntegerSet{m-1}}$ fulfil the conditions in~\eqref{eq:kappa}. Both cases lead to $\Kappa<\Kappa$, a contradiction.
\end{proof}

%===============================================================================
\section{Proof of strong \texorpdfstring{$k$}{k}-lumpability}
\label{sec_proof_lumpability}
%===============================================================================
For (conditional) probabilities we use the following short-hand notation:
\begin{equation*}\label{eq:probaShortHand}
 \Proba(Z=z) = p_Z(z)
 \quad\text{ and }\quad
 \Proba(Z_1=z_1|Z_2=z_2) = p_{Z_1|Z_2}(z_1|z_2)\,,
\end{equation*}
where we always assume that the latter is well-defined, i.e., that $p_{Z_2}(z_2)>0$. Recall that the \emph{conditional mutual information} of $Z_1$ and $Z_2$ given $Z_3$ is
\begin{equation}\label{eq:condMutInf}
 \CondMutInf{Z_1}{Z_2}{Z_3}
 := \Entropy{Z_1|Z_3}-\Entropy{Z_1|Z_2,Z_3}\,.
\end{equation}
The conditional mutual information vanishes, iff $Z_1$ and $Z_2$ are conditionally independent given $Z_3$~\cite[Thm.~2.6.3]{Cover_Thomas__ElementsOfInformationTheory_Ed2__Wiley_2006}.

\begin{proof}[Proof of proposition~\ref{prop:ratetoMarkov}]
The rhs of~\eqref{eq:rateToMarkov} is equivalent to
\begin{align*}
 0&= \Entropy{Z_k|Z_{\IntegerSet{0,k-1}}}-\EntropyRate{Z}
 \\&= \Entropy{Z_k|Z_{\IntegerSet{0,k-1}}}-\LimN\Entropy{Z_n|Z_{\IntegerSet{0,n-1}}}
 \\&= \LimN \left(\Entropy{Z_n|Z_{\IntegerSet{n-k,n-1}}}-\Entropy{Z_n|Z_{\IntegerSet{0,n-1}}}\right)
 \\&= \LimN \CondMutInf{Z_n}{Z_{\IntegerSet{0,n-k-1}}}{Z_{\IntegerSet{n-k,n-1}}}\,.
\end{align*}
By stationarity, the sequence in the last limit increases monotonically in $n$. A limit value of zero is equivalent to, for all $n\in\NatNum$:
\begin{align*}
 &\FirstAlign p_{Z_n|Z_{\IntegerSet{n-k,n-1}}}(\cdot|\Sequence{z}) p_{Z_{\IntegerSet{0,n-k-1}}|Z_{\IntegerSet{n-k,n-1}}}(\cdot|\Sequence{z})
 \\&= p_{Z_n,Z_{\IntegerSet{0,n-k-1}}|Z_{\IntegerSet{n-k,n-1}}}(\cdot|\Sequence{z})
 \\&= p_{Z_n|Z_{\IntegerSet{0,n-1}}}(\cdot|\cdot,\Sequence{z}) p_{Z_{\IntegerSet{0,n-k-1}}|Z_{\IntegerSet{n-k,n-1}}}(\cdot|\Sequence{z})\,,
\end{align*}
where the first equality holds $p_{Z_{\IntegerSet{n-k,n-1}}}$-a.s. The equality between the first and last line is equivalent to the higher-order \NameMarkov{} property~\eqref{eq:markov:kthorder}.
\end{proof}

\begin{proof}[Proof of theorem~\ref{thm:klump:equalities}]
The equivalence in~\eqref{eq:klump} follows from the equivalence of its two statements to the following technical property:
\begin{multline}\label{eq:klump:probabilities}
 \ForAll y',y\in\SpaceY,\Sequence{y}\in\SpaceY^{k-1}, x\in\LumpingPreimage(y):\quad
 p_{Y_k,Y_{\IntegerSet{k-1}},X_0}(y',\Sequence{y},x)>0
 \Then
 \\0<p_{Y_k|Y_{\IntegerSet{k-1}},X_0}(y'|\Sequence{y},x)
 =p_{Y_k|Y_{\IntegerSet{k-1}},Y_0}(y'|\Sequence{y},y)\,.
\end{multline}

The equivalence between~\eqref{eq:klump:boundsequal} and~\eqref{eq:klump:probabilities} is in proposition~\ref{prop_klump_mutinf} and the equivalence between~\eqref{eq:klump:strong} and~\eqref{eq:klump:probabilities} is in proposition~\ref{prop_klump_strong}.\end{proof}

\begin{Prop}\label{prop_klump_mutinf}
 For a lumping $\Lumping$, property~\eqref{eq:klump:boundsequal} is equivalent to~\eqref{eq:klump:probabilities}.
\end{Prop}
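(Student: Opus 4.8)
The plan is to recast the entropy equality \eqref{eq:klump:boundsequal} as the vanishing of a conditional mutual information and then to read off \eqref{eq:klump:probabilities} from the probabilistic content of conditional independence. The single fact driving everything is that $Y_0=\LumpingFunction(X_0)$ is a deterministic function of $X_0$, so conditioning on $X_0$ already fixes $Y_0$.

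First I would use that $Y_0$ is determined by $X_0$: appending $Y_0$ to a conditioning set already containing $X_0$ changes no conditional entropy, so that
\[
 \Entropy{Y_k|Y_{\IntegerSet{k-1}},X_0}
 =\Entropy{Y_k|Y_{\IntegerSet{0,k-1}},X_0}\,.
\]
Subtracting this from the right-hand side of \eqref{eq:klump:boundsequal} and invoking the definition \eqref{eq:condMutInf}, the equality \eqref{eq:klump:boundsequal} becomes equivalent to
\[
 \CondMutInf{Y_k}{X_0}{Y_{\IntegerSet{0,k-1}}}
 =\Entropy{Y_k|Y_{\IntegerSet{0,k-1}}}-\Entropy{Y_k|Y_{\IntegerSet{0,k-1}},X_0}
 =0\,.
\]
By the characterisation of vanishing conditional mutual information \cite[Thm.~2.6.3]{CoverThomas_Information}, this holds if and only if $Y_k$ and $X_0$ are conditionally independent given $Y_{\IntegerSet{0,k-1}}$, i.e. the two conditional laws
\[
 p_{Y_k|Y_{\IntegerSet{0,k-1}},X_0}(\cdot\,|\,y_0,\Sequence{y},x)
 \quad\text{and}\quad
 p_{Y_k|Y_{\IntegerSet{0,k-1}}}(\cdot\,|\,y_0,\Sequence{y})
\]
coincide for every $(y_0,\Sequence{y},x)$ of positive probability. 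A positive conditioning probability forces $y_0=\LumpingFunction(x)=:y$, so $y_0$ is redundant on both sides; dropping it and writing $Y_{\IntegerSet{0,k-1}}=(Y_0,Y_{\IntegerSet{k-1}})$ turns this coincidence into exactly the identity asserted in \eqref{eq:klump:probabilities}, with $x\in\LumpingPreimage(y)$. The positivity clause in \eqref{eq:klump:probabilities} is immediate from its hypothesis $p_{Y_k,Y_{\IntegerSet{k-1}},X_0}(y',\Sequence{y},x)>0$.

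The delicate point is the direction from \eqref{eq:klump:probabilities} back to conditional independence. Literally, \eqref{eq:klump:probabilities} only asserts the identity for those $y'$ with $p_{Y_k,Y_{\IntegerSet{k-1}},X_0}(y',\Sequence{y},x)>0$, that is, on the support of $p_{Y_k|Y_{\IntegerSet{k-1}},X_0}(\cdot\,|\,\Sequence{y},x)$, whereas conditional independence requires agreement of the two conditional laws everywhere. Here I would invoke the elementary fact that if two probability distributions on the finite set $\SpaceY$ agree on the support of one of them, they are equal: the shared support already carries total mass one, so the second distribution must vanish off it and hence agree globally. Applying this to the two laws displayed above upgrades support-agreement to full agreement and closes the equivalence. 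Beyond this, the only care needed is the routine bookkeeping of which conditioning events have positive probability, so that every conditional probability appearing above is well defined and the quantifier ranges match on both sides.
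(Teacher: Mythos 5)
Your proof is correct and takes essentially the same route as the paper's: both arguments rewrite \eqref{eq:klump:boundsequal} as the vanishing of $\CondMutInf{Y_k}{X_0}{Y_{\IntegerSet{0,k-1}}}$ (using that $Y_0=\LumpingFunction(X_0)$ is a function of $X_0$) and then translate the resulting conditional independence of $Y_k$ and $X_0$ given $Y_{\IntegerSet{0,k-1}}$ into the pointwise identity \eqref{eq:klump:probabilities}. Your explicit support-agreement argument for the converse direction is a detail the paper's terse ``division'' step leaves implicit, but it is the same underlying equivalence, not a different method.
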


\begin{proof}
We rewrite~\eqref{eq:klump:boundsequal} as
\begin{align*}
 0&=\Entropy{Y_k|Y_{\IntegerSet{0,k-1}}}-\Entropy{Y_k|Y_{\IntegerSet{k-1}},X_0}
 \\&=\Entropy{Y_k|Y_{\IntegerSet{0,k-1}}}-\Entropy{Y_k|Y_{\IntegerSet{0,k-1}},X_0}
 \\&=\CondMutInf{Y_k}{X_0}{Y_{\IntegerSet{0,k-1}}}\,.
\end{align*}
For all $y'\in\SpaceY,\Sequence{y}\in\SpaceY^k, x\in\SpaceX$ with $ p_{Y_k,Y_{\IntegerSet{0,k-1}},X_0}(y',\Sequence{y},x)>0$, this is equivalent to:
\begin{equation*}
 0< p_{Y_k,X_0|Y_{\IntegerSet{0,k-1}}}(\cdot|\Sequence{y})
 = p_{Y_k|Y_{\IntegerSet{0,k-1}}}(\cdot|\Sequence{y})p_{X_0|Y_{\IntegerSet{0,k-1}}}(\cdot|\Sequence{y})\,.
\end{equation*}
Division in the previous line equals~\eqref{eq:klump:probabilities}.
\end{proof}

\begin{Prop}\label{prop_klump_strong}
A lumping $\Lumping$ is strongly $k$-lumpable, iff~\eqref{eq:klump:probabilities} holds.
\end{Prop}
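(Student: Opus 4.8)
The plan is to prove both implications by exploiting the single degree of freedom that separates the two sides of the equivalence, namely the distribution of $X_0$, together with the Markov property and time-homogeneity of $X$. The pivotal observation is that conditioning on $\Set{X_0=x}$ — equivalently, starting the chain from $\Dirac{x}$ — fixes the law of the entire future $(X_1,X_2,\dotsc)$ irrespective of the initial distribution. Hence the left-hand side $p_{Y_k|Y_{\IntegerSet{k-1}},X_0}(y'|\Sequence{y},x)$ of~\eqref{eq:klump:probabilities} is a quantity that does not see the starting distribution at all, whereas the right-hand side $p_{Y_k|Y_{\IntegerSet{k-1}},Y_0}(y'|\Sequence{y},y)$ is precisely the order-$k$ transition kernel of $Y$ read off from the stationary chain. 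Throughout I write $\Proba_\alpha$ for the law of $X$ with $X_0\DistributedAs\alpha$, so that $\Proba_\MeasureX$ coincides with the stationary $\Proba$ underlying the $p$-shorthand.

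For strong $k$-lumpability $\Then$~\eqref{eq:klump:probabilities}, fix $y',y,\Sequence{y},x\in\LumpingPreimage(y)$ meeting the positivity hypothesis. Starting from $\Dirac{x}$ makes $Y_0=\LumpingFunction(x)=y$ deterministic, so conditioning on $Y_0=y$ is free and $\Proba_{\Dirac{x}}(Y_k=y'|Y_{\IntegerSet{0,k-1}}=(y,\Sequence{y}))$ equals the left-hand side of~\eqref{eq:klump:probabilities}. By definition of strong $k$-lumpability, $Y$ is $\ClassKMarkov$ under $\Dirac{x}$ with a transition kernel that is the same for every starting distribution; evaluating that common kernel at the first transition $n=k$ under $\Proba_\MeasureX$ instead yields exactly the right-hand side $p_{Y_k|Y_{\IntegerSet{k-1}},Y_0}(y'|\Sequence{y},y)$. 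The positivity hypothesis guarantees that the conditioning events carry positive probability under both $\Dirac{x}$ and $\MeasureX$ (the latter since $\MeasureX$ has full support by irreducibility), so the two readings of the kernel are legitimate and equal.

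For the converse~\eqref{eq:klump:probabilities} $\Then$ strong $k$-lumpability, fix an arbitrary starting distribution $\alpha$, an index $n\ge k$, and a history $Y_{\IntegerSet{0,n-1}}=\Sequence{y}_{[0,n-1]}$ of positive probability. I would split $\Proba_\alpha(Y_n=y_n|Y_{\IntegerSet{0,n-1}}=\Sequence{y}_{[0,n-1]})$ by the law of total probability over the hidden state $X_{n-k}\in\LumpingPreimage(y_{n-k})$. The Markov property erases the history before time $n-k$ and time-homogeneity shifts the block to the origin, so each conditional $\Proba_\alpha(Y_n=y_n|Y_{\IntegerSet{0,n-1}}=\Sequence{y}_{[0,n-1]},X_{n-k}=x)$ equals the stationary quantity $p_{Y_k|Y_{\IntegerSet{k-1}},X_0}(y_n|\Sequence{y}_{[n-k+1,n-1]},x)$. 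Now~\eqref{eq:klump:probabilities} replaces every such term by the single $x$-free value $p_{Y_k|Y_{\IntegerSet{k-1}},Y_0}(y_n|\Sequence{y}_{[n-k+1,n-1]},y_{n-k})$, which factors out of the sum; since the posterior weights $\Proba_\alpha(X_{n-k}=x|\cdots)$ sum to one, the full-history conditional collapses to that value. As it depends only on $\Sequence{y}_{[n-k,n]}$ and not on $\alpha$, this simultaneously establishes the $\ClassKMarkov$ property (definition~\ref{def:kthOrderMC}) and the starting-distribution independence of the kernel.

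The step I expect to be delicate is this replacement inside the sum: property~\eqref{eq:klump:probabilities} only equates those per-$x$ conditionals that are \emph{strictly positive}, yet the pull-out requires the common value for every $x$ carrying $\Proba_\alpha$-posterior mass. I would resolve this by an averaging argument in the stationary chain. Writing $c:=p_{Y_k|Y_{\IntegerSet{k-1}},Y_0}(y_n|\cdots)$, this $c$ is the $\Proba_\MeasureX$-posterior average of the per-$x$ conditionals; each strictly positive one equals $c$ by~\eqref{eq:klump:probabilities}, while $\MeasureX$ assigns strictly positive weight to every preimage state realising the observed $Y$-pattern. Hence $c=c\cdot(\text{total weight of the strictly positive terms})$, which for $c>0$ forces the vanishing terms to carry zero weight, i.e.\ no admissible $x$ has a zero conditional; for $c=0$ every conditional vanishes outright. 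Finally, because $\MeasureX$ has full support, the $\Proba_\alpha$-posterior support of $X_{n-k}$ is contained in the stationary posterior support, so this dichotomy covers every $x$ occurring under $\alpha$ and the collapse goes through.
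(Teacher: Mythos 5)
Your proof is correct and follows essentially the same route as the paper's: the forward direction instantiates the starting distribution at point masses $\Dirac{x}$ (as in Kemeny--Snell), and the converse decomposes the conditional law over the hidden state at time $n-k$, uses the Markov property and homogeneity to reduce each term to a stationary conditional, and pulls out the common value supplied by~\eqref{eq:klump:probabilities}. Your resolution of the ``delicate step'' --- the averaging argument showing that, by full support of $\MeasureX$, any $x$ with $\alpha_x>0$ must also have $\beta_x>0$ --- is precisely the paper's proposition~\ref{prop_klump_technical}.
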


\begin{proof}
This is a straightforward generalization of the proof for the case $k=1$ in~\cite{Kemeny_Snell__FiniteMarkovChains__Springer_1976}. See section~\ref{sec_proof_lumpability}.
\end{proof}

%===============================================================================
\section{Proofs of the sufficient conditions}
\label{sec:proof:sufficient}
%===============================================================================
We use the shorthand notation introduced at the beginning of section~\ref{sec_proof_lumpability}.

\begin{proof}[Proof of proposition~\ref{prop:SESXpreserves}]
We have
\begin{equation*}
 \Entropy{Y_k|X_{k-1}}
 \leq\Entropy{Y_k|Y_{\IntegerSet{k-1}},X_0}
 \leq \EntropyRate{Y}
 \leq \EntropyRate{X}
 = \Entropy{X_k|X_{k-1}}\,,
\end{equation*}
where the first and the second inequality are due to~\cite[Thm.~4.5.1, pp.~86]{Cover_Thomas__ElementsOfInformationTheory_Ed2__Wiley_2006} (cf.~lemma~\ref{lem:ratebounds}) and the third inequality is due to data processing~\cite{Geiger_Kubin__SomeResultsOnTheInformationLossInDynamicalSystems__ISWSC_2011,Watanabe_Abraham__LossAndRecoveryOfInformationByCoarseObservationOfStochasticChain__IC_19}. The $\ClassSingleEntry$ property implies that $p_{X_k,X_{k-1}}$-a.s.
\begin{equation*}
 p_{Y_k|X_{k-1}}(y|x)
 =p_{X_k|X_{k-1}}(x'(x,y)|x)\,,
\end{equation*}
where $x'(x,y)$ is unique endpoint of the edge existing by~\eqref{eq:singleEntry}. Thus, the outer terms in the above chain of inequalities coincide, yielding $\EntropyRate{Y}=\EntropyRate{X}$.
\end{proof}

\begin{proof}[Proof of proposition~\ref{prop:UEimpliesklump}]
First, we show that $\ClassSFS{k}$ is a subclass of $\ClassSingleEntry$, implying preservation of entropy. If $\ClassSingleEntry$ does not hold, then there exist states $y^\star\in\SpaceY$ and $x^\star\in\SpaceX$ such that at least two states $x',x''\in\LumpingPreimage(y^\star)$ have positive transition probabilities from $x^\star$. Choose a realisable path $\Sequence{x}_{\IntegerSet{0,k-3}}$, with positive transition probability from $x_{k-3}$ to $x^\star$. Let $\Sequence{y}=(\LumpingFunction(\Sequence{x}_{\IntegerSet{k-3}}),\LumpingFunction(x^\star),y^\star)\in\SpaceY^{k-1}$. We have
\begin{equation*}
 p_{X_{\IntegerSet{k-1}}|Y_{\IntegerSet{k-1}},X_0}
 (\Sequence{x}_{\IntegerSet{k-3}},x^\star,x'|\Sequence{y},\Sequence{x}_{\Set{0}})>0
\end{equation*}
and
\begin{equation*}
 p_{X_{\IntegerSet{k-1}}|Y_{\IntegerSet{k-1}},X_0}
 (\Sequence{x}_{\IntegerSet{k-3}},x^\star,x''|\Sequence{y},\Sequence{x}_{\Set{0}})>0\,.
\end{equation*}
This contradicts the definition of $\ClassSFS{k}$~\eqref{eq:singleforwardsequence}.\\

Second, we show that $\ClassSFS{k}$ implies strong $k$-lumpability of $\Lumping$. We check~\eqref{eq:klump:probabilities} and then conclude via proposition~\ref{prop_klump_strong}.
We have $p_{Y_{\IntegerSet{k-1}},X_0}$-a.s. a unique $\Sequence{x'}(\LumpingFunction(X_0),\Sequence{Y}_{\IntegerSet{k-1}})\in\SpaceX^{k-1}$ fulfilling~\eqref{eq:singleforwardsequence}. Hence,
\begin{align*}
 &p_{Y_k|Y_{\IntegerSet{k-1}},X_0}
  (y|\Sequence{y},x)
 \\=&
 p_{Y_k|Y_{\IntegerSet{k-1}},X_{\IntegerSet{k-1}},X_0}
  (y|\Sequence{y},\Sequence{x'}(g(x),\Sequence{y}),x)
 \underbrace{
  p_{X_{\IntegerSet{k-1}}|Y_{\IntegerSet{k-1}},X_0}
  (\Sequence{x'}(g(x),\Sequence{y})|\Sequence{y},x)
 }_{=1\text{ by }\eqref{eq:singleforwardsequence}}
 \\=&p_{Y_k|X_{\IntegerSet{k-1}},X_0}(y|\Sequence{x'}(g(x),\Sequence{y}),x)
 \\=&p_{Y_k|X_{k-1}}(y|\Sequence{x'}(g(x),\Sequence{y})_{\Set{k-1}})
  \qquad\text{by the Markov property of }X
\end{align*}
is independent of $x$ and~\eqref{eq:klump:probabilities} holds.
\end{proof}

%===============================================================================
\section*{Acknowledgements}
\label{sec:acknowledgements}
%===============================================================================
We thank Gernot Kubin and Wolfgang Woess for establishing contact between us, leading to our joint investigation of this topic. We are particularly indebted to our anonymous reviewer for his thoughtful comments and encouragement to flesh out the part about $k$-lumpability.
%===============================================================================
\section{Additional Material}
\label{sec_additional}
%===============================================================================

%===============================================================================
\subsection{k-lumping details}
%===============================================================================
\begin{proof}[Proof of proposition~\ref{prop_klump_strong}]
From definitions~\ref{def:kthOrderMC} and~\ref{def:klump}, strong $k$-lumpability is:
\begin{multline}\label{eq_klumpproof_klump}
 \ForAll y,y'\in\SpaceY, \Sequence{y}\in\SpaceY^{k-1},\Sequence{s}\in\SpaceY^n,\nu \text{ distribution of $X_0$}:\\
  p^\nu_{Y_{n+k},Y_{\IntegerSet{n+1,n+k-1}},Y_n,Y_{\IntegerSet{0,n-1}}}
  (y',\Sequence{y},y,\Sequence{s})
  \\= \gamma(y'|\Sequence{y},y)
  p^\nu_{Y_{\IntegerSet{n+1,n+k-1}},Y_n,Y_{\IntegerSet{0,n-1}}}
  (\Sequence{y},y,\Sequence{s})\,,
\end{multline}
where $p^\nu$ is the distribution of the indicated rvs given $\nu$ as distribution of $X_0$ and where $\gamma$ is the $k$-th order \NameMarkov{} transition kernel.\\

Strong $k$-lumpability implies~\eqref{eq:klump:probabilities}: Strong $k$-lumpability implies that
\begin{equation*}
 p^\nu_{Y_k|Y_{\IntegerSet{k-1}},Y_0}(y'|\Sequence{y},y)=\gamma(y'|\Sequence{y},y)\,,
\end{equation*}
independently of the distribution $\nu$ of $X_0$. We follow~\cite[Thm.~6.3.2]{Kemeny_Snell__FiniteMarkovChains__Springer_1976} and let $x\in\LumpingPreimage(y)$ such that $p_{Y_k,Y_{\IntegerSet{k-1}},X_0}(y',\Sequence{y},x)>0$. Then the following summation degenerates to a single summand:
\begin{align*}
 &\FirstAlign \gamma(y'|\Sequence{y},y)
 \\&=p^\nu_{Y_k|Y_{\IntegerSet{k-1}},Y_0}(y'|\Sequence{y},y)
 &\text{by~\eqref{eq_klumpproof_klump}}
 \\&=\sum_{x'}
  p_{Y_k|Y_{\IntegerSet{k-1}},Y_0,X_0}(y'|\Sequence{y},y,x')
  p^\nu_{X_0|Y_{\IntegerSet{k-1}},Y_0}(x'|\Sequence{y},y)
 \\&= p_{Y_k|Y_{\IntegerSet{k-1}},Y_0,X_0}(y'|\Sequence{y},y,x)
  p^{\delta_x}_{X_0|Y_{\IntegerSet{k-1}},Y_0}(x|\Sequence{y},y)
 &\text{by choosing $\nu=\delta_x$}
 \\&=p_{Y_k|Y_{\IntegerSet{k-1}},Y_0,X_0}(y'|\Sequence{y},y,x)
 \\&>0\,.
\end{align*}

\eqref{eq:klump:probabilities} implies strong $k$-lumpability: Choose $\Sequence{s}\in\SpaceY^n, y,y'\in\SpaceY, \Sequence{y}\in\SpaceY^{k-1}$ and a starting distribution $\nu$. We use the abbreviations $\alpha_x$ and $\beta_x$ from~\eqref{eq_klumpproof_pathspresent}. Using Chapman-Kolmogorov, we split
\begin{align}
  &p^\nu_{Y_{n+k},Y_{\IntegerSet{n+1,n+k-1}},Y_n,Y_{\IntegerSet{0,n-1}}}
  (y',\Sequence{y},y,\Sequence{s})
 \notag\\=
 & \sum_{x\in\LumpingPreimage(y)}
  p^\nu_{Y_{n+k},Y_{\IntegerSet{n+1,n+k-1}},X_n,Y_{\IntegerSet{0,n-1}}}
  (y',\Sequence{y},x,\Sequence{s})
 \notag\\=
 & \sum_{x\in\LumpingPreimage(y)}
  p^\nu_{Y_{n+k},Y_{\IntegerSet{n+1,n+k-1}}|X_n,Y_{\IntegerSet{0,n-1}}}
  (y',\Sequence{y}|x,\Sequence{s})
  p^\nu_{X_n,Y_{\IntegerSet{0,n-1}}}(x,\Sequence{s})
 \notag\\=
 & \sum_{x\in\LumpingPreimage(y)}
  p_{Y_{n+k},Y_{\IntegerSet{n+1,n+k-1}}|X_n}(y',\Sequence{y}|x)
  p^\nu_{X_n,Y_{\IntegerSet{0,n-1}}}(x,\Sequence{s})
 \label{eq_klumpproof_startdisloss}
 \\=
 & \sum_{x\in\LumpingPreimage(y)}
  \beta_x \Iverson{\beta_x>0} \alpha_x
  p^\nu_{X_n,Y_{\IntegerSet{0,n-1}}}(x,\Sequence{s})\,.\notag
\end{align}
The key step is~\eqref{eq_klumpproof_startdisloss}, where the conditioning on $x$ and the \NameMarkov{} property of $X$ let us discard the dependence on the starting distribution $\nu$ and time $n$. At this point we invoke~\eqref{eq:klump:probabilities} to see that $\beta_x$ is independent of $x$ and equals $p_{Y_k|Y_{\IntegerSet{k-1}},Y_0}(y'|\Sequence{y},y)$. Furthermore,~\eqref{eq_klumpproof_pathspresent} ensures that we can discard the indicator $\Iverson{\beta_x>0}$ in the remaining sum. Hence, we sum via Chapman-Kolmogorov and get
\begin{align*}
 &p^\nu_{Y_{n+k},Y_{\IntegerSet{n+1,n+k-1}},Y_n,Y_{\IntegerSet{0,n-1}}}
  (y',\Sequence{y},y,\Sequence{s})
 \\=& p_{Y_k|Y_{\IntegerSet{k-1}},Y_0}(y'|\Sequence{y},y)
  \sum_{x\in\LumpingPreimage(y)}
  \alpha_x p^\nu_{X_n,Y_{\IntegerSet{0,n-1}}}(x,\Sequence{s})
 \\=& p_{Y_k|Y_{\IntegerSet{k-1}},Y_0}(y'|\Sequence{y},y)
  p^\nu_{Y_{\IntegerSet{n+1,n+k-1}},Y_n,Y_{\IntegerSet{0,n-1}}}
  (\Sequence{y},y,\Sequence{s})\,.
\end{align*}
This equality holds for all $\nu$ and $n$, hence $Y$ is a $k$-th order \NameMarkov{} chain with transition kernel
\begin{equation*}
 \gamma(y'|\Sequence{y},y):=
 \begin{cases}
  p_{Y_k|Y_{\IntegerSet{k-1}},Y_0}(y'|\Sequence{y},y)
  & \text{if } p_{Y_k,Y_{\IntegerSet{k-1}},Y_0}(y',\Sequence{y},y)>0\,,
  \\
  0&\text{else.}
 \end{cases}
\end{equation*}
\end{proof}

\begin{Prop}\label{prop_klump_technical}
If $y,y'\in\SpaceY,\Sequence{y}\in\SpaceY^{k-1}$ with $p_{Y_k,Y_{\IntegerSet{k-1}},Y_0}(y',\Sequence{y},y)>0$ and~\eqref{eq:klump:probabilities} holds, then $\ForAll x\in\LumpingPreimage(y)$:
\begin{equation}\label{eq_klumpproof_pathspresent}
 \alpha_x:=p_{Y_{\IntegerSet{k-1}}|X_0}(\Sequence{y}|x)>0
 \Then
 \beta_x:=p_{Y_k|Y_{\IntegerSet{k-1}},X_0}(y'|\Sequence{y},x)>0\,.
\end{equation}
\end{Prop}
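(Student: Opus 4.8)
The plan is to realise $p_{Y_k|Y_{\IntegerSet{k-1}},Y_0}(y'|\Sequence{y},y)$ as a convex combination of the forward probabilities $\beta_x$ and then to exploit the rigidity that~\eqref{eq:klump:probabilities} imposes on the individual summands. Since $Y_0=\LumpingFunction(X_0)$ is deterministic, the event $\Set{Y_0=y}$ splits into the disjoint events $\Set{X_0=x}$ over $x\in\LumpingPreimage(y)$, and conditioning on $X_0=x$ already fixes $Y_0=y$. First I would expand, with $x$ ranging over $\LumpingPreimage(y)$,
\begin{equation*}
 p_{Y_k|Y_{\IntegerSet{k-1}},Y_0}(y'|\Sequence{y},y)
 =\sum_{x\in\LumpingPreimage(y)}
  \beta_x\,p_{X_0|Y_{\IntegerSet{k-1}},Y_0}(x|\Sequence{y},y)\,,
\end{equation*}
noting that the weights $\lambda_x:=p_{X_0|Y_{\IntegerSet{k-1}},Y_0}(x|\Sequence{y},y)$ are nonnegative, sum to one, and satisfy $\lambda_x>0$ exactly when $\alpha_x>0$ (here I use $\MeasureX(x)>0$ from irreducibility together with the hypothesis $p_{Y_k,Y_{\IntegerSet{k-1}},Y_0}(y',\Sequence{y},y)>0$, which also yields $p_{Y_{\IntegerSet{k-1}},Y_0}(\Sequence{y},y)>0$).

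Next I would extract the dichotomy on the summands. The hypothesis $p_{Y_k,Y_{\IntegerSet{k-1}},Y_0}(y',\Sequence{y},y)>0$ makes $c:=p_{Y_k|Y_{\IntegerSet{k-1}},Y_0}(y'|\Sequence{y},y)>0$. For any $x$ with $\alpha_x>0$, the joint $p_{Y_k,Y_{\IntegerSet{k-1}},X_0}(y',\Sequence{y},x)$ is positive precisely when $\beta_x>0$ (it equals $\beta_x\alpha_x\MeasureX(x)$), in which case~\eqref{eq:klump:probabilities} forces $\beta_x=c$. Hence every relevant summand obeys $\beta_x\in\Set{0,c}$.

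Finally I would close with an averaging argument: a convex combination of numbers in $\Set{0,c}$ can equal $c>0$ only if no strictly positive weight multiplies the value $0$. Concretely, if some $x_0$ with $\alpha_{x_0}>0$ had $\beta_{x_0}=0$, then $\lambda_{x_0}>0$ and $\beta_x\le c$ for all contributing $x$ would give $\sum_x\beta_x\lambda_x\le c(1-\lambda_{x_0})<c$, contradicting the displayed identity. Therefore $\beta_x=c>0$ for every $x\in\LumpingPreimage(y)$ with $\alpha_x>0$, which is exactly~\eqref{eq_klumpproof_pathspresent}. The main obstacle is conceptual rather than computational: property~\eqref{eq:klump:probabilities} only certifies positivity of $\beta_x$ where the full joint is already positive, so upgrading this to positivity on the larger set $\Set{\alpha_x>0}$ cannot be achieved by direct substitution and genuinely requires the convexity of the weighted average together with $c>0$.
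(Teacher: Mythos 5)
Your proposal is correct and takes essentially the same route as the paper's own proof: both decompose $p_{Y_k|Y_{\IntegerSet{k-1}},Y_0}(y'|\Sequence{y},y)$ over the preimage $\LumpingPreimage(y)$ with weights proportional to $\alpha_x\MeasureX(x)$, invoke~\eqref{eq:klump:probabilities} to pin every strictly positive $\beta_x$ to the common value $c:=p_{Y_k|Y_{\IntegerSet{k-1}},Y_0}(y'|\Sequence{y},y)>0$, and use positivity of the invariant measure to rule out any $x$ with $\alpha_x>0$ but $\beta_x=0$. The paper merely phrases your final convex-combination contradiction in unnormalised form, factoring out $c$ and concluding that the ratio $\sum_{x}\Iverson{\beta_x>0}\alpha_x\MeasureX(x)\big/\sum_{x}\alpha_x\MeasureX(x)$ equals $1$.
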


\begin{proof}
This follows from
\begin{align*}
 0 &<p_{Y_k|Y_{\IntegerSet{k-1}},Y_0}(y'|\Sequence{y},y)
 \\&=\frac%
  {p_{Y_k,Y_{\IntegerSet{k-1}},Y_0}(y',\Sequence{y},y)}
  {p_{Y_{\IntegerSet{k-1}},Y_0}(\Sequence{y},y)}
 \\&=\frac{
   \sum_{x\in\LumpingPreimage(y)}
   \beta_x \Iverson{\beta_x>0}
   \alpha_x \MeasureX(x)
  }{
   \sum_{x\in\LumpingPreimage(y)}
   \alpha_x \MeasureX(x)
  }
 \\&=p_{Y_k|Y_{\IntegerSet{k-1}},Y_0}(y'|\Sequence{y},y)
  \frac{
   \sum_{x\in\LumpingPreimage(y)}
   \Iverson{\beta_x>0} \alpha_x \MeasureX(x)
  }{
   \sum_{x\in\LumpingPreimage(y)}
   \alpha_x \MeasureX(x)
  }
 \,,
\end{align*}
where we apply~\eqref{eq:klump:probabilities} in the last equality to see that $\beta_x$ is constant on $\LumpingPreimage(y)$ and factor it out. Dividing both sides we get
\begin{equation*}
 1 = \frac{
   \sum_{x\in\LumpingPreimage(y)}
   \Iverson{\beta_x>0} \alpha_x \MeasureX(x)
  }{
   \sum_{x\in\LumpingPreimage(y)}
   \alpha_x \MeasureX(x)
  }\,.
\end{equation*}
As $\MeasureX$, the invariant measure, is positive, it follows that $\alpha_x>0$ implies $\beta_x>0$.\\
\end{proof}
%===============================================================================
\subsection{g-observability and entropy rate preservation}
\label{sec_gObservability}
%===============================================================================
This section gives a series of examples showing that $g$-observability, as defined in~\cite[Section~3]{Gurvits_Ledoux__MarkovPropertyForAFunctionOfAMarkovChain_ALinearAlgebraApproach}, is independent of entropy rate preservation.

\begin{Exam}[$\Kappa<\infty$ and $g$-observable]\label{exam_rateLoss_gObservable}
Regard the lumping represented by the transition matrix $P$ below, with bars marking the lumping. Let $0<\varepsilon<1/2$.
\begin{equation*}
 P := \left[\begin{array}{cc|c|c}
 0   & 0   & 1-\varepsilon & \varepsilon
 \\
 0   & 0   & \varepsilon   & 1-\varepsilon
 \\\hline
 1/3 & 1/3 & 1/3           & 0
 \\\hline
 1/3 & 1/3 & 0             & 1/3
 \end{array}\right]\,.
\end{equation*}
We have $\Kappa=1$, because of the paths $3-1-3$ and $3-2-3$. Using notation from~\cite[Section~3]{Gurvits_Ledoux__MarkovPropertyForAFunctionOfAMarkovChain_ALinearAlgebraApproach}, we let $\nu_\alpha:=\alpha\mathbb{I}_{1}+(1-\alpha)\mathbb{I}_{2}$. We have
\begin{multline*}
 V_g Q_n \nu_\alpha
 = \alpha V_g Q_n \mathbb{I}_{1} + (1-\alpha)V_g Q_n \mathbb{I}_{2}
 \\
 = \alpha (1-\varepsilon) 2^{-n} + (1-\alpha) \varepsilon 2^{-n}
 = 2^{-n} \left(\alpha(1-2\varepsilon)+\varepsilon \right)\,.
\end{multline*}
The last expression is clearly injective in $\alpha$, whence we are $g$-observable.
\end{Exam}

\begin{Exam}[$\Kappa<\infty$ and $g$-nonobservable]\label{exam_rateLoss_nonGObservable}
Regard the lumping represented by the transition matrix $P$ below, with bars marking the lumping:
\begin{equation*}
 P := \left[\begin{array}{cc|c}
 0   & 0   & 1/2
 \\
 0   & 0   & 1/2
 \\\hline
 1/3 & 1/3 & 1/3
 \end{array}\right]\,.
\end{equation*}
We have $\Kappa=1$, because of the paths $3-1-3$ and $3-2-3$. Not $g$-observable, because you can never recover the starting distribution in $\Set{1,2}$ from $Y$'s trajectory, as all trajectories starting with $Y_0=1$ (i.e. $X_0\in\Set{1,2}$) have $Y_1=2$ (i.e. $X_1=3$).
\end{Exam}

\begin{Exam}[$\Kappa=\infty$ and $g$-nonobservable]\label{exam_ratePreservation_nonGObservable}
Look at figure~\ref{fig:SESXCounterExample} (page~\ref{fig:SESXCounterExample}). It has $\Kappa=\infty$. But if we start in $Y_0=D$, we can not reconstruct if we are in $d_1$ or $d_2$ (as in example~\ref{exam_rateLoss_nonGObservable}).
\end{Exam}

\begin{Exam}[$\Kappa=\infty$ and $g$-observable]\label{exam_ratePreservation_gObservable}
Take an irreducible an aperiodic \NameMarkov{} chain with at least two states. Let $g$ be the identity mapping. Then $\Kappa=\infty$ and we can always reconstruct the starting distribution.
\end{Exam}
%===============================================================================
\subsection{Algorithmic aspects}
\label{sec_algorithmic}
%===============================================================================
This section explains in more detail the algorithmic upper bounds discussed in section~\ref{sec:discussion}.\\

Calculation of $\Kappa$: We have an upper bound of $\Kappa\le \Cardinality{\SpaceX}^2$ from~\eqref{eq:kappaBound}. Hence, there are at most $\Cardinality{\SpaceY}^{\Cardinality{\SpaceX}^2}$ paths of length $\Kappa$ for $Y$. As long as $n\le\Kappa+1$, there are at most $\Cardinality{\SpaceY}$ paths in the realisable preimage of each length-$n$ path in $\SpaceY$. This yields the $1$ in the exponent.\\

Calculation of other quantities: all other quantities need to evaluate some path probabilities under the invariant measure (needing at most $\Cardinality{\SpaceX}^2$ steps). There are at most $\Cardinality{\SpaceX}^n$ paths of length $n$. For $\ClassSingleEntry$, $\ClassSFS{k}$ and~\eqref{eq:klump:boundsequal} we need paths of lengths $2$, $k$ and $(k+1)$ respectively.\\

Better bounds should be attainable for the combinatorial conditions $\Kappa$, $\ClassSingleEntry$ and $\ClassSFS{k}$. This is due to the fact, that we are looking for violations of conditions imposing certain sparsity constraints on the $\TransitionGraph$. Thus, either the check finishes faster or fails with a violation of a constraint. A first flavour of this is in the above comment on the algorithmic bound for $\Kappa$.

%===============================================================================
\subsection{Reversed Processes}
\label{sec_reversed}
%===============================================================================
Equivalent conditions can be given for the transition matrix $\hat{P}$ of the reverse Markov chain $\hat{X}$. In other words, if either $\Lumping$ or $(\hat{P},\LumpingFunction)$ fulfil the conditions, preservation of entropy can be guaranteed.

\begin{Prop}[{\cite[Problem~4.2]{Cover_Thomas__ElementsOfInformationTheory_Ed2__Wiley_2006}}]
\label{prop:reverseRate}
The entropy rate of a stationary process $Z$ and its reverse process $\hat{Z}:=(\hat{Z}_n)_{n\in\IntNum}$, with $\hat{Z}_n:=Z_{-n}$ are the same.
\end{Prop}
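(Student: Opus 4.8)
The plan is to show that the reverse process $\hat{Z}$ is stationary with the same finite-dimensional block distributions (up to index reflection) as $Z$, and then invoke the definition of entropy rate from~\eqref{eq:entropy:rate}, which depends only on these block distributions. The essential observation is that Shannon entropy~\eqref{eq:entropy:base} of a random vector is invariant under any relabeling of its coordinates, since it depends only on the multiset of probabilities $\Proba(Z=z)$, not on the order in which the coordinates are listed.

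First I would record that for each $n$ the block $\hat{Z}_{\IntegerSet{n}}=(\hat{Z}_1,\dotsc,\hat{Z}_n)=(Z_{-1},\dotsc,Z_{-n})$ is the reversal of the block $(Z_{-n},\dotsc,Z_{-1})$. By stationarity of $Z$, the latter has the same law as $Z_{\IntegerSet{n}}=(Z_1,\dotsc,Z_n)$. Hence the distribution of $\hat{Z}_{\IntegerSet{n}}$ is obtained from that of $Z_{\IntegerSet{n}}$ by the coordinate permutation $i\mapsto n+1-i$. Since this permutation is a bijection on $\SpaceZ^n$, it induces a bijection between the atoms of the two laws that preserves probabilities. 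Therefore the sums defining the block entropies coincide:
\begin{equation*}
 \Entropy{\hat{Z}_{\IntegerSet{n}}}=\Entropy{Z_{\IntegerSet{n}}}\,.
\end{equation*}
This same argument, applied to windows of length $n$ starting at arbitrary positions, also shows that $\hat{Z}$ is itself stationary, so that its entropy rate is well-defined via~\eqref{eq:entropy:rate}.

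The conclusion then follows by dividing by $n$ and passing to the limit:
\begin{equation*}
 \EntropyRate{\hat{Z}}
 =\LimN\OneOverN\Entropy{\hat{Z}_{\IntegerSet{n}}}
 =\LimN\OneOverN\Entropy{Z_{\IntegerSet{n}}}
 =\EntropyRate{Z}\,.
\end{equation*}
I do not anticipate a genuine obstacle here; the only point requiring mild care is the bookkeeping that the reversal of a length-$n$ block of $\hat{Z}$ matches a stationarily-shifted length-$n$ block of $Z$, which is a purely combinatorial reindexing. Everything reduces to the permutation-invariance of block entropy together with the stationarity of $Z$.
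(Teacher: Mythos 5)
Your proof is correct, but it takes a genuinely different route from the paper's. You work entirely with the block-entropy form of the entropy rate: since $(\hat{Z}_1,\dotsc,\hat{Z}_n)$ is just a coordinate permutation of $(Z_{-n},\dotsc,Z_{-1})$, and joint entropy is invariant under such relabelings, stationarity of $Z$ gives the exact finite-$n$ identity $\Entropy{\hat{Z}_{\IntegerSet{n}}}=\Entropy{Z_{\IntegerSet{n}}}$, after which dividing by $n$ finishes the argument. The paper instead works with the conditional-entropy form of the rate: it expands $\Entropy{Z_{\IntegerSet{n}}}$ by the chain rule backwards in time, $\Entropy{Z_{\IntegerSet{n}}}=\sum_{i=0}^{n-1}\Entropy{Z_{n-i}|Z_{[n-i+1,n]}}$, uses stationarity to identify each summand with $\Entropy{Z_1|Z_{[2,i+1]}}$, and then a Ces\`{a}ro argument to conclude $\EntropyRate{Z}=\LimN\Entropy{Z_1|Z_{[2,n]}}=\LimN\Entropy{\hat{Z}_n|\hat{Z}_{\IntegerSet{n-1}}}=\EntropyRate{\hat{Z}}$. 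Your version is shorter and more elementary (no chain rule, no Ces\`{a}ro step, and an equality that holds for every finite $n$, not only in the limit); the paper's version yields the additional intermediate fact that the reversed conditional entropies $\Entropy{Z_1|Z_{[2,n]}}$ themselves converge to the common entropy rate, which is the formulation that interfaces with the Markov-order arguments used elsewhere in the paper (e.g.\ proposition~\ref{prop:kMarkovReverse}). One point of bookkeeping common to both proofs: the statement implicitly assumes the stationary process $Z$ is (or has been extended to be) two-sided so that $Z_{-n}$ is defined; your argument, like the paper's, is fine once that extension is granted.
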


\begin{proof}
\begin{align*}
 &\FirstAlign \EntropyRate{Z}
\\&= \LimN \Entropy{Z_n|Z_{[n[}}
\\&= \LimN \OneOverN \Entropy{Z_{\IntegerSet{n}}}
\\&= \LimN \OneOverN \sum_{i=0}^{n-1} \Entropy{Z_{n-i}|Z_{[n-i+1,n]}}
\\&= \LimN \Entropy{Z_1|Z_{[2,n]}}
\\&= \LimN \Entropy{\hat{Z}_n|\hat{Z}_{[n[}}
\\&= \EntropyRate{\hat{Z}}\,.
\end{align*}
\end{proof}

\begin{Def}\label{def:timeReversalOfStationaryMC}
The time reversal of stationary \NameMarkov{} chain is  a stationary \NameMarkov{} chain. If $P$ is the transition matrix of the original \NameMarkov{} chain $X$, then the transition matrix $\hat{P}$ of the reverse \NameMarkov{} chain $\hat{X}$ fulfils~\cite[Def.~5.3.1]{Kemeny_Snell__FiniteMarkovChains__Springer_1976}
\begin{equation}\label{eq:timeReversalOfStationaryMC}
 \hat{P}_{i,j}=\frac{\mu_jP_{j,i}}{\mu_i}\,.
\end{equation}
\end{Def}

\begin{Cor}[to proposition~\ref{prop:reverseRate}]\label{cor:reverseRate}
If a \NameMarkov{} chain $X$ can be lumped without information (rate) loss, then so can the reverse \NameMarkov{} chain $\hat{X}$.
\end{Cor}

\begin{proof}
Since the entropy rate does not change under reversing the process, and since a function $\LumpingFunction$ of the reverse \NameMarkov{} chain $\hat{X}$ is the reverse of the process $Y$, the result follows.
\end{proof}

\begin{Prop}\label{prop:kMarkovReverse}
 If a stationary process $Y$ is $k$-th order Markov, then so is the reverse process $\hat{Y}$. Thus, if a stationary Markov chain $X$ is $k$-lumpable, then so is the reverse chain $\hat{X}$.
\end{Prop}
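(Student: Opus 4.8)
The plan is to recast the $k$-th order \NameMarkov{} property as a time-symmetric conditional independence statement, after which invariance under reversal is immediate.

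First I would extend the stationary process $Y$ to a two-sided process on $\IntNum$ (legitimate by stationarity, as in proposition~\ref{prop:reverseRate}) and show that, for a stationary process, definition~\ref{def:kthOrderMC} is equivalent to the \emph{two-sided} conditional independence
\[
 Y_{(-\infty,\,i-1]} \IndependentOf Y_{[i+k,\,\infty)} \;\big|\; Y_{[i,\,i+k-1]}\,,\qquad\ForAll i\in\IntNum\,,
\]
that is: conditioned on any window of $k$ consecutive coordinates, the strict past and the strict future are independent. Definition~\ref{def:kthOrderMC} supplies only the one-sided version (a single future coordinate given the immediately preceding $k$-window), and upgrading it to this two-sided, infinite form is the local-to-global \NameMarkov{} extension.

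Second, I would exploit that $A\IndependentOf B\mid C$ is symmetric in $A$ and $B$. The reversal $\hat{Y}_n:=Y_{-n}$ carries the strict past before a window to the strict future after it, the strict future to the strict past, and the window $Y_{[i,\,i+k-1]}$ to another block of $k$ consecutive coordinates of $\hat{Y}$. Hence the displayed characterisation is preserved verbatim under reversal, so $\hat{Y}$ satisfies it and is therefore $\ClassKMarkov$. For the second assertion, note $\hat{Y}_n=Y_{-n}=\LumpingFunction(X_{-n})=\LumpingFunction(\hat{X}_n)$, so $\hat{Y}$ is exactly the lumping of the reverse chain $\hat{X}$ (with transition matrix $\hat{P}$ from~\eqref{eq:timeReversalOfStationaryMC}) under $\LumpingFunction$; since $k$-lumpability of $X$ means the stationary $Y$ is $\ClassKMarkov$, the first part yields that $\hat{Y}$ is $\ClassKMarkov$, i.e.\ $(\hat{P},\LumpingFunction)$ is $k$-lumpable.

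The main obstacle is the equivalence asserted in the first step, the local-to-global extension, which I would establish by passing to the $k$-transition (lifted) process $W_n:=Y_{[n,n+k-1]}$ of~\eqref{eq:transitionProcess}: the process $Y$ is $\ClassKMarkov$ exactly when $W$ is a first-order \NameMarkov{} chain, and the block factorisation~\eqref{eq:markov:productFactorisation} already packages the statement that past and future decouple across a conditioning block for first-order chains. One may alternatively argue entirely through $W$: its reversal is again a first-order \NameMarkov{} chain (definition~\ref{def:timeReversalOfStationaryMC}), and the lifted process of $\hat{Y}$ differs from the reversal of $W$ only by a fixed coordinate-reversal of each $k$-tuple state together with a time shift, both of which preserve the first-order \NameMarkov{} property; the only care needed there is this coordinate bookkeeping.
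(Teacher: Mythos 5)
Your proposal is correct, but it takes a genuinely different route from the paper. The paper stays entirely inside its information-theoretic toolkit: by proposition~\ref{prop:ratetoMarkov}, $Y$ being $\ClassKMarkov$ means $\EntropyRate{Y}=\Entropy{Y_k|Y_{\IntegerSet{0,k-1}}}$; writing this as a difference of block entropies and using stationarity to shift one block gives $\Entropy{Y_k|Y_{\IntegerSet{0,k-1}}}=\Entropy{Y_0|Y_{[1,k]}}=\Entropy{\hat{Y}_k|\hat{Y}_{\IntegerSet{0,k-1}}}$, which is an upper bound on $\EntropyRate{\hat{Y}}$; since $\EntropyRate{\hat{Y}}=\EntropyRate{Y}$ by proposition~\ref{prop:reverseRate}, the bound is tight, and proposition~\ref{prop:ratetoMarkov} applied to $\hat{Y}$ concludes. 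That is a three-line computation given machinery already proved, and it matches the paper's theme of characterising Markovianity by tightness of entropic bounds. Your argument is instead purely probabilistic: you characterise $\ClassKMarkov$ by the time-symmetric conditional independence of strict past and strict future given a $k$-window (equivalently, by first-order Markovianity of the lifted process $W_n:=Y_{[n,n+k-1]}$), and then note that this characterisation is manifestly reversal-invariant, up to the coordinate-reversal and time-shift bookkeeping on the lifted chain, which you handle correctly. The price is the local-to-global extension lemma, which you rightly flag as the real work and which the paper never needs; the gain is an argument that makes the time symmetry conceptually transparent and does not rely on finiteness of entropy, so it would survive generalisation beyond finite state spaces, where the entropic characterisation of proposition~\ref{prop:ratetoMarkov} breaks down. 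On the lumpability corollary you and the paper do the same thing: read ``$k$-lumpable'' as the stationary lumped process being $\ClassKMarkov$ and observe that $\hat{Y}_n=\LumpingFunction(\hat{X}_n)$, so the reversed lumped process is the lumping of the reversed chain $(\hat{P},\LumpingFunction)$.
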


\begin{proof}
 We start with showing that $Y\text{ is }\ClassKMarkov \Iff \hat{Y}\sim\ClassKMarkov$. To this end,
\begin{align*}
 &\FirstAlign \EntropyRate{Y}
\\&= \Entropy{Y_k|Y_{[0,k[}}
\\&= \Entropy{Y_{[0,k]}}-\Entropy{Y_{[0,k[}}
\\&= \Entropy{Y_{[0,k]}}-\Entropy{Y_{[k]}}
\\&= \Entropy{Y_0|Y_{[k]}}
\\&\ge \LimN \Entropy{Y_0|Y_{\IntegerSet{n}}}
\\&= \EntropyRate{\hat{Y}}
\end{align*}
which equals $\EntropyRate{Y}$ by Proposition~\ref{prop:reverseRate}. But for $Y\text{ is }\ClassKMarkov$ we need $\Lumping$ to be strongly $k$-lumpable by Definition~\ref{def:klump}. Since $\hat{Y}$ is obtained by lumping $(\hat{P},g)$ the proof follows.
\end{proof}

\begin{Exam}[taken from {\cite[pp.~139]{Kemeny_Snell__FiniteMarkovChains__Springer_1976}}]\label{exam:boundsnotequal}
Consider the following transition matrix, where the lines divide lumped states:
\begin{equation*}
 P := \left[\begin{array}{c|ccc}
 1/4  & 1/16  & 3/16 & 1/2\\
 \hline
 0    & 1/12  & 1/12 & 5/6\\
 0    & 1/12  & 1/12 & 5/6\\
 7/8  & 1/32  & 3/32 & 0
 \end{array}\right]\,.
\end{equation*}
This lumping (and its time-reversal) is weakly $1$-lumpable~\cite[pp.~139]{Kemeny_Snell__FiniteMarkovChains__Springer_1976}. However, we have (with an accuracy of $0.0001$)
\begin{equation*}
 0.5588 = \Entropy{Y_1|X_0}<\EntropyRate{Y}=\Entropy{Y_1|Y_0}= 0.9061
\end{equation*}
and
\begin{equation*}
 0.9048 = \Entropy{Y_0|X_1}<\EntropyRate{\hat{Y}}=\Entropy{Y_0|Y_1}= 0.9061\,,
\end{equation*}
where $\hat{Y}$ is the time-reversed process. Hence, weak $k$-lumpability alone does not imply~\eqref{eq:klump}.
\end{Exam}

%===============================================================================
\subsection{Blackwell's entropy rate expression}
\label{sec_blackwell}
%===============================================================================

Translation of the abstract of~\cite{Blackwell__TheEntropyOfFunctionsOfFiniteStateMarkovChains__CAS_1957} into present notation:\\

Let $\Set{X_n}_{n\in\IntNum}$ be a stationary ergodic finite-state \NameMarkov{} process with state space $\SpaceX$ and transition matrix $(m(x\to x'))_{x,x'\in\SpaceX}$. Let $g$ be a function defined on $\SpaceX$ with values in $\SpaceY$, and let $Y_n:=g(X_n)$. Then $\Set{Y_n}_{n\in\IntNum}$ is an ergodic stationary process, the general formula for the entropy of such processes being $H=-\Expect(\ld\Proba(Y_1|Y_0,Y_{-1},\dotsc))$. Let $\Set{A_{n,x}}_{n\in\IntNum,x\in\SpaceX}$, where $A_{n,x}:=\Proba(X_n=x|Y_n,Y_{n-1},\dotsc)$. It is shown that $\Set{A_{n,x}}_{n\in\IntNum,x\in\SpaceX}$ is a stationary \NameMarkov{} process with stationary distribution $Q$, where $Q$ is a distribution on vectors $(w_x)_{x\in\SpaceX}$, $\sum_{x\in\SpaceX} w_x=1$, $w_x\ge 0$, satisfying
\begin{equation}\label{eq_blackwell_measure}
 Q(E)=\sum_{y} \int_{\Set{f_y(w)\in E}} r_y(w) dQ(w)\,,
\end{equation}
 where $r_y(w):=\sum_{x}\sum_{x'\in\LumpingPreimage(y)} w_x m(x\to x')$, and $f_y$ is a vector function of $w$ whose $x$-th component is equal to $0$ if $\LumpingFunction(x)\not=y$, and equal to $\sum_{x'\in\LumpingPreimage(y)} w_x m(x\to x') / r_y(w)$ if $\Lumping(x)=y$. Then the entropy of $\Set{Y_n}_{n\in\IntNum}$ is given by
\begin{equation*}\label{eq_blackwell_entropy}
 H=-\int_{w} \sum_{y\in\SpaceY} r_y(w)\ld r_y(w) dQ(w)\,.
\end{equation*}
Under additional conditions it is shown that $Q$ is the only probability distribution that is a solution of~\eqref{eq_blackwell_measure} and that if $Q$ is continuous it is in a certain sense singular.\\

The expression~\eqref{eq_blackwell_entropy} involves an invariant measure on the simplex over $\SpaceX$. Furthermore, it is the invariant measure of a \NameMarkov{} chain involving the limit expressions $A_{n,x}$. This seems impossible to calculate in practice. If the processes live on time $\NatNumZero$ instead of $\IntNum$, then an equivalent of $A$ is difficult to define; $\Proba(X_n=x|Y_n,\dotsc,Y_0)$ as an expected value might not even be time-homogeneous any more.

\bibliographystyle{plain}
\bibliography{ref}

\end{document}